\documentclass[a4paper,UKenglish,cleveref, autoref, thm-restate]{lipics-v2021}
\hideLIPIcs  %

\graphicspath{{./figs/}}%

\nolinenumbers

\bibliographystyle{plainurl}%

\usepackage{todonotes}
\usepackage{complexity}
\usepackage{tcolorbox}
\usepackage{xspace}

\tcbset{%
  width=\textwidth,
  halign=justify,
  center,
  colback=white,
  arc=0pt,
  outer arc=0pt,
  boxrule=1pt,
  colbacktitle=white,
  coltitle=black,
  titlerule=0pt,
  left=5pt,
  right=1pt,
  top=1pt,
  bottom=1pt
}

\crefname{observation}{Observation}{Observations}
\Crefname{observation}{Observation}{Observations}

\newcommand{\crn}{\operatorname{cr}}
\newcommand{\ppdcrn}{\operatorname{ppdcr}}
\newcommand{\crnsl}{\operatorname{\overline{cr}}}
\newcommand{\ftpcrn}[1]{\operatorname{cr}_{#1}}
\newcommand{\tw}{\operatorname{tw}}

\newcommand{\ceil}[1]{\left\lceil #1 \right\rceil}

\newcommand{\mypar}[1]{{\smallskip\sffamily\bfseries #1.\xspace}}

\AtBeginEnvironment{tcolorbox}{\small}

\title{Computing crossing numbers with topological and geometric restrictions} %

\author{Thekla Hamm}{TU Eindhoven, The Netherlands}{t.l.s.hamm@tue.nl}{https://orcid.org/0000-0002-4595-9982}{acknowledges support by the Austrian Science Fund FWF (Project J4651-N).}

\author{Fabian Klute}{Universitat Politècnica de Catalunya, Spain}{fabian.klute@upc.edu}{https://orcid.org/0000-0002-7791-3604}{F. K. is supported by María Zambrano grant 2022UPC-MZC-94041 funded by the Spanish Ministry of Universities and the European Union (NextGenerationEU) and by grant PID2023-150725NB-I00 funded by MICIU/AEI/10.13039/501100011033.}

\author{Irene Parada}{Universitat Politècnica de Catalunya, Spain}{irene.parada@upc.edu}{https://orcid.org/0000-0003-3147-0083}{I. P. is a Serra H\'unter Fellow and partially supported by grant PID2023-150725NB-I00 funded by MICIU/AEI/10.13039/501100011033.}

\authorrunning{T. Hamm, F. Klute and I. Parada} %

\Copyright{Thekla Hamm, Fabian Klute, and Irene Parada} %

\ccsdesc[500]{Theory of computation~Design and analysis of algorithms}

\keywords{crossing numbers, pseudolinear drawings, geometric graphs, parameterized algorithms, graph drawing} %

\category{} %

\relatedversion{} %

\acknowledgements{This work started during the 12th Crossing Numbers Workshop in Strobl, Austria. 
We thank the organizers and participants for an inspiring atmosphere. 
In particular, we thank Bruce Richter and Birgit Vogtenhuber for early discussions on the problem. 
}%

\begin{document}

\maketitle

\begin{abstract}
Computing the crossing number of a graph is one of the most classical problems in computational geometry.
Both it and numerous variations of the problem have been studied, and overcoming their frequent computational difficulty is an active area of research.
Particularly recently, there has been increased effort to show and understand the parameterized tractability of various crossing number variants.
While many results in this direction use a similar approach, a general framework remains elusive.
We suggest such a framework that generalizes important previous results, and can even be used to show the tractability of deciding crossing number variants for which this was stated as an open problem in previous literature.
Our framework targets variants that prescribe a partial predrawing and some kind of topological restrictions on crossings.
Additionally, to provide evidence for the non-generalizability of previous approaches for the partially crossing number problem to allow for geometric restrictions, we show a new more constrained hardness result for partially predrawn rectilinear crossing number.
In particular, we show \W-hardness of deciding \textsc{Straight-Line Planarity Extension} parameterized by the number of missing edges.
\end{abstract}

\clearpage

\section{Introduction}
The \emph{crossing number} $\crn(G)$ of a graph $G$, i.e., 
the smallest number $\crn(G)$ such that there exists a drawing
of it with at most $\crn(G)$ intersection between the interiors of the edges,
is among the most widely studied graph parameters in graph algorithms and graph theory.
The extensive survey by Schaefer~\cite{Schaefer2013} on the subject collects 
almost 800 references.

The classical computational complexity of
deciding whether a given graph $G$ has crossing number $k$
is well-understood.
The problem was already shown to be \NP-complete by Garey and Johnson~\cite{garey1983crossing}, and even \APX-hard~\cite{DBLP:journals/dcg/Cabello13} in general.
Since then, many variants and restrictions of the problem have been studied and, most of the time, shown to be \NP-hard as well~\cite{DBLP:journals/siamcomp/CabelloM13,DBLP:journals/jct/Hlineny06a,DBLP:journals/algorithmica/PelsmajerSS11}.
However, when we consider the complexity depending on \(k\) positive algorithmic results are possible:
Graphs with crossing number zero, i.e.\ \emph{planar graphs}, are long known to be recognizable in polynomial time, see for example~\cite{DBLP:journals/jacm/HopcroftT74}.
Beyond this, it was shown~\cite{DBLP:journals/jcss/Grohe04,DBLP:conf/stoc/KawarabayashiR07} that one can decide in time \(f(k)|V(G)|^{\mathcal{O}(1)}\) for some computable function \(f\) whether the crossing number of a graph \(G\) is at most \(k\), i.e.\ deciding the crossing number of a graph is in \FPT\ with respect to its natural parameterization.
Numerous results on crossing number variants can be obtained using the same approach, with variant-specific adaptations~\cite{DBLP:journals/jgaa/BannisterE18,DBLP:conf/icalp/GanianHKPV21,DBLP:conf/compgeom/HammH22,DBLP:journals/algorithmica/PelsmajerSS11}.
Motivated by this, very recently M\"unch and Rutter~\cite{munch2024parameterized} suggested a framework abstracting this.
It targets crossing number variants which prohibit drawings in which edges with shared endpoints cross in a certain combinatorial way.

The aforementioned results use Courcelle's theorem which results in a high parameter-dependence of the time complexity of the implied algorithms.
Recently, Colin de Verdière and Hliněný published a manuscript~\cite{DBLP:journals/corr/abs-2410-00206}
also obtaining \FPT-algorithms by the natural parameterization for several crossing number variants while avoiding the use of Courcelle's theorem.
Instead, they reduce to the problem of embedding graphs into 2-complexes, a known approach to solve the crossing number problem~\cite{DBLP:conf/esa/VerdiereM21}.

None of the previous approaches can distinguish the topological behavior of crossings in relation to each other.
Indeed, 
allowing for forbidding configurations distinguishing a certain type of topological behavior of crossings was explicitly left as an open problem in~\cite{munch2024parameterized}.
For example, computing the crossing number when requiring the drawing to be 
pseudolinear requires such topological properties.
A drawing of a graph is called \emph{pseudolinear} if 
the edges can be extended to a pseudoline arrangement.\footnote{Informally, a pseudoline arrangement is a set of simple curves in the plane, each of which extends on both sides to infinity, which pairwise cross in a unique point.}
Deciding if a given drawing is pseudolinear can be done
by checking for a set of obstructions in the drawing~\cite{DBLP:journals/jocg/ArroyoBR21}.
These obstructions inherently contain information about the topological behavior of the crossings, e.g., a cycle of crossings leaving a certain set of vertices on one side, compare \Cref{fig:plobstructions}.

\begin{figure}
    \centering
    \includegraphics[page=1]{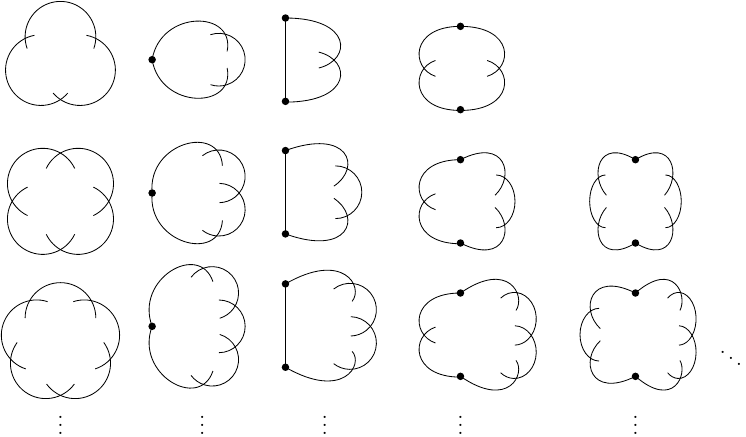}
    \caption{The obstructions for pseudolinear drawing, figure based on \cite[Figure~2]{DBLP:journals/jocg/ArroyoBR21}.}%
    \label{fig:plobstructions}
\end{figure}

An even more ambitious goal would be to also capture geometric restrictions on the drawing.
One of the most fundamental geometric restrictions for graph drawings requires straight-line edge drawings.
This gives rise to the \emph{rectilinear crossing number} $\crnsl(G)$.
Computing it is known to be an $\exists\mathbb R$-complete~\cite{DBLP:journals/dcg/Bienstock91} problem,
lending credence to the idea that it is algorithmically harder.
Another hint in this direction is that extending a given straight-line 
plane partial drawing of a graph to a straight-line plane drawing of the whole graph is \NP-hard~\cite{DBLP:journals/ijfcs/Patrignani06}, 
while the combinatorial setting is linear-time solvable~\cite{DBLP:journals/talg/AngeliniBFJKPR15}.

Besides considering the solution value as a parameter, the perspective of \emph{drawing extension} allows for the study of the complexity of drawing problems when a large part of the drawing is \emph{predrawn} via parameterizing by the size of the non-predrawn part of the instance.
This has led to algorithmic results for several classically hard drawing problems, e.g.,~\cite{DBLP:conf/mfcs/EibenGHKN20,DBLP:conf/icalp/EibenGHKN20,DBLP:conf/icalp/GanianHKPV21,DBLP:conf/compgeom/HammH22}.

\mypar{Contribution} We present a framework to compute the 
partially predrawn crossing number in \FPT-time
under a wide array of drawing restrictions.
For this we develop the notion of \emph{topological crossing patterns} and
show that we can decide the partially predrawn crossing number of a graph under 
the drawing restrictions prohibiting small and few patterns in \FPT-time
parameterized by the resulting crossing number variant.
Crucially, our patterns are also able to capture topological aspects of the forbidden structures.
This makes our framework more general than the one proposed by M\"unch and Rutter~\cite{munch2024parameterized} and
the usage of forbidden patterns makes it a more directly applicable toolbox than
the unified approach by Colin de Verdière and Hliněný~\cite{DBLP:journals/corr/abs-2410-00206}
which requires the reduction of the problem at hand into a less intuitive problem.
Moreover, it seems infeasible to apply their result to the setting where a graph can be accompanied by a predrawing of a subgraph that is to be respected.

We demonstrate the strength of our framework by showing 
that computing the crossing number of so-called fan-planar graphs is in \FPT,
resolving and explicit open problem from~\cite{munch2024parameterized}.
Moreover, we show how to compute the partially predrawn pseudolinear crossing number in \FPT-time
when parameterizing by the number of new crossings which requires us to handle 
forbidden patterns whose concrete realizations in the drawing might be of unbounded size.

Entering the world of straight-line drawings we sharpen the complexity-theoretic bounds
known for different types of partially predrawn crossing number problems.
We prove that deciding the partially predrawn rectilinear crossing number 
parameterized by the number of edges that are to be inserted is $W[1]$-hard,
even when the partial and the final drawing are plane.
This is in stark contrast to other extension problems,
that are usually in \FPT\ parameterized by the number of to-be-inserted edges~\cite{DBLP:journals/jgaa/ChimaniH23,DBLP:conf/icalp/EibenGHKN20,DBLP:conf/icalp/GanianHKPV21}.

\mypar{Algorithm overview}
For our algorithmic framework, we follow the same general approach that was used in the first 
\FPT-result for deciding the crossing number, namely the unrestricted one.
This approach works in two steps:
Reducing to the case of bounded treewidth, and then expressing the existence of a drawing witnessing the crossing number value at most \(k\) as an MSO-formula whose length is bounded in \(k\) to which we can then apply Courcelle's theorem.

In more detail, in the first step we need to decide that we have a no-instance, bounded treewidth, or a way to reduce the graph.
One subtlety is that in the case that we reduce the graph, we do so in a slightly different way than previous works~\cite{DBLP:journals/jcss/Grohe04,DBLP:conf/compgeom/HammH22,munch2024parameterized}; specifically, by removing an edge rather than contracting anything.
Contracting may lead to new connectivity, and hence is not robust for forbidding topological crossing patterns.
In the framework proposed by M\"unch and Rutter this lead to necessitating the requirement that in their forbidden crossing patterns, non-crossing vertices need at least one crossing neighbor.

In the second step, we use the fact that we now can assume bounded treewidth to apply Courcelle's theorem.
For this we need to express the existence of a targeted drawing in a crossing-number-bounded-length MSO-formula.
Two things have already been done frequently in literature:
Expressing the existence of a drawing with at most \(k\) crossings in MSO, and forbidding the occurrence of a small size of small subgraphs or subdivided subgraphs in a host graph.
However, in our result, the fact that we want to speak about the topological behavior of such subgraphs adds a challenging new layer.
We overcome this difficulty by describing the choices relevant for the existence of an embedding of the prospective planarized graph via existentially quantified variables.
For this it is crucial that we constrain the forbidden patterns that we allow in our algorithmic framework in a way that the number of relevant choices is bounded in our parameter.
In particular, relevant choices will only be relative to a small number of vertices in the prospective planarization. 
Doing this allows us to speak about the embedding choices when checking for the existence of subgraphs which under these choices have a certain topological behavior.

\mypar{Outline}
We introduce necessary notation and concepts in \Cref{sec:prelims}.
In \Cref{sec:problem} we define the notion of topological crossing patterns and
present the main algorithmic result of the paper and 
its implications.
\Cref{sec:bd-tw,sec:algorithm} are dedicated to the
proof of our \FPT-time algorithm to solve the 
partially predrawn crossing number with forbidden topological crossing patterns.
\Cref{sec:hardness} contains our hardness result.
We conclude in \Cref{sec:conclusion} with open problems.

\section{Preliminaries}
\label{sec:prelims}
We use standard notation from graph theory for undirected graphs~\cite{DBLP:books/daglib/0030488}.
We also assume familiarity with basic concepts from parameterized complexity theory~\cite{DBLP:books/sp/CyganFKLMPPS15,DBLP:series/txcs/DowneyF13} and
graph drawing as it relates to crossing numbers~\cite{Schaefer2013}.
Specifically, we assume that the reader is familiar with Courcelle's theorem and its applications~\cite{DBLP:journals/iandc/Courcelle90,Downey2013} and $W[1]$-hardness reductions~\cite{DBLP:series/txcs/DowneyF13}.

For a graph $G$ its vertices are denoted with $V(G)$ and its edges with $E(G)$.
A drawing $\mathcal G$ of $G$ is a mapping of each vertex $v\in V(G)$ to distinct points $\mathcal G(v)$ in $\mathbb R^2$
and of each edge $e = uv \in E(G)$ with vertices $u,v \in V(G)$ to a simple curve $\mathcal G(e)$ that starts at $\mathcal G(u)$ and ends at $\mathcal G(v)$
points corresponding to its incident vertices.
When clear from context we identify vertices and edges with their respective drawings.
If not otherwise specified we assume that in a drawing of a graph
two edges share at most one point, 
if two edges share an interior point they cross properly in that point (not tangentially),
and no more than two edges cross in any point apart from vertices.
The \emph{planarization} $\mathcal G^\times$ of a drawing $\mathcal G$ of a graph $G$ is obtained 
from $\mathcal G$ by replacing each crossing with a vertex and every then subdivided curve by an edge.

We define a \emph{partial drawing} of a graph $G$ as the drawing of any subgraph $H$ of $G$.
Then, a \emph{partially predrawn} graph is just the tuple $(G,\Gamma)$ where
$\Gamma$ is a drawing of some subgraph of $G$.
Given a partially predrawn graph \((G,\Gamma)\), a drawing of \(G\) \emph{extends} \(\Gamma\) if it is `equivalent' to \(\Gamma\) when restricted to the graph drawn by \(\Gamma\).
There are different ways to interpret the equivalence of drawings.
For us two ways are relevant:
\emph{topological equivalence} -- where all rotations of edge drawings around vertices in the planarizations of both drawings have to be equal and the same sets of vertices have to be face boundaries in the planarizations of both drawings --
and \emph{geometric equivalence} -- where the actual embedding functions of vertices to points and edges to curves in \(\mathbb{R}^2\) have to be equal.

For \(i \in \mathbb{N}\), we denote by \(H_i\) the \emph{hexagonal grid} of radius \(i\).
A \emph{topological embedding} \(h : H_i \to G\) injectively maps each vertex of \(H_i\) to a vertex of \(G\) and the edges of \(H_i\) to internally vertex-disjoint paths in \(G\) between the \(h\)-images of their endpoints.
The existence of topological embeddings of (hexagonal) grids into a graph is related to its treewidth by what are commonly called grid minor or excluded grid theorems. We use the following variant.

\begin{theorem}[\cite{DBLP:journals/jct/RobertsonS86}]
\label{thm:grid-minor-thm}
    Let \(s \geq 1\).
    There is an \(w \geq 1\) that only depends on \(s\) and an algorithm that given an input graph \(G\) correctly outputs \(\tw(G) \leq w\) or computes a topological embedding \(h : H_s \to G\) in time \(f(s)|V(G)|\) for some computable function \(f\).
\end{theorem}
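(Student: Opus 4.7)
The plan is to establish the dichotomy by combining a linear-time treewidth approximation with the excluded grid theorem, and then to convert the square-grid minor that one extracts into a topological embedding of the hexagonal grid $H_s$. Fix $w = w(s)$ so that, by the excluded grid theorem, every graph of treewidth exceeding $w$ contains an $r \times r$-square-grid minor, for some $r = r(s)$ large enough that $H_s$ can be realized topologically inside an $r \times r$ grid.

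First I would invoke a linear-time constant-factor treewidth approximation (for instance Bodlaender's algorithm for bounded treewidth, or the more recent single-exponential approximation of Bodlaender--Drange--Dregi--Fomin--Lokshtanov--Pilipczuk). In time $f(w)\cdot|V(G)|$ this either returns a tree decomposition certifying $\tw(G) \leq O(w)$, in which case we output the bounded-treewidth branch with the appropriate absolute constant, or certifies that $\tw(G) > w$. In the latter case I would feed the high-treewidth witness into an algorithmic version of the excluded grid theorem to produce an $r \times r$-grid minor of $G$ in time $g(s)\cdot|V(G)|$.

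The last step is to promote this minor into a topological embedding $h : H_s \to G$. Two observations make this routine. First, since $\Delta(H_s) = 3$, every minor of $G$ isomorphic to $H_s$ in fact yields a topological $H_s$-embedding: inside each branch set one picks a branch vertex and routes three internally vertex-disjoint paths from it to the three attachment vertices on the boundary, which is always possible in a connected subgraph containing those attachments. Second, the hexagonal grid $H_s$ is itself a topological minor of a square grid of side length $\Theta(s)$, by an explicit routing of the hexagonal adjacencies along bounded-length grid paths. Composing these two reductions turns the $r \times r$ grid minor into the desired topological embedding of $H_s$ into $G$.

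The main obstacle is the bookkeeping needed to keep the total running time at $f(s)\cdot|V(G)|$ rather than a higher polynomial: both the treewidth step and the grid-minor extraction must be carried out in linear time with only the parameter governing the multiplicative constant. This requires using modern strengthenings of the original Robertson--Seymour framework rather than its first proof, but all the necessary components are by now standard, and the subsequent conversion of a grid minor into a subcubic topological embedding introduces no further dependence on $|V(G)|$.
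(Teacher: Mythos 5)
The paper does not prove this statement at all: it is imported verbatim as a known result, attributed to Robertson and Seymour, and used as a black box in Section 4 (via \Cref{thm:flat-grid} and \Cref{lem:bd-tw}). So there is no in-paper proof to compare against; what can be assessed is whether your reconstruction is a sound derivation of the cited fact.

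It is. Your pipeline --- (i) run a linear-time (for fixed parameter) treewidth algorithm to either certify $\tw(G)\leq w$ or conclude $\tw(G)$ exceeds the excluded-grid threshold, (ii) constructively extract an $r\times r$ square-grid minor, (iii) upgrade that minor to a topological embedding of $H_s$ --- is the standard route, and step (iii) is handled correctly: since $\Delta(H_s)=3$, a minor model of $H_s$ yields a topological embedding (each branch set contains a spider with at most one branch vertex connecting the at most three attachment points), and $H_s$ is itself a minor of a $\Theta(s)\times\Theta(s)$ square grid, so the two reductions compose. This also explains why the theorem is naturally stated for hexagonal rather than square grids: the square grid has degree-4 vertices, so its minor models do not automatically give topological embeddings, which is exactly why the paper (following Robertson--Seymour and the crossing-number literature descending from Grohe and Kawarabayashi--Reed) works with $H_s$. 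The one point you rightly flag as delicate is the running time: the original Robertson--Seymour excluded grid theorem is not stated algorithmically, and making both the treewidth test and the grid extraction run in $f(s)\cdot|V(G)|$ requires the later constructive machinery (Bodlaender-style linear-time decomposition plus a constructive excluded-grid argument that extracts the model from the failure witness). You name the right tools and correctly identify that only the multiplicative constant may depend on $s$, so I consider the gap you acknowledge a matter of citation rather than of substance.
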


In the context of crossing numbers, it has been important to find not only topological embeddings of grids but topological embeddings of grids that together with everything attached to them are planar.
Such grids are \emph{flat}.
While we will not be able to use known results on the existence of flat topological embeddings of grids because we need to adapt them for the partially predrawn setting, it is useful to already introduce previous terminology that we will fall back on later in \Cref{sec:bd-tw} along with some notation.
We denote by \(h(H_i)\) the graph given by \((h(V(H_i)) \cup \bigcup_{e \in E(H_i)}V(h(H_i)),\bigcup_{e \in E(H_i)}E(h(H_i)))\)
A \emph{\(h(H_i)\)-component} of \(G\) is a connected component of \(G - h(H_i)\) together with all edges between that component and \(h(H_i)\).
A \(h(H_i)\)-component is \emph{proper} if it is connected to at least one vertex of \(h(H_i)\).
We will use \(h(H_i)^+\) to denote the union of \(h(H_i)\) with all its proper components.

\subsection{Specific Problem Definitions}
We will apply our algorithmic framework to two specific crossing number problems which can be viewed as instantiations of the following problem with different drawing styles \(\Sigma\).

\begin{tcolorbox}
{{\sc Partially Predrawn \(\Sigma\) Crossing Number}\hfill \textbf{Parameter:} \(k\)}

\noindent\textbf{Input:} A \(\Sigma\)  partially predrawn graph \((G,\Gamma)\) where \(G\) is conntected and an integer \(k\).\\
\noindent\textbf{Question:} Is there a topological drawing extension of \(\Gamma\) to a \(\Sigma\) drawing \(\mathcal{G}\) of \(G\) such that \(\crn(\mathcal{G}) - \crn(\Gamma) \leq k\)?
\end{tcolorbox}

Constrained to instances in which \(\Gamma=\emptyset\), we simply speak of \textsc{\(\Sigma\) Crossing Number}.
In the special case that \(\Sigma\) places no restriction on the drawing style, we are left with the \textsc{Partially Predrawn Crossing Number} problem, for which we call the smallest \(k\) for which a partially predrawn graph \((G,\Gamma)\) is a yes-instance, the \emph{partially predrawn crossing number} of \((G,\Gamma)\) and denote it as \(\ppdcrn(G,\Gamma)\).

We will consider \emph{fan-planarity}
and \emph{pseudolinearity} as drawing style requirements prescribed by \(\Sigma\).
Both of these have known characterizations through forbidden subdrawings (\Cref{fig:fanobstructions,fig:plobstructions}).

\begin{figure}
    \centering
    \includegraphics[page=3]{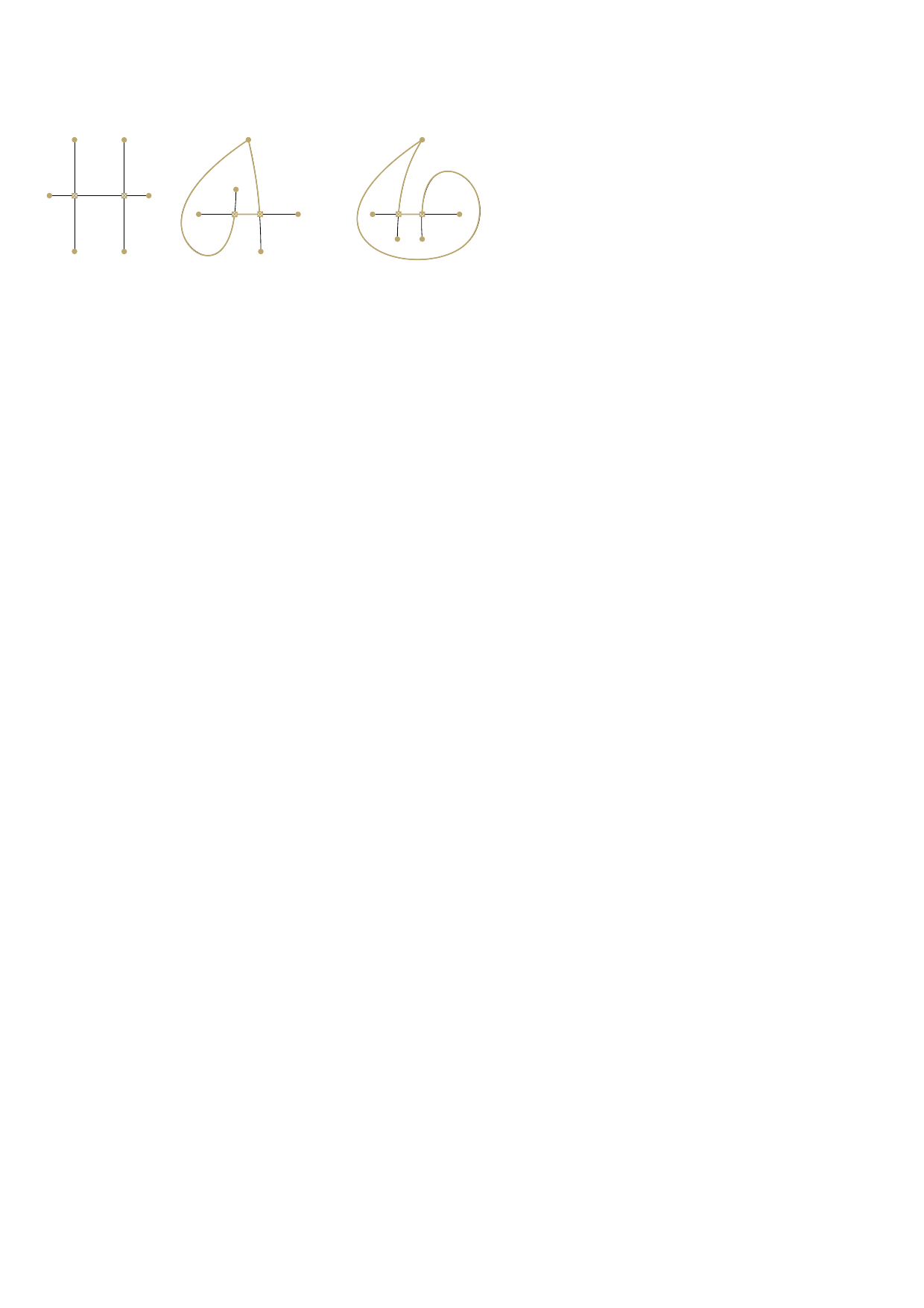}
    \caption{The two obstructions characterizing fan-planar drawings.}
    \label{fig:fanobstructions}
\end{figure}
We consider the original definition of fan-planarity which 
characterizes the class of fan-planar graphs as those that admit
a drawing not containing the obstructions shown in \Cref{fig:fanobstructions}~\cite{Kaufmann2022}.\footnote{In the literature this class is also called weak fan-planar graphs~\cite{DBLP:conf/gd/CheongFKPS23}.}
That is, no drawing of a subgraph can be topologically equivalent to any of them.
Moreover, every fan-planar drawing of a graph with a minimum number of crossings can be assumed to be simple~\cite{DBLP:journals/jgaa/KlemzKRS23}.
Hence, \textsc{Fan-Planar Crossing Number} can equivalently be stated as asking for the existence of a topological drawing extension of \(\Gamma\) to \(G\) with at most \(k\) new crossings and none of the forbidden configurations in \Cref{fig:fanobstructions}.

\begin{tcolorbox}
{{\sc Fan-Planar Crossing Number}\hfill \textbf{Parameter:} \(k\)}

\noindent\textbf{Input:} A connected graph \(G\) and an integer \(k\).\\
\noindent\textbf{Question:} Is there a fan-planar drawing \(\mathcal{G}\) of \(G\) such that \(\crn(\mathcal{G}) \leq k\)?
\end{tcolorbox}

It is known that pseudolinear drawings can be characterized by the absence of a family of forbidden patterns~\cite{DBLP:journals/jocg/ArroyoBR21}, i.e.\ no drawing of a subgraph can be topologically equivalent to any of the obstructions shown in \Cref{fig:plobstructions}.
Hence, \textsc{Partially Predrawn Pseudolinear Crossing Number} can equivalently be stated as asking for the existence of a topological drawing extension of \(\Gamma\) to \(G\) with at most \(k\) new crossings and none of the above forbidden patterns.

\begin{tcolorbox}
{{\sc Partially Predrawn Pseudolinear Crossing Number}\hfill \textbf{Parameter:} \(k\)}

\noindent\textbf{Input:} A pseudolinearly partially predrawn graph \((G,\Gamma)\) where \(G\) is conntected and an integer \(k\).\\
\noindent\textbf{Question:} Is there a topological drawing extension of \(\Gamma\) to a pseudolinear drawing \(\mathcal{G}\) of \(G\) such that \(\crn(\mathcal{G}) - \crn(\Gamma) \leq k\)?
\end{tcolorbox}

In geometric settings, geometric rather than topological drawing extensions are more natural.
In view of the known \NP-hardness of deciding whether a geometric straight-line extension without crossings exists~\cite{DBLP:journals/ijfcs/Patrignani06}, it becomes important to consider more restrictive parameters than the solution value.
Specifically, we will consider the following parameterized problem.\footnote{Straight-line drawings are sometimes called \emph{rectilinear} drawings or \emph{geometric graphs} in the literature. However, the term \emph{rectilinear drawings} sometimes refers to orthogonal drawings in the area of Graph Drawing. To avoid confusion, we call the drawings straight-line. But since the \emph{rectilinear crossing number} problem has a long history under that name, we name our problems following this terminology.}

\begin{tcolorbox}
{{\sc Partially Predrawn Rectilinear Crossing Number}\hfill \textbf{Parameter:} \(|E(G) - E(\Gamma)|\)}

\noindent\textbf{Input:} A straight-line partially predrawn graph \((G,\Gamma)\) where \(G\) is connected and an integer \(k\).\\
\noindent\textbf{Question:} Is there a geometric drawing extension of \(\Gamma\) to a straight-line drawing \(\mathcal{G}\) of \(G\) such that \(\crn(\mathcal{G}) - \crn(\Gamma) \leq k\)?
\end{tcolorbox}

We will not apply our algorithmic framework to \textsc{Partially Predrawn Rectilinear Crossing Number}, but show hardness instead.
For the presentation of our reduction we are even able to avoid any crossing,
albeit allowing disconnected instances.

\begin{tcolorbox}
{{\sc Straight-Line Planarity Extension} \hfill \textbf{Parameter:} \(|E(G) - E(\Gamma)|\)}

\noindent\textbf{Input:} A plane straight-line partially predrawn graph \((G,\Gamma)\).\\
\noindent\textbf{Question:} Is there a geometric drawing extension of \(\Gamma\) to a plane straight-line drawing of \(G\)?
\end{tcolorbox}

\section{Forbidding Topological Crossing Patterns}
\label{sec:problem} %

Because in general we consider drawing extension problems, our patterns will need to distinguish each type of vertex (real, crossing or of a specific color) also based on whether it is predrawn in the instance or not.
Moreover, to allow speaking about topological aspects of forbidden crossing patterns, we allow patterns to be partially predrawn themselves:
A \emph{topological crossing pattern} is a partially predrawn planar graph \((P,\Pi)\) such that
\begin{align*}
    V(P) = V_R \, \dot{\cup} \, V_C = V_A \, \dot{\cup} \, V_\Phi
\text{ and } E(P) = E_R \, \dot{\cup} \, E_P = E_A \, \dot{\cup} \, E_\Phi\end{align*}
possibly with a coloring \(c\) of \(V_R\) and \(E_P\).

Here the partitions of \(V(P)\) and \(E(P)\) are into original -- \(V_R\) -- and planarized crossing vertices -- \(V_C\), into abstract non-embedded -- \(V_A\) -- and instance-predrawn vertices -- \(V_\Phi\) (these do not necessarily have to be predrawn in the pattern as well) -- and into original -- \(E_R\) -- and contracted edges -- \(E_P\) -- and into abstract non-embedded -- \(E_A\) -- and predrawn edges -- \(E_\Phi\), respectively.
The size of a pattern is its number of vertices.
We require that both endpoints of any edge in \(E_\Phi\) are in \(V_\Phi\) (this corresponds to the fact that endpoints of instance-predrawn edges need to be predrawn).
Justifying the term `crossing pattern' we also require that for each connected component \(Q\) of \(P\), \(V(Q) \cap V_C \neq \emptyset\).
Similarly, we require that every connected component of the graph drawn by \(\Pi\) contains a vertex of \(N_P[V_C]\), i.e.\ the predrawing of a pattern always speaks about how (embedded structures containing) crossings or endpoints of crossing edges are separated by the drawings of other parts of a pattern.
We also enforce some additional restrictions, whose importance 
we point out where they become relevant in the proof of our algorithmic framework.

Our first additional restriction, the \emph{side specification constraint},
requires that the maximum vertex degree in
\((\{v \in V(\Pi) \mid \exists e \in E_A \ v \in e\},E(\Pi))\)
is at most \(2\) (this still allows us to capture many topological constraints on crossings).
Further, we constrain all endpoints of edges in \(E_P\) to be in \(V_C \cap V_A\).
This technical assumption, denoted as the \emph{crossing locality assumption}, will be necessary for our arguments to reduce to bounded treewidth in \Cref{sec:bd-tw}.

A \emph{realization} \(\tilde{P}\) arises from a drawing extension of \((P,\Pi)\) by replacing the edges in \(E_P\) by pairwise disjoint paths of arbitrary length \(\geq 1\) (possibly single edges) such that the rotation of edges around each vertex \(v \in V(P) \cap V(\tilde{P})\) is equal in the embedding of \(P\) and \(\tilde{P}\), if in the latter edges to subdivision vertices (i.e.\ vertices in \(V(\tilde{P}) \setminus V(P)\)) are replaced by the endpoint of the corresponding edge in \(E_P\) that is not \(v\).
If \(e \in E_P\) has color \(c(e)\) we let the color of all subdivision vertices for \(e\) also be \(c(e)\).
Note that the size of a realization of a pattern can be arbitrarily large compared to the size of the pattern; indeed this is necessary for applications like \textsc{Partially Predrawn Pseudolinear Crossing Number} (\Cref{cor:pseudolinear}).

\(P\) \emph{occurs} in a drawing extension \(\mathcal{G}\) of a possibly vertex-colored partially predrawn graph \((G,\Gamma)\) if \(\mathcal{G}^\times\) contains an embedded graph that is isomorphic to some realization \(\tilde{P}\) of \(P\) by an isomorphism that
(i) only maps planarized crossings of \(\mathcal{G}\) to vertices in \(V_C\) and only maps vertices in \(V(G)\) to vertices in \(V(\tilde{P}) \setminus V_C\) and
(ii) only maps vertices from \(V(\Gamma^\times)\) to \(V_\Phi\) and edges from \(E(\Gamma^\times)\) to \(E_\Phi\).
Moreover, if vertex colors are present, we require the isomorphism to respect this coloring, i.e. to map a vertex of \(G\) with color \(c(v)\) to each \(v \in V(\tilde{P})\).
We call \(\mathcal{G}\) \emph{\(P\)-free} if \(P\) does not occur in \(\mathcal{G}\).
For a set of topological crossing patterns \(\mathbb{P}\), we say \(\mathcal{G}\) is \(\mathbb{P}\)-free if \(\mathcal{G}\) is \(P\)-free for all \(P \in \mathbb{P}\).
We denote by \(\ftpcrn{\mathbb{P}}(G,\Gamma)\) the minimum number of crossings involving edges in \(E(G) \setminus E(\Gamma)\) in any \(\mathbb{P}\)-free drawing extension of \((G,\Gamma)\).

Lastly,
we will require a type of closure from \(\mathbb{P}\).
Like the crossing pattern locality assumption, this will come into play in \Cref{sec:bd-tw}.
For a set of topological crossing patterns \(\mathbb{P}\) and \(\ell \in \mathbb{N}\) we say that it is \emph{\(\ell\)-crossing contraction safe} if for all \(P \in \mathbb{P}\) and all patterns \(P'\) that arise from \(P\) by an arbitrary subdivision of the set of edges in \(E_P\) using new pairwise distinct vertices in \(V_C \cap V_A\) and a possible arbitrary identification of pairs of some of these new vertices such that after this \(|V_C \cap V_A| \leq \ell\), \(P' \in \mathbb{P}\).
We show the following result:
\begin{theorem}
\label{thm:ftpcr-fpt}
    Given a possibly vertex-colored partially predrawn graph \((G,\Gamma)\), \(k \in \mathbb{N}\), and a \(k\)-crossing contraction safe set of topological crossing patterns \(\mathbb{P}\), deciding \(\ftpcrn{\mathbb{P}}(G,\Gamma) \leq k\) is in \FPT\ parameterized by \(k + |\mathbb{{P}}| + \max\{|P| \mid P \in \mathbb{P}\}\).
\end{theorem}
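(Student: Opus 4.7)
The plan is the classical two-step schema used in FPT crossing number algorithms since Grohe's work, adapted to cope with the topological pattern constraints and the partial predrawing. Given \((G,\Gamma)\), \(k\), and \(\mathbb{P}\), set a target treewidth bound \(w\) and a grid radius \(s\) that depend only on \(k + |\mathbb{P}| + \max\{|P|\mid P\in\mathbb{P}\}\), and invoke \Cref{thm:grid-minor-thm} on (a suitable augmentation of) \(G\). Either the theorem certifies \(\tw(G) \leq w\) -- in which case we hand the instance to the MSO/Courcelle step -- or it returns a topological embedding \(h : H_s \to G\), which we use to shrink \(G\) to a smaller, equivalent instance and recurse.

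In the reduction case (Phase 1, to be carried out in \Cref{sec:bd-tw}), the plan is to argue that a sufficiently large grid must contain a large \emph{flat} subgrid that lies far from \(\Gamma\), far from all of its proper \(h(H_s)\)-components, and whose interior cannot be involved in any witness for a crossing or a forbidden pattern in some optimal drawing extension. From such a flat subgrid we identify a single edge \(e\) of \(G\) whose removal preserves the answer: every \(\mathbb{P}\)-free drawing extension of \((G,\Gamma)\) with \(\leq k\) new crossings restricts to one of \((G-e,\Gamma)\), and conversely any such drawing of \(G-e\) can be completed by re-inserting \(e\) through the flat region without introducing crossings or patterns. The crossing locality assumption ensures that edges in \(E_P\) of a pattern only connect planarized crossings, so any pattern witness that traverses the flat region does so via paths that can be freely rerouted; the \(k\)-crossing contraction safety of \(\mathbb{P}\) then guarantees that the resulting ``shortcut'' versions of patterns after rerouting are still in \(\mathbb{P}\), which is exactly what we need to match the removed-edge analysis to the original pattern family. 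We delete edges rather than contract, as the paper explicitly highlights, because contraction can create fresh adjacencies that violate the side specification constraint of some \(P\in\mathbb{P}\).

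In the bounded-treewidth case (Phase 2, \Cref{sec:algorithm}) the plan is to write an MSO\(_2\) sentence \(\varphi_{k,\mathbb{P},\Gamma}\) of length bounded in the parameter, over a signature that encodes the predrawn vertices/edges and the vertex coloring as unary predicates. The sentence existentially guesses (i) a set \(X\) of at most \(k\) ``new crossing vertices'' together with, for each \(x\in X\), the (at most two) edges of \(E(G)\setminus E(\Gamma)\) that cross at \(x\) and the cyclic order of the four incident half-edges; (ii) for each edge not in \(E(\Gamma)\) the ordered sequence of crossings it undergoes; and (iii) a combinatorial embedding of the resulting planarization that agrees with \(\Gamma^\times\) on the predrawn part. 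Planarity of this planarization is expressible in MSO\(_2\) (e.g.\ via exclusion of the two Kuratowski minors, or by Euler's formula with a quantified face set), and the agreement with \(\Gamma\) is expressible because \(\Gamma\) is fixed and its rotation system/face structure can be hard-coded by constants. All of this contributes only \(f(k)\) to the formula length. Finally, for each \(P\in\mathbb{P}\) we add a clause forbidding \(P\) from occurring, which after the guessing of the planarization amounts to forbidding an isomorphic image of some realization of \(P\) in the resulting embedded graph.

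The main obstacle is the last clause: realizations of \(P\) can have unbounded size, so we cannot simply write ``no subdivision of \(P\) embeds in \(\mathcal{G}^\times\) in a prescribed topological way.'' The plan to overcome this is to exploit the side specification constraint together with the crossing locality assumption. Because edges in \(E_P\) are only between crossings in \(V_C\cap V_A\), and because predrawn vertices of the pattern touch at most two abstract edges, the topological ``role'' played by a long subdivision path of an \(E_P\)-edge relative to \(\Pi\) can be summarised by the sequence of predrawn faces it traverses, and this sequence is of bounded length in \(|P|\). Hence existence of a realization reduces to existence of a bounded-size skeleton (the image of \(V(P)\) plus one chosen face for each subdivision path) together with the existence of internally vertex-disjoint paths realising the \(E_P\)-edges within the prescribed face-sequences. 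Both ingredients are MSO\(_2\)-expressible on the planarization with formula length depending only on \(|P|\) and \(|\mathbb{P}|\), so the full sentence has length bounded by the parameter, and Courcelle's theorem then yields the claimed running time.
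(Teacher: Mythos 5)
Your Phase~1 matches the paper's reduction to bounded treewidth essentially step for step (grid minor theorem, partially predrawn flat subgrid, deletion of a single edge rather than contraction, rerouting pattern witnesses through the flat region, and invoking \(k\)-crossing contraction safety to recognise the rerouted patterns), so that part is sound in outline.

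The gap is in Phase~2. You propose to ``existentially guess a combinatorial embedding of the resulting planarization'' and to hard-code the rotation system of \(\Gamma\) ``by constants''. Neither is available. First, \(\Gamma\) can be arbitrarily large, so its rotation system cannot be written into a formula whose length is bounded in the parameter; the paper instead bakes this information into the \emph{structure} over which the formula is evaluated, via framing cycles around the vertices of \(\Gamma^\times\). Second, and more seriously, a rotation system is not an MSO-quantifiable object: the paper explicitly observes that at a high-degree cut vertex the leaves of a star are pairwise indistinguishable, so a choice of cyclic order among them cannot be encoded by boundedly many set variables. Overcoming exactly this is the technical core of the paper's Section~5: only the embedding choices \emph{relative to the at most \(6k\) special vertices} (planarized crossings and endpoints of crossing edges) are encoded, via rotation markers and delimiter flags attached, in a ``crossing-flip-aware framing'', to edges incident to 1- and 2-vertex cuts, with labels \((R,\pi,i)\) ranging over a set of size bounded in \(k\); consistency of these choices and the topological conditions for pattern occurrence are then expressed through an explicit \(\textsf{3ConSep}\) predicate on 3-connected components (where the embedding is unique) combined with the marker labels. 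Your alternative for handling unbounded realizations---summarising a subdivision path of an \(E_P\)-edge by ``the sequence of predrawn faces it traverses''---presupposes that the faces of the planarization are already determined, which is precisely the embedding information you have not shown how to quantify, so this step does not go through as stated.
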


\begin{figure}
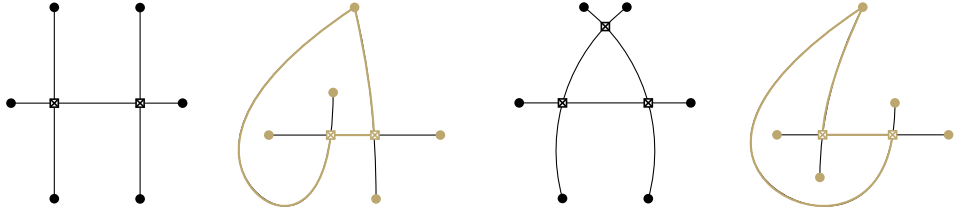

    \centering
    \includegraphics[page=2]{figs/obstructions.pdf}
    \hfil
    \includegraphics[page=12]{figs/obstructions.pdf}
    \caption{The two basic forbidden patterns for fan-planar crossing number (left). 
    The two other versions how the two patterns might appear in a planarized drawing.}
    \label{fig:fanpattern}
\end{figure}

\begin{remark}
    \Cref{thm:ftpcr-fpt} generalizes the algorithmic framework (Theorem~1) from \cite{munch2024parameterized} 
    in three ways: 
(i) we can forbid patterns whose realization in a potential drawing spans an arbitrary number of vertices (in particular, potentially unbounded in the crossing number variant), 
(ii) we can handle (partially) topologically embedded patterns, and 
(iii) we can handle a given partial drawing which we need to extend.
\end{remark}

The case of no pattern being forbidden and a tailored adaption using one pattern to handle $k$-planar drawings was studied in previous work by Hamm and Hliněný~\cite{DBLP:conf/compgeom/HammH22}.
Their results can now be viewed in a more unified way in our framework.

\begin{remark}
    \Cref{thm:ftpcr-fpt} generalizes the non-oriented version of the main algorithmic theorem (Theorem~1.1) from \cite{DBLP:conf/compgeom/HammH22} which shows fixed-parameter tractibility of \textsc{Partially Predrawn Crossing Number},
    as well as the adaptation of this result to \textsc{Partially Predrawn \(c\)-Planar Crossing Number} \cite[Theorem~1.3]{DBLP:conf/compgeom/HammH22}.
    This is the case because by definition \(\ppdcrn(G,\Gamma) = \ftpcrn{\emptyset}(G,\Gamma)\), and because restricting to \(c \leq k\)-planarity means considering a path with \(c+1\) vertices in \(V_C\) and all its edges in \(E_R\) as single forbidden pattern.
\end{remark}

Before we give the formal proof of \Cref{thm:ftpcr-fpt},
we demonstrate how it can be applied.
The added power of our framework compared to previous results allows us to show that deciding the fan-planar and the (partially predrawn) pseudolinear crossing number are each in \FPT\ with respect to the number of crossings.
The former solves an open question from~\cite{munch2024parameterized}.

\begin{corollary}
    \label{cor:fanplanar}
    \textsc{Fan-Planar Crossing Number} is in \FPT.
\end{corollary}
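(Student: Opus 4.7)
The plan is to apply \Cref{thm:ftpcr-fpt} with empty partial drawing $\Gamma = \emptyset$ and a carefully chosen set $\mathbb{P}$ of forbidden topological crossing patterns that captures fan-planarity. Since fan-planar drawings are characterized as those avoiding the two obstructions in \Cref{fig:fanobstructions}~\cite{Kaufmann2022}, and since by~\cite{DBLP:journals/jgaa/KlemzKRS23} every minimum-crossing fan-planar drawing can be assumed to be simple, it suffices to design $\mathbb{P}$ so that $\mathbb{P}$-freeness of a drawing extension is equivalent to the absence of these obstructions.

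First I would encode the two obstructions as basic topological crossing patterns $P_1, P_2$. For $P_1$ (an edge crossed by two independent edges), take $V_C = \{c_1,c_2\}$ for the two crossings, put the six distinct endpoints of the three involved edges into $V_R$, and place the four stubs from each $c_i$ to the endpoints of its crossing partner in $E_R$. Crucially, in an actual drawing the portion of the crossed edge between $c_1$ and $c_2$ may itself pass through further crossings, so put the edge $c_1c_2$ into $E_P$ so that realizations can replace it by an arbitrarily long subdivided path. For $P_2$ (the triangle-type obstruction with a shared endpoint $w$), use an analogous skeleton but add a small predrawing $\Pi$ that fixes the rotation at the relevant vertices so as to enforce that $w$ lies on the forbidden side of the crossed edge. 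All $E_P$-edges have both endpoints in $V_C \cap V_A$, so the crossing locality assumption holds; and because $\Pi$ is chosen to consist of a constant number of arcs merely pinning down rotations, every vertex of $\Pi$ has degree at most two in the embedded subgraph, so the side specification constraint is also met.

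Next I would close $\mathbb{P}$ under the operation demanded by $k$-crossing contraction safety: for each $P \in \{P_1, P_2\}$ add all patterns obtained by subdividing the $E_P$-edge with new vertices in $V_C \cap V_A$ and possibly identifying pairs of these new vertices, subject to $|V_C \cap V_A| \leq k$. Each basic pattern has constant size with exactly one $E_P$-edge to be split, so the resulting closure $\mathbb{P}$ has $|\mathbb{P}| = f(k)$ for a computable $f$, and each pattern has size $O(k)$; hence $k + |\mathbb{P}| + \max_{P \in \mathbb{P}}|P|$ is bounded by a function of $k$. Contraction safety holds by construction, and because $E_P$-realizations are paths, an enlarged pattern $P'$ occurs in $\mathcal{G}^\times$ if and only if one of the basic patterns $P_1$ or $P_2$ does. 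Thus, a drawing $\mathcal{G}$ of $G$ is $\mathbb{P}$-free exactly when it is fan-planar, and $\ftpcrn{\mathbb{P}}(G,\emptyset)$ coincides with the fan-planar crossing number.

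The main obstacle is faithfully encoding the topological side-condition of the second obstruction in $P_2$ while respecting the side specification constraint, which caps the degrees in the embedded subgraph at two. This forces $\Pi$ to be a minimal rotation-fixing gadget rather than a complete drawing of the triangle, and one must verify that a minimum-crossing simple drawing of $G$ occurs $P_2$ iff it contains the full topological obstruction. Once this encoding is set up and the closure under $k$-crossing contraction is taken, invoking \Cref{thm:ftpcr-fpt} immediately yields an \FPT\ algorithm for \textsc{Fan-Planar Crossing Number}, as claimed.
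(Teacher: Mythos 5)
Your overall plan---instantiate \Cref{thm:ftpcr-fpt} with \(\Gamma=\emptyset\) and a pattern set derived from the two fan-planarity obstructions, using the predrawn part \(\Pi\) of the patterns to pin down the side condition of the second obstruction---matches the paper. The one substantive divergence is your decision to place the segment \(c_1c_2\) of the doubly-crossed edge into \(E_P\); the paper does exactly the opposite and explicitly remarks that for fan-planarity \(E_P=\emptyset\) suffices. Since the partial predrawing is empty, every crossing in a witnessing drawing is one of the at most \(k\) new crossings, so the paper simply enumerates, as separate patterns, all subdivisions of the basic obstructions by at most \(k\) explicit crossing vertices placed in \(V_C\) with all edges in \(E_R\); this keeps the number and size of patterns bounded in \(k\) and makes checking the framework's side conditions immediate.

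Your \(E_P\)-based encoding has a genuine soundness problem. By the definition of occurrence, the isomorphism maps planarized crossings of \(\mathcal{G}\) only to vertices of \(V_C\) and maps vertices of \(V(G)\) only to \(V(\tilde P)\setminus V_C\); the subdivision vertices created when realizing an \(E_P\)-edge lie in \(V(\tilde P)\setminus V(P)\subseteq V(\tilde P)\setminus V_C\), so they must be matched to \emph{real} vertices of \(G\), not to crossings. Consequently your base pattern \(P_1\), with the \(E_P\)-edge \(c_1c_2\) realized as a path of length greater than one, matches configurations in which \(c_1\) and \(c_2\) are crossings on \emph{different} edges joined by a path through real vertices---these are not obstructions to fan-planarity, yet your \(\mathbb{P}\) forbids them, so \(\ftpcrn{\mathbb{P}}(G,\emptyset)\) need not equal the fan-planar crossing number. (The intended occurrences, where the middle segment carries further crossings, are captured only via your contraction-safety closure, which introduces explicit \(V_C\) subdivision vertices---i.e., exactly the paper's enumeration, making the \(E_P\)-edge redundant at best and harmful at worst.) The pseudolinear application in the paper avoids this trap by coloring: there the \(E_P\)-edges and the predrawn crossings of \(\Gamma^\times\) are both colored blue, so realizations can only route through the unboundedly many predrawn crossings; with \(\Gamma=\emptyset\) no such coloring is available or needed, which is precisely why the paper drops \(E_P\) for this corollary. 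You also omit the two additional planarized variants of the obstructions that the paper includes alongside the basic ones, which are needed for the pattern set to cover all ways the obstructions can appear in a planarization.
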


We derive the set of patterns from the two basic ones shown in \Cref{fig:fanpattern}.
These are the planarizations of the obstructions to fan-planarity~\cite{DBLP:conf/gd/CheongFKPS23,Kaufmann2022}. 
For both basic patterns, we include every crossing vertex (marked by a cross) in $V_C$ and every
other vertex in $V_R$. All edges are contained in $E_R$ which implies that \(E_P = \emptyset\).
Moreover, all vertices and edges are added to $V_A$ and $E_A$.
Edges and vertices marked in brown are predrawn in the pattern.
Keep in mind that these might not be predrawn in the instance itself. 
In particular, for \textsc{Fan-Planar Crossing Number} no part of the instance is predrawn, but some part of some of our patterns are.
We now construct all patterns for $k$ crossings. 
First we add the two forbidden patterns shown on the right in \Cref{fig:fanpattern} and then
enumerate all possible ways how to subdivide the edges in the basic patterns with crossing vertices such that the resulting pattern has at most $k$ crossings.
The number of such graphs is bounded in $k$ only and
each graph has size linear in $k$.
Let $\mathbb P_f$ be the resulting set of forbidden patterns.
As $E_P$ is empty, it is relatively easy to check that the patterns in $\mathbb P_f$ conform to all requirements of \Cref{thm:ftpcr-fpt} and
since every pattern contains a subdivision of one of the forbidden configurations
we obtain the corollary.

For fan-planar crossing number we did not have to use all the capabilities of our framework. 
Most notably we did not need to add any edges to \(E_P\) when constructing the pattern graphs.

For our next result the basic patterns are shown in \Cref{fig:plabasicpatterns}.
Analogous to fan-planar drawings, they are derived from the known obstructions for pseudolinear drawings~\cite{DBLP:journals/jocg/ArroyoBR21}; see \Cref{fig:plobstructions}.
When given a non-empty partial predrawing the number of crossings already existing in the predrawing is not bounded in the value of our parameter, and an unbounded number of such crossings can contribute to large obstructions to pseudolinearity with few new crossings.
This is where we make use of \(E_P\) to still be able to forbid only few and small topological crossing patterns.

\begin{figure}
    \centering
    \includegraphics[page=2]{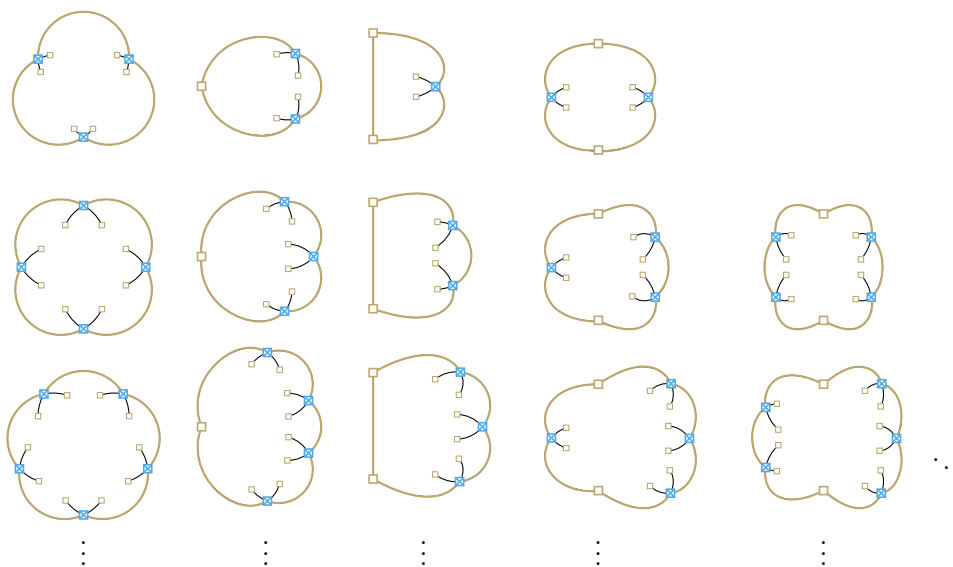}
    \caption{Basic patterns derived from the obstructions in \Cref{fig:plobstructions}. Cyan squares are crossings. Brown edges form the cycle and squares are crossings or vertices, both are predrawn in the pattern.}
    \label{fig:plabasicpatterns}
\end{figure}

\begin{corollary}
    \label{cor:pseudolinear}
    \textsc{(Partially Predrawn) Pseudolinear Crossing Number} is in \FPT.
\end{corollary}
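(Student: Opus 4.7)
The plan is to again instantiate \Cref{thm:ftpcr-fpt} for a suitable set \(\mathbb{P}_{pl}\) of topological crossing patterns derived from the basic patterns in \Cref{fig:plabasicpatterns}. Unlike the fan-planar case, an obstruction to pseudolinearity consists of a cycle of crossings with a prescribed topological side-property, and in the partially predrawn setting the cycle portions may contain an unbounded number of already-present crossings. This is precisely what makes the use of contracted edges \(E_P\) essential: each ``arc'' of the obstruction cycle that is drawn in \(\Gamma\) will be modeled by an edge of \(E_P\) whose realization may be subdivided by arbitrarily many crossing vertices corresponding to pre-existing crossings along that arc.

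Concretely, for each basic pattern in \Cref{fig:plabasicpatterns} I would proceed as follows. The marked real vertices of the obstruction go into \(V_R \cap V_\Phi\), and the marked predrawn crossings on the cycle go into \(V_C \cap V_\Phi\). The brown cycle edges, which are drawn by \(\Pi\), go into \(E_\Phi\). Then I enumerate, over all choices of how many (between \(0\) and \(k\)) \emph{new} crossings occur along each edge of the pattern and along each of the cycle arcs, all patterns obtained by subdividing the corresponding edges with extra vertices in \(V_C \cap V_A\); remaining sub-arcs between two consecutive vertices on a cycle that lie inside the predrawing are taken to be edges of \(E_P \cap E_\Phi\) (coloured so as to remember which arc of the original cycle they belong to), while freshly created pieces incident to at least one new crossing go into \(E_R \cap E_A\) in the usual way. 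Since the total number of new crossings in any valid obstruction is at most \(k\), the number of patterns produced is bounded by a function of \(k\) alone, and each pattern has size bounded in \(k\).

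Correctness then follows by the same argument as in the fan-planar corollary, combined with the characterization of pseudolinearity from \cite{DBLP:journals/jocg/ArroyoBR21}: any non-pseudolinear extension contains, in its planarization, a realization of exactly one of the patterns we enumerated, because a realization of an \(E_P\)-edge is, by definition, allowed to expand into a path of arbitrary length, thereby accounting for the unboundedly many predrawn crossings along the cycle arcs. Conversely, any occurrence of a pattern in a candidate drawing plants a full pseudolinearity obstruction into the planarization.

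It remains to check the three structural conditions demanded by \Cref{thm:ftpcr-fpt}. The side specification constraint is satisfied because in each pattern the only vertices incident to edges in \(E_A\) are the \(\mathcal{O}(k)\) new crossing vertices and the fixed \(\mathcal{O}(1)\) original endpoints of the obstruction, and by construction each such vertex is incident to at most two \(E_A\)-edges (its two sides along the crossing strand). The crossing locality assumption is enforced directly: every \(E_P\)-edge is placed between two vertices in \(V_C \cap V_A\), namely consecutive new crossings on a cycle arc, or a new crossing and a previously-marked cycle endpoint that we also take to lie in \(V_C \cap V_A\). Finally, \(k\)-crossing contraction safety holds because further subdividing an \(E_P\)-edge and identifying new crossing vertices, subject to \(|V_C \cap V_A| \leq k\), still produces a pattern obtained by our original enumeration (the set \(\mathbb{P}_{pl}\) is, up to isomorphism, closed under exactly such operations). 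I expect the main subtlety to be in the bookkeeping of the last point, in particular ensuring that the colouring on the \(E_P\)-edges is consistent with identifications so that the resulting graph really is one of the enumerated patterns; this is however a combinatorial check that does not affect the parameter dependence, so the corollary follows.
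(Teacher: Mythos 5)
Your overall strategy matches the paper's: instantiate \Cref{thm:ftpcr-fpt} with patterns built from the planarized obstructions of \cite{DBLP:journals/jocg/ArroyoBR21}, with new crossings appearing as explicit subdivision vertices in \(V_C \cap V_A\) and the stretches of predrawn crossings along the cycle absorbed into edges of \(E_P\) whose realizations may be arbitrarily long. However, you miss the single most delicate point of the paper's construction: a pseudolinearity obstruction is not just a cycle of crossings with pendants attached, it is such a cycle with all pendants on the \emph{bounded} side. The framework's notion of pattern occurrence (rotations and face boundaries in the planarization) cannot distinguish the bounded from the unbounded side of a cycle, i.e.\ it is blind to the choice of outer face. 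Forbidding a single obstruction as one pattern therefore over-forbids: a genuinely pseudolinear drawing can contain an embedded subgraph isomorphic to a realization of your pattern in which the pendants all point into the outer face, which is not an obstruction, and your algorithm would wrongly reject it. The paper resolves this by making each forbidden pattern the \emph{combination of three} obstruction configurations, two nesting and one not nesting at least one of the others, so that for every possible choice of outer face at least one of the three has its pendants on the bounded side; conversely, any two occurrences (or three that all nest or are pairwise non-nesting) admit an outer face turning all of them into non-obstructions. Without this (or an equivalent device) your reduction to \Cref{thm:ftpcr-fpt} is not correct.

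There is a second, more technical problem. You place the predrawn cycle arcs in \(E_P \cap E_\Phi\), but this intersection is forced to be empty by the framework's own constraints: the crossing locality assumption requires both endpoints of every \(E_P\)-edge to lie in \(V_C \cap V_A\), while every edge of \(E_\Phi\) must have both endpoints in \(V_\Phi\), and \(V_A\) and \(V_\Phi\) partition \(V(P)\). The paper instead keeps the \(E_P\)-edges out of \(E_\Phi\) (and removes their endpoints and adjacent edges from \(\Pi\)) and forces their realizations to consist of \emph{predrawn} crossings by a different mechanism: it preprocesses the instance by colouring every crossing vertex of \(\Gamma^\times\) blue and colours the \(E_P\)-edges blue, so that all subdivision vertices of a realization inherit that colour and hence must be mapped to planarized crossings of \(\Gamma\). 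Your colouring serves only to remember which arc an edge came from and does not achieve this, so nothing in your construction prevents an \(E_P\)-edge from being realized by a path of \emph{new} crossings, which breaks the accounting that keeps the pattern set bounded in \(k\). Both issues are repairable, but they require the specific devices the paper introduces rather than a routine adaptation of the fan-planar argument.
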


Similarly to the situation with fan-planar crossing number we are going to construct the basic  patterns from the planarizations of the forbidden configurations shown in \Cref{fig:plobstructions}.
\Cref{fig:plabasicpatterns} shows the planarized versions of the forbidden configurations.
To aid the description we divide the planarization one forbidden configurations
into the \emph{cycle} which consists of the crossing vertices and the up to two vertices that
make up the cyclic boundary.
We add \emph{pendants} to every end of a curve in the patterns,
all these pendants lie on one side of the cycle.
Note that while crossing vertices in the planarization have to be crossings,
all other vertices in the patterns may be mapped to vertices or crossings by an isomorphism witnessing the realization of a forbidden pattern.

Moreover, since the choice of the outer face matters, 
we consider not just one forbidden configuration at a time, but each pattern will have to arise from the combination of three forbidden configurations.
Two of these are nesting each other 
and the third does not nest at least one of them.
In this way, no matter which face is chosen as the outer face, one of the obstructions is drawn 
such that all its pendants are drawn inside the corresponding cycle.
Conversely, for only two occurrences of forbidden configurations or three which all nest or are all pairwise non-nesting, there is always a choice of an outer face such that all pendants point towards it.

In case the partially prewdrawn graph in the given instance is empty, we proceed similarly as for \Cref{cor:fanplanar}.
In particular we only need to consider patterns with at most \(k\) crossing vertices, hence their size and number will be bounded.
The occurrence of larger forbidden patterns is already precluded by not a sufficient number of crossings being allowed (let us emphasize that this is only true if the patial predrawing is empty).
We enumerate all possibilities if a pendant is a vertex or a crossing.
Then, each crossing and vertex is added to $V_C$ and $V_R$, respectively.
All vertices and the edges on the cycle are predrawn in the pattern.
Finally, we generate the set of patterns $\mathbb P_p'$ by taking all such basic patterns with at most $k' \leq k$ crossing vertices and computing all possible subdivisions with up to $k-k'$ new crossing vertices.
Applying \Cref{thm:ftpcr-fpt} to the set of the above patterns as \(\mathbb{P}\) is sufficient to solve \textsc{Pseudolinear Crossing Number}.

The distinguishing aspect that comes into play when given a non-empty partial predrawing is that the number of crossings already existing in the predrawing is not bounded in the value of our parameter.
This is where we will need to make use of \(E_P\) to still be able to forbid only few and small topological crossing patterns.

Specifically, construct \(\mathbb{P}_p\) starting from \(\mathbb{P}_p'\) and add every possible 
topological pattern that arises from any \((P,\Pi)\) in $\mathbb P_p'$ via the following procedure.
Color an arbitrary subset 
of edges on the cycle with both endpoints in $V_C$ blue and add these edges to $E_P$.
Any edge adjacent to an edge now in \(E_P\) is no longer allowed to be predrawn in the pattern, so we remove it as well as the endpoints of any newly blue edge from \(\Pi\).
Following these rules, we generate all possible patterns with blue cycle edges.
Moreover, whenever an edge on the cycle is not sharing a vertex with a blue edge in $E_P$,
we add it to the edges predrawn in the pattern.
This means, we always predraw as many edges as possible on the cycle. 
To make the set \(k\)-crossing contraction safe, we also add all patterns derived 
by up to $k$ subdivisions of blue edges with new crossings and 
possibly identify pairs among them.
By construction, $\mathbb P_p$ conforms to all the conditions necessary for \Cref{thm:ftpcr-fpt}. 
We need to preprocess the partially predrawn graph $(G,\Gamma)$ on the input by 
coloring all vertices representing crossings in $\Gamma^\times$ as blue to ensure the desired behavior of blue edges in our patterns.

\begin{lemma}
    Let $\mathcal G$ be a drawing of a graph $G$, then 
    $\mathcal G$ is pseudolinear if and only if 
    $\mathcal G$ is $\mathbb P_p$-free.
\end{lemma}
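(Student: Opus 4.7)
The plan is to invoke the Arroyo--Bracho--Richter characterization~\cite{DBLP:journals/jocg/ArroyoBR21}: a drawing \(\mathcal{G}\) is pseudolinear if and only if \(\mathcal{G}^\times\) contains no embedded subdrawing topologically equivalent to any of the obstructions in \Cref{fig:plobstructions}. The lemma then reduces to showing that the absence of ABR obstructions in \(\mathcal{G}^\times\) is equivalent to \(\mathcal{G}\) being \(\mathbb{P}_p\)-free.

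For the direction from pseudolinear to \(\mathbb{P}_p\)-free I would proceed by contraposition. Suppose some \(P \in \mathbb{P}_p\) occurs in \(\mathcal{G}\). By construction each element of \(\mathbb{P}_p\) arises as the combination of three planarized ABR forbidden configurations from \Cref{fig:plabasicpatterns}, possibly with subdivided edges and, in the non-empty predrawing case, with some cycle edges replaced by blue edges of \(E_P\) that stand for chains of predrawn crossings. A realization of \(P\) in \(\mathcal{G}^\times\) therefore contains, as an embedded substructure, a subdivision of at least one planarized obstruction; after suppressing subdivision vertices and matching coloured crossings back to \(\Gamma^\times\), this substructure is exactly an ABR obstruction from \Cref{fig:plobstructions} embedded in \(\mathcal{G}^\times\), and so \(\mathcal{G}\) is not pseudolinear.

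For the converse, assume \(\mathcal{G}\) is not pseudolinear, so by ABR \(\mathcal{G}^\times\) contains an obstruction \(\mathcal{O}\) given by a cycle \(C\) (of planarized crossings and up to two real vertices) together with pendants on one side of \(C\). I would match \(\mathcal{O}\) to some \(P \in \mathbb{P}_p\) by identifying \(C\) with the predrawn cycle part of the appropriate basic pattern and matching the pendants via the enumeration of vertex-or-crossing types in the construction; any stretch of \(C\) passing through a chain of predrawn crossings of \(\Gamma\) is matched to a blue edge of \(E_P\), and the \(k\)-crossing contraction safety of \(\mathbb{P}_p\) guarantees that the subdivided pattern so obtained does lie in \(\mathbb{P}_p\). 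The main obstacle is the outer-face ambiguity: the pattern fixes, via its embedding \(\Pi\), whether pendants lie in a bounded face or in the outer face of their cycle, while \(\mathcal{O}\)'s pendants in \(\mathcal{G}^\times\) may be on either side of \(C\). This is precisely why the construction takes each pattern to be a combination of three forbidden configurations with two mutually nesting and a third not nesting one of them: no matter which face of \(\mathcal{G}^\times\) is chosen as outer, at least one of the three component obstructions has its pendants in a bounded face of its cycle and hence matches the pattern's embedding. The crux of the hard direction is thus to extract such a three-obstruction witness from the single given \(\mathcal{O}\), arguing that an obstruction cycle with pendants on one side in a non-pseudolinear drawing always forces further nested companion obstructions that together realize the required nesting pattern.
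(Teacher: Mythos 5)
Your forward direction follows the paper's route: a realization of a pattern contains a subdivision of one of the Arroyo et al.\ obstructions. Two remarks. First, you assert that the realization yields an ABR obstruction ``after suppressing subdivision vertices,'' but the delicate point is topological, not graph-theoretic: when a blue edge of \(E_P\) is realized by a long path \(v_1,\dots,v_a\) of predrawn crossings, each \(v_i\) carries pendant crossing edges, and one must argue that all of these pendants land on the correct side of the cycle. The paper does this explicitly via the rotation-preservation in the definition of a realization (the rotation around each \(v_i\) matches that at the endpoints of the blue edge), and your proposal skips it. Second, the reason ``at least one'' of the three component configurations is a genuine obstruction is the outer-face/nesting argument; you invoke it only later, in the converse, but it belongs here.

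The converse is where the proposal has a genuine gap. You correctly start as the paper does --- take a minimal obstruction \(B\), match predrawn crossing chains on its cycle to blue edges, and use \(k\)-crossing contraction safety --- but you then declare that ``the crux of the hard direction is to extract such a three-obstruction witness from the single given \(\mathcal{O}\), arguing that an obstruction cycle with pendants on one side \dots always forces further nested companion obstructions.'' This is stated as a goal, not proved, and it is exactly the step that would carry the burden of the direction; as written the proof is incomplete, and it is not even clear the claimed forcing statement is true without further argument. Note that the paper's own proof does not perform such an extraction: it matches the single minimal obstruction \(B^\times\) directly to a pattern of \(\mathbb{P}_p\) (predrawing its cycle, turning chains of \(\Gamma\)-crossings into blue edges), so the two arguments genuinely diverge at this point. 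You have identified a real subtlety in the interaction between pattern occurrence (which is defined up to combinatorial embedding, hence blind to the choice of outer face) and the three-configuration construction, but you have not closed it, so the backward direction does not go through as proposed.
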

\begin{proof}
    Assume that $\mathcal G$ is pseudolinear, but a pattern $P\in\mathcal P_p$ occurs in $\mathcal G$.
    If $P$ does not contain a blue edge this is impossible as then the three parts of $P$ correspond each to the subdivision of the graph underlying some obstruction for pseudolinear drawings and,
    moreover, at least one of them has to be embedded in $\mathcal G^\times$ such that it corresponds to the subdivision of an obstruction.

    Now let $P$ contain at least one blue edge.
    If $P$ occurs in $\mathcal G$, this means that the blue edges were replaced by paths of crossing vertices from $\mathcal G^\times$.
    Let $\tilde P$ be the corresponding realization.
    Consider the path $v_1,\ldots,v_a$ of $a \in O(|V|)$ crossing vertices that replaced a blue edge in $\tilde{P}$.
    Then the rotation of edges 
    around $v_2,\ldots,v_{a-1}$ is isomorphic to the one around $v_a$.
    By construction, this means that all the pendants attached to crossings on the cycle are drawn on one side of the disk bounded by the cycle.
    But then $\tilde{P}$ contains the subdivision of some obstruction (with possibly many crossings).

    In the other direction, assume that $\mathcal G$ is $\mathbb P_p$-free, but
    is in fact not pseudolinear.
    Then by~\cite{DBLP:journals/jocg/ArroyoBR21} $\mathcal G$ contains one of the obstructions $B$, such that it contains the minimal number of curve-parts that connect vertices and/or crossings.
    But then, by the way the patterns were constructed, there is a pattern in $\mathbb P_p$ whose
    underlying graph is isomorphic to $B^\times$.
    If $B^\times$ does not contain any crossing vertices corresponding to crossings in $\Gamma$, we are in fact 
    done, as we derive the corresponding pattern in $\mathbb P$ by predrawing the cycle edges and vertices $B^\times$.
    Now if $B^\times$ does contain some crossings from $\Gamma$, we may assume 
    that this crossing is on the cycle, as else the obstruction was not chosen with the minimal number of curve-parts connecting vertices and/or crossings.
    But then there exists a pattern containing exactly these paths of crossing vertices on the cycle as blue edges.
    Consequently, some pattern of $\mathbb P_p$ occurs in~$\Gamma$.
\end{proof}

\section{Reducing to Bounded Treewidth}
\label{sec:bd-tw}
We aim to find, in the case that we cannot correctly identify a no-instance of our problem or bound the treewidth of the graph in the given crossing number value, an edge of our instance that we can delete, such that the resulting smaller instance is equivalent to the original one.

For this, we find a sufficiently large grid minor \(h(H_i)\) which can be assumed to be drawn without any crossings in a hypothetical solution assuming sufficiently large treewidth.
Where for abstract graphs, without partial embeddings we would be able to use what is commonly referred to as flat grid theorem~\cite{DBLP:journals/jct/Thomassen97b}, we need its adaptation to the partially predrawn setting.
While this idea is already implicit in the \FPT-algorithm for partially predrawn crossing number~\cite{DBLP:conf/compgeom/HammH22}, we give a self-contained statement and short proof here.

Given a partially predrawn graph \((G,\Gamma)\), and a topological embedding \(h : H_i \to G\) for some \(i \in \mathbb{N}\), \(h(H_i)\) is called \emph{partially predrawn flat} if \(\ppdcrn(h(H_i^+) \cup \Gamma, \Gamma) = 0\).
\begin{theorem}
\label{thm:flat-grid}
    For any partially predrawn graph \((G,\Gamma)\) and \(r \geq 1\) there is an \(s \geq 1\) that depends only on \(r\) and \(\ppdcrn(G,\Gamma)\) such that the following holds.
    If \(h : H_s \to G\) is a topological embedding, then there is a subgrid \(H_r \subseteq H_s\) such that the restriction \(h\big|_{H_r}\) of \(h\) to \(H_r\) is partially predrawn flat.
    Such a subgrid can be found in time \(|V(G)|^{O(1)}\).
\end{theorem}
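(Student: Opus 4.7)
The plan is to adapt the classical flat grid theorem to the partially predrawn setting. The essential observation is that although $\Gamma$ may carry arbitrarily many crossings among its own edges, $\ppdcrn(G,\Gamma)$ only counts crossings involving at least one edge of $E(G) \setminus E(\Gamma)$. Fixing an optimal drawing extension $\mathcal{G}$ of $\Gamma$ to $G$ therefore yields at most $k := \ppdcrn(G,\Gamma)$ such \emph{new} crossings, involving at most $2k$ distinct edges of $G$. Our aim is to find, inside $H_s$, a subgrid $H_r$ such that the restriction of $\mathcal{G}$ to $h(H_r)^+ \cup \Gamma$ is free of new crossings, which immediately certifies partial predrawn flatness via $\ppdcrn(h(H_r)^+ \cup \Gamma, \Gamma) = 0$.

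To this end, we first choose $s$ large enough, as a function of $r$ and $k$, so that $H_s$ contains many pairwise vertex-disjoint copies of a slightly larger subgrid $H_{r^*}$, arranged with sufficient buffer layers of $H_s$-edges between them. By a pigeonhole argument, only a bounded number of these candidate subgrids can contain in their grid image any of the at most $2k$ edges of $E(G) \setminus E(\Gamma)$ involved in new crossings. For any unpolluted candidate, the grid image itself is drawn without new crossings in $\mathcal{G}$. A further refinement---selecting $H_r$ well inside the chosen $H_{r^*}$---exploits the buffer rings to confine any potentially crossing-involved attachments outside the region that $h(H_r)^+$ needs to occupy, so that the restriction of $\mathcal{G}$ to $h(H_r)^+ \cup \Gamma$ yields the desired planar drawing extension.

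The hardest part will be controlling proper components correctly. In a connected $G$, every connected component of $G - h(H_r)$ attaches to $h(H_r)$ and hence belongs to $h(H_r)^+$, so proper components can be very large and may contain edges incident to new crossings even when the grid image itself is clean. The buffer structure is essential here: the planarly drawn buffer rings bound a topological disk in $\mathcal{G}$ in which $h(H_r)$ is embedded, and attachments crossing this boundary can be re-routed around $h(H_r)$ using the planar structure of the buffer, paralleling classical flat wall arguments~\cite{DBLP:journals/jct/Thomassen97b} and their adaptation in~\cite{DBLP:conf/compgeom/HammH22}. Algorithmically, we enumerate all subgrids $H_r$ of $H_s$ (polynomially many in $|V(G)|$) and for each test whether $h(H_r)^+ \cup \Gamma$ admits a planar drawing extending $\Gamma$ via a polynomial-time planarity-extension algorithm~\cite{DBLP:journals/talg/AngeliniBFJKPR15}; by the above, at least one test succeeds, yielding total running time $|V(G)|^{O(1)}$.
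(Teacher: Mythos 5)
Your core argument is the paper's argument: fix an optimal drawing witnessing $k=\ppdcrn(G,\Gamma)$, take $k+1$ pairwise disjoint radius-$r$ subgrids of $H_s$ (the paper simply sets $s=(k+1)\cdot r$), observe by pigeonhole that at least one of them, together with all its proper components, meets none of the at most $k$ new crossings, and then certify a candidate algorithmically by running the partially predrawn planarity algorithm of \cite{DBLP:journals/talg/AngeliniBFJKPR15} on each candidate with its proper components and $\Gamma$. Up to that point you and the paper coincide, and the paper's proof is exactly this and nothing more: no buffer layers, no enlarged intermediate grid $H_{r^*}$, and no re-routing.

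Where you diverge is your third paragraph, and there the proposed fix does not do what you want. Partial predrawn flatness is defined as $\ppdcrn(h(H_r)^+\cup\Gamma,\Gamma)=0$, i.e.\ it is a property of the \emph{abstract} subgraph $h(H_r)^+\cup\Gamma$ (it must admit some crossing-free extension of $\Gamma$), not of the restriction of the particular drawing $\mathcal{G}$. Re-routing attachments inside $\mathcal{G}$ therefore changes nothing about which components belong to $h(H_r)^+$, so it cannot ``confine'' them outside the relevant region. The difficulty you flag is nonetheless real: for connected $G$ the proper components of a small subgrid comprise essentially all of the rest of $G$, so the closures $h(H_r^{(i)})^+$ of distinct candidates overlap and a pigeonhole over them is not a count over disjoint objects; the paper's one-sentence count passes over this as well. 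The reading that makes the count work (and which the paper makes explicit later, in the proof of \Cref{lem:wlog-uncrossed}) is to take components relative to the large grid $h(H_s)$ and assign each to a subgrid only if all its attachment vertices lie on that subgrid, after which the pieces being counted are disjoint and the simple argument closes. If you replace your re-routing step by that bookkeeping, your proof reduces to the paper's.
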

\begin{proof}
    We can generously choose \(s = (\ppdcrn(G,\Gamma) + 1) \cdot r\) to be able to find \(\ppdcrn(G,\Gamma) + 1\) pairwise disjoint subgrids of radius \(r\) each in \(H_s\).
    In a drawing realizing the value of \(\ppdcrn(G,\Gamma)\), at least one of these must be uncrossed together with all its proper components, because otherwise we would have at least one crossing in each and more than \(\ppdcrn(G,\Gamma)\) crossings overall.
    This means at least one of these subgrids is partially predrawn flat.
    To determine which one, we can go through them and apply the known algorithm for deciding partially predrawn planarity~\cite{DBLP:journals/talg/AngeliniBFJKPR15} of each of the disjoint grids with its proper components and \(Gamma\).
\end{proof}

Beside changing the number of crossings by deleting an edge as desired,
we need to worry about changing the existence of forbidden topological crossing patterns.
The arguments we make for this in the following proof rely on the crossing locality assumption as well as the \(k\)-crossing contraction safeness of the considered pattern sets.

\begin{lemma}
    \label{lem:wlog-uncrossed}
    Let \(k \in \mathbb{N}\), \(\mathbb{P}\) be a \(k\)-crossing contraction safe set of forbidden topological crossing patterns, and \((G,\Gamma)\) be a possibly vertex-colored partially predrawn graph.
    If \(\mathcal{G}\) is a drawing witnessing \(\ftpcrn{\mathbb{P}}(G,\Gamma) \leq k\) and \(h : H_s \to G\) is partially predrawn flat, then there is a drawing \(\mathcal{G}'\) also witnessing \(\ftpcrn{\mathbb{P}}(G,\Gamma) \leq k\) in which all edges in \(h(H_{i})^+\), where \(H_{i}\) consists of all but the \((k+1)(\max\{|P| \mid P \in \mathbb{P}\})\) outermost concentric cycles of \(H_{s}\), are uncrossed.
\end{lemma}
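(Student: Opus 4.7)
The plan is to find a ``clean'' concentric ring in the outer buffer of the grid, replace the drawing inside it by a planar drawing obtained from the flatness property, and then argue that neither the crossing count nor $\mathbb{P}$-freeness is affected by this surgery. Write $m = \max\{|P| \mid P \in \mathbb{P}\}$ throughout.

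For the first step I would partition the $(k+1)m$ outermost concentric cycles of $h(H_s)$ into $k+1$ consecutive blocks of $m$ cycles each. Since $\mathcal{G}$ has at most $k$ new crossings, by pigeonhole there is one block $R$ such that no edge of $R$ is involved in any crossing of $\mathcal{G}$. Let $C^*$ denote the innermost cycle of $R$; as a simple closed curve with no crossings, $C^*$ bounds a disk $D^*$ that contains $h(H_i)^+$ and is cleanly separated from its complement via the Jordan curve theorem. By \Cref{thm:flat-grid} and the assumption that $h$ is partially predrawn flat, there exists a planar drawing $\mathcal{H}$ of $h(H_s)^+ \cup \Gamma$ extending $\Gamma$. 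I would form $\mathcal{G}'$ by keeping $\mathcal{G}$ unchanged outside $D^*$ and replacing the interior of $D^*$ by the subdrawing of $\mathcal{H}$ contained inside the $C^*$-bounded disk of $\mathcal{H}$ (identifying the two copies of $C^*$ via a suitable homeomorphism of the annulus). Any proper components of $h(H_s)$ that attach only to vertices on $C^*$ or outside $D^*$ but which happened to be drawn inside $D^*$ in $\mathcal{G}$ can be rerouted to the outside, since $C^*$ is uncrossed and hence each such component enters $D^*$ through a single grid vertex of $C^*$. By construction, all edges of $h(H_i)^+$ are uncrossed in $\mathcal{G}'$, the interior of $D^*$ contributes no crossings, and crossings outside $D^*$ are unchanged, so $\crn(\mathcal{G}') \leq \crn(\mathcal{G})$ and $\mathcal{G}'$ extends $\Gamma$ with at most $k$ new crossings.

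The remaining and most delicate task is to verify that no pattern $P \in \mathbb{P}$ occurs in $\mathcal{G}'$. Assume for contradiction that some $P$ occurs, witnessed by an isomorphism $\iota$ of some realization $\tilde{P}$ into $\mathcal{G}'^\times$. The crossing locality assumption forces every endpoint of an $E_P$-edge of $P$ to lie in $V_C$, and each such edge is realized as a path of crossings in $\mathcal{G}'^\times$; because there are no crossings inside $D^*$ in $\mathcal{G}'$, every vertex in $V_C$ of $P$ as well as every subdivision crossing along an $E_P$-path must be mapped strictly outside $D^*$. The only parts of $\iota(\tilde{P})$ that can lie inside $D^*$ are $V_R$ vertices (vertices of $h(H_i)^+$ or of $\Gamma$) together with connecting $E_R$-edges. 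Because $C^*$ is uncrossed in both drawings and because $\mathcal{G}$ and $\mathcal{G}'$ agree on the complement of $D^*$, I would transport the ``inside'' portion of $\iota(\tilde{P})$ back through the drawing of $h(H_s)^+ \cup \Gamma$ present in $\mathcal{G}$: the inside endpoints of edges leaving $D^*$ are vertices of $C^*$, which are the same in both drawings, and the remainder inside can be realized along $\mathcal{G}$'s structure on the same set of real vertices. In the course of transport, $E_P$-paths may have to be subdivided (or pairs of their crossings identified) in order to match the new routing through the annulus between $C^*$ and any neighboring crossings in $\mathcal{G}$; this is precisely the operation permitted by $k$-crossing contraction safeness, so the transported pattern remains in $\mathbb{P}$ and witnesses an occurrence in $\mathcal{G}$, contradicting $\mathbb{P}$-freeness of $\mathcal{G}$.

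The main obstacle I expect is exactly this last transport argument: handling how the interior part of a pattern that straddles $C^*$ must be realigned with the original drawing $\mathcal{G}$ inside $D^*$, and making formally precise that $k$-crossing contraction safeness together with the $m$-cycle buffer suffices to absorb the discrepancy in the length and exact routing of $E_P$-paths caused by the redrawing. The buffer width $m$ is what prevents a single pattern from simultaneously having realized subgraphs that would require modifications crossing each other in the annulus, while the bound of $k$ in $k$-crossing contraction safeness is what permits the resulting subdivisions and identifications to stay within $\mathbb{P}$.
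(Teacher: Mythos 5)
Your overall strategy is the same as the paper's: pigeonhole the $(k+1)\max\{|P| \mid P\in\mathbb{P}\}$ outer cycles into $k+1$ rings of width $m=\max\{|P| \mid P\in\mathbb{P}\}$ to find an uncrossed ring, use partially predrawn flatness to replace everything inside that ring by a crossing-free drawing consistent with $\Gamma$, observe that the crossing count cannot increase, and then transport any pattern occurrence in $\mathcal{G}'$ back to $\mathcal{G}$ using $k$-crossing contraction safeness. The gap is in the step you yourself flag as the ``main obstacle'', and it is genuine: your characterization of which parts of a realization $\tilde{P}$ can lie inside $D^*$ is essentially the reverse of what the argument needs. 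You allow $V_R$ vertices and their incident $E_R$-edges to sit inside the redrawn disk and propose to ``realize [them] along $\mathcal{G}$'s structure on the same set of real vertices''. But an $E_R$-edge of a realization must map to a single edge of the planarization, i.e.\ to a crossing-free segment, and $E_R$-edges are exactly the edges that may \emph{not} be subdivided when passing to a realization or when invoking $k$-crossing contraction safeness (that operation only subdivides edges of $E_P$). Since the interior of $D^*$ is precisely the part of $\mathcal{G}$ that may contain crossings (only the ring $R$ is guaranteed uncrossed), an $E_R$-edge placed inside $D^*$ in $\mathcal{G}'$ need not correspond to an uncrossed edge in $\mathcal{G}$, and the transported object would then fail to be an occurrence of any pattern in $\mathbb{P}$. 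So your transport argument does not close.

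The missing ingredient is a distance argument showing that \emph{no} vertex of $V(P)$ (and hence no $E_R$-edge) can be mapped into the redrawn region at all. This follows from three definitional requirements you did not combine: every connected component of a pattern contains a vertex of $V_C$; the crossing locality assumption puts both endpoints of every $E_P$-edge in $V_C$, so from any $v\in V(P)$ the first $V_C$-vertex on a path to one is reached using only $E_R$-edges, i.e.\ at most $|P|-1<m$ unsubdivided edges; and $V_C$-vertices must map to planarized crossings, all of which lie outside the width-$m$ uncrossed buffer. Hence every $V(P)$-vertex is within distance $<m$ of a crossing and cannot lie in the glued-in region, and the only parts of $\iota(\tilde{P})$ that can traverse that region are interior portions of the monochromatic paths realizing $E_P$-edges, with both endpoints outside. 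Those same paths exist in $\mathcal{G}^\times$, merely interrupted by at most $k$ additional planarized crossings, and subdividing the corresponding $E_P$-edges (possibly identifying some new $V_C\cap V_A$ vertices) is exactly what $k$-crossing contraction safeness guarantees stays inside $\mathbb{P}$. Note also that your claim that ``every subdivision crossing along an $E_P$-path must be mapped strictly outside $D^*$'' sits uneasily with your later need to subdivide $E_P$-paths during transport; once the distance argument is in place, the correct picture is that $E_P$-realization paths are the \emph{only} things allowed inside, and they are the only things that can legally absorb the discrepancy.
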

\begin{proof}
    Group the \((k+1)(\max\{|P| \mid P \in \mathbb{P}\})\)-outermost concentric cycles of \(H_s\) into disjoint groups of \(\max\{|P| \mid P \in \mathbb{P}\})\) consecutive cycles each, starting from the outermost.
    This leads to \(k+1\) groups.
    Hence, in any drawing \(\mathcal{G}\) witnessing \(\ftpcrn{\mathbb{P}}(G,\Gamma) \leq k\) there is one of these groups which together with all \(h(H_s)\)-components all of whose attachment vertices are on the topological embedding of one of the cycles of this group is uncrossed in \(\mathcal{G}\).

    Because $h(H_s)$ is partially predrawn flat, we can use the drawing of the \((\max\{|P| \mid P \in \mathbb{P}\} + 1)\)-outermost concentric cycle of \(h(H_s)\) to glue in a crossing-free drawing of all vertices and edges that are drawn inside that cycle by \(\mathcal{G}\) in a way that is consistent with \(\Gamma\). %
    We call the resulting drawing \(\mathcal{G}'\).
    
    It is clear that \(\mathcal{G}'\) does not have more crossings that \(\mathcal{G}\).
    To see why the above modification maintains \(\mathbb{P}\)-freeness, note that each vertex of the glued-in drawing has distance at least \(\max\{|P| \mid P \in \mathbb{P}\} + 1\) from any crossing vertex in the planarization of the drawing.
    By the crossing locality assumption, this implies that only monochromatic paths in the glued-in drawing both of whose endpoints are outside of the glued-in parts can be included in the occurrence of a pattern \(P \in \mathbb{P}\) in the considered drawing of \(G\).
    The same paths exist in \(\mathcal{G}^\times\) possibly interrupted by planarized crossing vertices.
    Because there are at most \(k\) crossings in \(\mathcal{G}\), if there is a pattern \(P \in \mathbb{P}\) occurring in \(\mathcal{G}'\), there is a pattern \(P'\) that occurs in \(\mathcal{G}\) whose containment in \(\mathbb{P}\) is implied by the \(k\)-crossing contraction safeness of \(\mathbb{P}\).

    This means that because \(\mathcal{G}\) is \(\mathbb{P}\)-free the same is true for \(\mathcal{G}'\) and thus all desired properties hold for \(\mathcal{G}'\).
\end{proof}

We are now ready to prove the lemma that allows us to reduce to bounded-treewidth instances.
\begin{lemma}
\label{lem:bd-tw}
    Let \(\mathbb{P}\) be a set of forbidden topological crossing patterns.
    There is a function \(g : \mathbb{N} \to \mathbb{N}\) and an algorithm that given an partially predrawn graph \((G,\Gamma)\) and \(k \in \mathbb{N}\)
    \begin{itemize}
        \item correctly outputs \(\ftpcrn{\mathbb{P}}(G,\Gamma) > k\), or
        \item correctly outputs \(\tw(G) \leq g(k,\max\{|P| \mid P \in \mathbb{P}\})\), or
        \item outputs an edge \(e \in E(G)\) such that \(\ftpcrn{\mathbb{P}}(G,\Gamma) \leq k\) if and only if \(\ftpcrn{\mathbb{P}}(G-e,\Gamma-e) \leq k\)
    \end{itemize}
    in time \(f(k,\max\{|P| \mid P \in \mathbb{P}\}) \cdot |V(G)|^{\mathcal{O}(1)}\).
\end{lemma}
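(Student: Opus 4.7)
The plan is to combine the excluded grid theorem (\Cref{thm:grid-minor-thm}), the partially predrawn flat grid theorem (\Cref{thm:flat-grid}), and \Cref{lem:wlog-uncrossed} in the spirit of the treewidth reduction in~\cite{DBLP:conf/compgeom/HammH22}. Let $\lambda = \max\{|P| \mid P \in \mathbb{P}\}$ and pick $r$ as a function of $k$ and $\lambda$ large enough that after stripping the $(k+1)\lambda$ outermost concentric cycles of $H_r$ (as in \Cref{lem:wlog-uncrossed}), the resulting inner subgrid still has radius greater than $\lambda$; taking $r = (k+2)\lambda + 1$ already suffices. Apply \Cref{thm:grid-minor-thm} with $s = (k+1)r$. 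It either bounds $\tw(G)$ by a function $g(k,\lambda)$, giving the second output, or returns a topological embedding $h : H_s \to G$. In the latter case, apply \Cref{thm:flat-grid} to extract a partially predrawn flat subgrid $h\big|_{H_r}$; if no such subgrid exists, then $\ppdcrn(G,\Gamma) > k$ and hence $\ftpcrn{\mathbb{P}}(G,\Gamma) > k$, giving the first output. Finally, select as $e$ any edge of $h(H_r)$ placed so deep that every vertex of $G$ within graph-distance $\lambda$ of an endpoint of $e$ lies inside the uncrossed region of $h(H_r)^+$ guaranteed by \Cref{lem:wlog-uncrossed}.

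For the forward direction of the equivalence, given any witness $\mathcal{G}$ for $(G,\Gamma)$, \Cref{lem:wlog-uncrossed} lets us assume that $e$ is uncrossed in $\mathcal{G}$. Deleting $e$ yields a drawing of $G-e$ extending $\Gamma-e$ with no more crossings, and because $e$ was uncrossed, $(\mathcal{G}-e)^\times$ is a subgraph of $\mathcal{G}^\times$, so any pattern occurrence in the former would lift to one in the latter, preserving $\mathbb{P}$-freeness. For the backward direction, start from a witness $\mathcal{G}'$ for $(G-e,\Gamma-e)$. The subgrid $h\big|_{H_r}$ with a single interior edge removed is still partially predrawn flat, so \Cref{lem:wlog-uncrossed} applies and yields a witness $\mathcal{G}''$ whose region around the endpoints of $e$ is entirely uncrossed. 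Reinsert $e$ as a short curve through this uncrossed region to obtain a drawing $\mathcal{G}$ of $G$ extending $\Gamma$ without additional crossings. It remains to check $\mathbb{P}$-freeness: any occurrence of some $P \in \mathbb{P}$ in $\mathcal{G}$ via realization $\tilde P$ must use $e$, since otherwise it would already occur in $\mathcal{G}''$. Because $e$ is uncrossed, both of its endpoints in $\mathcal{G}^\times$ are original vertices of $G$ and must map to $V_R$-vertices of $\tilde P$. The crossing locality assumption forces $E_P$-edges to have both endpoints in $V_C$, so $e$ corresponds to an $E_R$-edge between two $V_R$-vertices; within the connected component of $\tilde P$ containing $e$, reaching the mandatory $V_C$-vertex from the endpoints of $e$ uses only $E_R$-edges, hence travels at most $|V_R| \leq \lambda$ edges. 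But by the depth choice for $e$, no crossing of $\mathcal{G}$ lies within graph-distance $\lambda$ from an endpoint of $e$; this contradiction completes the argument.

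The main obstacle is the backward direction, where we must rule out the creation of new forbidden patterns by reinserting $e$. This hinges on the crossing locality assumption (which forbids $e$ from participating in an $E_P$-path and thereby confines the analysis to a bounded $E_R$-neighborhood of $e$) together with the depth at which $e$ is chosen inside the uncrossed region. The $k$-crossing contraction safeness of $\mathbb{P}$ enters the proof only indirectly, through each invocation of \Cref{lem:wlog-uncrossed}. All steps run within the claimed time bound by the running times of \Cref{thm:grid-minor-thm}, \Cref{thm:flat-grid}, and the partially predrawn planarity test invoked inside the flat grid theorem.
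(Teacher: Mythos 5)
Your overall architecture (grid minor theorem, then \Cref{thm:flat-grid}, then a deep uncrossed edge \(e\) plus \Cref{lem:wlog-uncrossed} for both directions of the equivalence) matches the paper, and your forward direction is fine. However, the backward direction has a genuine gap. You assert that, because \(e\) is uncrossed and its endpoints are original vertices, \(e\) ``corresponds to an \(E_R\)-edge between two \(V_R\)-vertices.'' This does not follow. The crossing locality assumption constrains only the \emph{endpoints} of an edge in \(E_P\) to lie in \(V_C\); in a realization \(\tilde P\), such an edge is replaced by a path of \emph{arbitrary} length, and the internal vertices of that path need not be crossings (they are only constrained by colors, when present). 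So \(e\) can perfectly well be an interior edge of a long monochromatic path realizing an \(E_P\)-edge whose two \(V_C\)-endpoints lie far outside the uncrossed region --- entirely consistent with your observation that no crossing lies within distance \(\lambda\) of \(e\). Your ``contradiction'' therefore does not arise, and reinserting \(e\) could in principle create a new occurrence of a pattern through such a path.

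This is exactly the case the paper's proof is built around, and it is why the choice of \(e\) there is not ``any deep edge'': the paper first looks for an edge of \(h(H^*_i)^+\) with differently colored endpoints (such an edge can never lie on a monochromatic \(E_P\)-path, killing the problematic case outright), and only if the whole region is monochromatic does it take an edge on the innermost concentric cycle. In the monochromatic case the problematic occurrence is then handled by a rerouting argument: any \(E_P\)-path through \(e\) is rerouted through the rest of the grid, and conflicts between the at most \(\mathcal{O}(\lambda)\) such paths are resolved by iteratively shifting to the \(2\lambda+2\) spare rows and columns reserved in the subgrid, showing the same pattern already occurs in \(\mathcal{G}^-\). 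Your radius \(r=(k+2)\lambda+1\) does not reserve this rerouting slack, and your argument contains no substitute for it, so the backward direction as written is incomplete. (A smaller note: the \(k\)-crossing contraction safeness indeed enters only through \Cref{lem:wlog-uncrossed}, as you say, so that part of your accounting is correct.)
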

\begin{proof}
    We can apply \Cref{thm:grid-minor-thm} with \(s = \bar{s}((k+1)\max\{|P| \mid P \in \mathbb{P} + 2\max\{|P| \mid P \in \mathbb{P}\}+3\},k)\) where the function \(\bar{s}\) maps \((r,c) \in \mathbb{N}^2\) to \(s \in \mathbb{N}\) as in \Cref{thm:flat-grid} for \(r\) and \(G\) with \(\ppdcrn(G,\Gamma) = c\).
    The \(w\) for which we get an algorithm as in \Cref{thm:grid-minor-thm} is what we set as \(g(k,\max\{|P| \mid P \in \mathbb{P}\})\).
    Then, in time \(f(s)|V(G)|\) we can correctly output that \(\tw(G) \leq g(k)\), or a topological embedding \(h : H_s \to G\).
    In the former case there is nothing left to do, and in the latter case, by our choice of \(s\) using \Cref{thm:flat-grid}, we can find a subgrid
    \[H^* := H_{(k+1)\max\{|P| \mid P \in \mathbb{P} + 2\max\{|P| \mid P \in \mathbb{P}\}+3\}} \subseteq H_s\] such that the restriction \(h\big|_{H^*}\) of \(h\) to \(H^*\) is partially predrawn flat unless \(\ppdcrn(G,\Gamma) > k\).
    This allows us to correctly return \(\ftpcrn{\mathbb{P}}(G,\Gamma) > \ppdcrn(G,\Gamma) > k\) if no such \(H^*\) is found.

    It remains to consider the case in which we find \(H^*\).
    If there is an edge between endpoints with different colors in \(h(H^*_i)^+\) where \(H^*_i\) consists of the \((2\max\{|P| \mid P \in \mathbb{P}\}+2\})\)-innermost concentric cycles of \(H_i\), we fix this edge as \(e\).
    Otherwise all vertices of \(h(H^*_i)^+\) have the same color and we choose an arbitrary edge on the innermost concentric cycle of \(H^*\) as \(e\).
    
    We will show that \(\ftpcrn{\mathbb{P}}(G,\Gamma) \leq k\) if and only if \(\ftpcrn{\mathbb{P}}(G-e,\Gamma-e) \leq k\).

    By \Cref{lem:wlog-uncrossed} if \(\ftpcrn{\mathbb{P}}(G,\Gamma) \leq k\), then there is a drawing \(\mathcal{G}\) witnessing this in which all edges of \(h(H^*_i)^+\) are uncrossed.
    The removal of \(e\) from \(\mathcal{G}\) cannot create a crossing or inconsistencies with the remaining predrawing.
    Moreover, since \(e\) is uncrossed in \(\mathcal{G}\), the removal of its drawing cannot cause the non-existence of a crossing vertex in \(\mathcal{G}^\times\) which broke a pattern in \(\mathbb{P}\).
    By the definition of patterns, this is the only way in which a pattern in \(\mathbb{P}\) could occur in \(\mathcal{G} - e\) but not \(\mathcal{G}\).
    Because \(\mathcal{G}\) is \(\mathbb{P}\)-free, so is \(\mathcal{G} - e\) thereby witnessing \(\ftpcrn{\mathbb{P}}(G-e,\Gamma-e) \leq k\).
    
    Conversely, consider a drawing \(\mathcal{G}^-\) witnessing \(\ftpcrn{\mathbb{P}}(G-e,\Gamma-e) \leq k\).
    After the removal of \(e\), \(h(H_s) - e\) still contains a partially predrawn flat topological embedding of a grid with radius \(s - 1\).
    By \Cref{lem:wlog-uncrossed} we can assume all edges of \(h(H^*_i)^+ - e\) to be uncrossed in \(\mathcal{G}^-\).
    We can use the drawing of the \((\max\{|P| \mid P \in \mathbb{P}\} + 1)\)-outermost concentric cycle of \(h(H^*_i)^+ - e\) to glue in a crossing-free drawing of all vertices and edges that are drawn inside that cycle by \(\mathcal{G}^-\) into which \(e\) can be added without creating crossings, consistent with \(\Gamma\) using the fact that \(H^*\) is partially predrawn flat and \(e \in E(h(H^*)^+)\).
    We call \(\mathcal{G}\) the resulting drawing of \(G\).
    By construction, the number of crossings in \(\mathcal{G}\) is at most the number of crossings in \(\mathcal{G}^-\).
    
    We will show that any \(P \in \mathbb{P}\) occurring in \(\mathcal{G}\) had to already exist in \(\mathcal{G}^-\).
    First, note that \(e\) has distance at least \(\max\{|P| \mid P \in \mathbb{P}\} + 1\) from any crossing vertex in \(\mathcal{G}^\times\).
    By the crossing locality assumption this means the only way that \(e\) can be involved in the occurrence of a pattern \(P \in \mathbb{P}\) in \(\mathcal{G}\) is on a monochromatic path corresponding to an edge in \(E_P\) with both endpoints outside of \(h(H^*_i)^+\).
    If \(e\)'s endpoints are not of the same color then this is already a contradiction.
    Otherwise, by the choice of \(e\) all vertices of \(h(H^*_i)^+\) have the same color.
    Each pattern \(P \in \mathbb{P}\) has at most \(\mathcal{O}(\max\{|P| \mid P \in \mathbb{P}\})\) edges in \(E_P\) (recall that patterns are planar).
    If the path isomorphic to an edge in \(E_P\) for a pattern \(P \in \mathbb{P}\) occurring in \(\mathcal{G}\) contains \(e\), we can use the rest of \(h(H^*_i)^+\) to reroute the path.
    If this causes an intersection with a different such path used for the occurrence of the pattern, we iteratively keep rerouting along rows or columns which are closer and closer to the boundary of the grid. %
    Because we have \(2\max\{|P| \mid P \in \mathbb{P}\}+2\) columns and rows available, we start in the innermost concentric cycle (by the choice of \(e\)) we ultimately find the same pattern occurring in \(\mathcal{G}^-\).

    Because \(\mathcal{G}^-\) must be \(\mathbb{P}\)-free, the above argument implies the same for \(\mathcal{G}\).
    This means \(\mathcal{G}\) witnesses \(\ftpcrn{\mathbb{P}}(G,\Gamma) \leq k\) as desired.
\end{proof}

The above can be iteratively applied to conclude \(\ftpcrn{\mathbb{P}}(G,\Gamma) > k\) or delete edges until our instance has treewidth at most some \(g(k)\), achieving the goal of this section.

\section{Solving the Bounded Treewidth Case}
\label{sec:algorithm}
We now turn to the second part of proving our algorithmic result, which is to encode the fact that a graph has \(\ftpcrn{\mathbb{P}}\) at most \(k\) by an MSO-formula \(\varphi\) whose length is also bounded in our parameters over a graph that we obtain from \((G,\Gamma)\) without increasing its treewidth much.
Then we can use Courcelle's theorem~\cite{DBLP:journals/iandc/Courcelle90} to immediately show~\Cref{thm:ftpcr-fpt}.

First, we subdivide \(k\) times each edge of \(G\) and mark each of the subdivision vertices as \emph{crossing possibilities}.
It is well-known that edge-subdivision does not increase the treewidth by more than one if at all.
Whenever we are subdividing an edge in \(\Gamma\), we place its subdivisions at arbitrary distinct points on the drawing of that edge in \(\Gamma\) and consider them predrawn in this way.
Next, we use a minor of the construction used in \cite{DBLP:conf/compgeom/HammH22}, which allows us to also use their treewidth bound.
\begin{definition}[{cf.\ {\cite[Definition~3.2 of framings]{DBLP:conf/compgeom/HammH22}}; we differ only in the underlined part}]
    An \emph{orientation agnostic framing} of a partially predrawn graph \((G, \Gamma)\), is a (non-embedded) graph~$F$ constructed as follows.
    We start with the initial drawing \(\mathcal{D} = \Gamma\) and continue by the following steps in order:
	\begin{enumerate}
		\item While the underlying graph of \(\mathcal{D}\) is not connected, we iteratively add edges from \(G\) to \(\mathcal{D}\) that can be inserted in a planar way and which connect two previously disconnected components.
		If this is no longer possible while the graph is still disconnected, let \(B\) be a face of \(\mathcal{D}\) incident to more than one connected component.
		We pick a vertex \(v\) on \(B\) and connect \(v\) to an arbitrary vertex from each component incident to \(B\) which does not contain \(v\).
		We will call all edges added in this step the \emph{connector edges} (of the resulting framing).
		\item \label{step:edgesplit} We replace each edge $f=uw$ of \(\mathcal{D}\) from the previous step \underline{\smash{except the connector edges}} by three internally disjoint paths of length~$3$ between $u$ and~$w$.
		We will call these three paths together the {\em framing triplet of~$f$}, and denote by \(\mathcal{D}'\) the resulting drawing.
		\item \label{step:rotationscheme} Around each vertex $v\in V(\Gamma^\times)$ in \(\mathcal{D}'\) from the previous step, we add a cycle on the neighbors of $v$ in \(\mathcal{D}'\) in the cyclic order given by \(\mathcal{D}'\).
		We will call these cycles the \emph{framing cycles}, and all edges of the resulting planar drawing \(\mathcal{D}''\) the {\em frame edges}.
		\item \label{step:unite} Finally, we set \(F:=D''\cup G\) where \(D''\) is the underlying graph of \(\mathcal{D}''\) from the previous step.
  \end{enumerate}
\end{definition}
The only difference between the framings from \cite{DBLP:conf/compgeom/HammH22} and our \emph{(orientation agnostic) framings}, is that the former also adds framing triplets for connector edges.
The reason why we omit these edges is that we do not care about the orientation of disconnected parts of drawings with respect to each other. 
We can use the known fact that a framing of a graph does not have much larger treewidth than that graph~\cite[Lemma~7.11]{Hamm2022} and because of the minor-monotonicity of treewidth, the same treewidth bound also holds for orientation agnostic framings.

\begin{lemma}[{\cite[Lemma~7.11]{Hamm2022}}\protect\footnotemark]
\footnotetext{We cite the PhD thesis of one of the authors of \cite{DBLP:conf/compgeom/HammH22} because in the conference version the proof contains small mistakes.}
\label{lem:framing-bd-tw}
    Let \(F\) be a framing of a partially predrawn graph \((G, \Gamma)\).
    Then \(\operatorname{tw}(F) \leq f(\operatorname{tw}(G), k)\), where \(k = \ppdcrn(G,\Gamma)\) and \(f\) is a computable function.
\end{lemma}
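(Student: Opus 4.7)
The plan is to start from an arbitrary tree decomposition $(T,\{X_t\})$ of $G$ of width $\tw(G)$ and, by walking through the stages of the framing construction one at a time, to augment it into a tree decomposition of $F$ whose width stays bounded by a function of $\tw(G)$ and $k$. The guiding principle is that every vertex or edge introduced by the framing is either locally attached to an already-covered vertex of $G$, or is part of a structure whose size is controlled by $k$.

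First, I would handle the connector edges added in Step~1. Those drawn from $E(G)$ are automatically covered by the given decomposition. The remaining non-$G$ connector edges go across faces of the current planar drawing, and I would argue that their number is $O(k)$: each such edge is needed precisely when no planar $G$-edge can join two components, and a witnessing drawing extension of $\Gamma$ to $G$ with at most $k$ new crossings links all but $O(k)$ of these components by edges that can be added planarly. These $O(k)$ non-$G$ connector edges, together with their endpoints, can be absorbed by enlarging $O(k)$ bags by a small additive amount. Next, I would handle Step~2: replacing an edge by a framing triplet of three internally vertex-disjoint paths of length~$3$ is a local operation that, by the standard subdivision argument, increases treewidth by at most an additive constant per edge; concretely, for each edge $uv$ one extends a bag containing both $u$ and $v$ to also contain the six new internal vertices of the triplet.

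The main obstacle is Step~3, which adds a framing cycle on the neighbors (in the cyclic rotation given by $\mathcal{D}'$) of every vertex $v \in V(\Gamma^\times)$. For the at most $k$ crossings in $V(\Gamma^\times)\setminus V(\Gamma)$, each has bounded degree in $\Gamma^\times$, so the corresponding framing cycle has constant length and contributes at most $O(k)$ to the width. The delicate case is the framing cycles around original vertices of $V(\Gamma)$, whose length is proportional to the vertex degree in $\mathcal{D}'$ and hence can be unbounded. My plan is to exploit the fact that all framing cycles together with $\mathcal{D}''$ form a planar graph with a prescribed embedding, and that each framing cycle lies entirely in a small planar neighborhood of its center. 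Using this, I would show that at any bag $X_t$ of the decomposition of $G$, only those framing cycle vertices that lie between the vertices of $X_t$ in the corresponding rotation systems need to be co-present, so each bag only needs to grow by a factor depending on $\tw(G)$ rather than on any unbounded degree.

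I expect this last step to be the technical heart of the proof. Making the \emph{only locally many framing cycle vertices appear simultaneously in a bag} argument rigorous will likely require an auxiliary interleaving of the given tree decomposition with the planar embedding of $\mathcal{D}''$, in the spirit of arguments that combine planarity with treewidth. Once that is in place, composing all the edits yields a tree decomposition of $F$ of width $f(\tw(G), k)$ for a computable function $f$, as claimed; and by minor-monotonicity of treewidth, the same bound transfers to the orientation-agnostic framings actually used in the present paper.
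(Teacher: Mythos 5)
First, a structural point: the paper does not prove this lemma at all --- it is imported verbatim from [Hamm2022, Lemma~7.11], and the accompanying footnote even warns that the proof in the conference version contains mistakes. So there is no in-paper proof to compare against; I can only judge your plan on its own merits. Your overall strategy (augment a width-$\tw(G)$ decomposition of $G$ stage by stage through the framing construction) is the natural one, and you correctly isolate the framing cycles as the hard part, but two steps do not go through as written. The smaller issue is the claim that only $O(k)$ non-$G$ connector edges are needed: a witnessing extension with at most $k$ new crossings connects the components of $\Gamma$ by \emph{paths} in $G$, not by single planarly insertable edges, and the greedy insertion of Step~1 can get stuck for reasons having nothing to do with crossings, so the number of connector edges is bounded only by the number of components of $\Gamma$, which is not a function of $k$. (This gap is repairable, but by a different argument: each connector edge joins two previously disconnected components of the current $\mathcal{D}$, so the connector edges form a forest on those components and cost only an additive constant in width, irrespective of their number.)

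The decisive gap is in the framing-cycle step, which you yourself flag as the technical heart but then resolve only by assertion. You claim that in a bag $X_t$ only those cycle vertices ``between the vertices of $X_t$ in the rotation'' must be co-present, so each bag grows by a factor depending on $\tw(G)$. But the cyclic order of the neighbors of a vertex $v$ is dictated by the fixed plane drawing $\Gamma$, whereas their distribution over the bags of the subtree $T_v$ of bags containing $v$ is dictated by an arbitrary tree decomposition of $G$; these two orders can interleave $\Theta(\deg v)$ times. In that case, covering the consecutive edges of the framing cycle (each of which needs a bag containing both endpoints, and each cycle vertex must also meet the triplet path leading to the corresponding neighbor's gadget) forces unboundedly many cycle vertices through a single bag. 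So the bound does not follow from planarity of $\mathcal{D}''$ and locality alone: one must either construct the tree decomposition compatibly with the embedding from the start, or argue indirectly (e.g., that a large wall in $F$ can be contracted back through the vertex gadgets and triplets to a comparably large wall in $G$ plus a bounded perturbation). As it stands, the proposal is a plausible roadmap whose central step is missing; given that even the published conference proof of this statement needed correction, that step cannot be waved through.
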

We also use the MSO-encoding from \cite[Proof of Lemma~3.7]{DBLP:conf/compgeom/HammH22} to express that after the identification of at most \(k\) pairs of crossing possibilities, none of the obstructions to partially predrawn planarity occur. The encoding's length is bounded by a function of \(k\).

It remains to encode \(\mathbb{P}\)-freeness.
It is well-known to be MSO-encodable that a vertex set corresponds to an abstract (i.e.\ non embedded) small graph minor in the host graph.
However, we need to allow the abstract graph of a realization of a pattern to appear as a differently embedded graph.

To formalize this, let us reconsider in which ways embeddings play a role for the occurrence of patterns.
Recall that we constrain the maximum vertex degree to be at most \(2\) in the partially predrawn part of the patterns.
Hence, the only embedding-relevant information is which predrawn pattern cycles separate which other predrawn parts of the pattern from each other.
Because we require each face of the partial predrawing of a pattern to contain a planarized crossing or the endpoint of an edge which crosses, we only need to speak about the rotations around components with such vertices.

\subsection{Encoding Relevant Embedding Choices at 2- and 1-Vertex Cuts}
In a prospective planarized drawing, the only choices of embeddings we have are (Whitney) flips at original vertices which form cuts of size one or two in the planarized graph.
Such vertices have to already have been cuts of the same size in the original (non-planarized) graph.
We will slightly extend the graph we will apply Courcelle's theorem to, to speak about some of these choices in our MSO-formula.

Note, that we cannot hope to encode all choices without connecting all edges densely to each other, as otherwise this would allow us to encode the choice of a total order of arbitrary many pairwise independent elements via the choice of flips at a high degree vertex which is the center of a star.
This is impossible with a polynomial-length MSO-formula because, even using a small number of labels, these leaves are pairwise indistinguishable from each other.
Instead, we will only distinguish flips based on how they can flip around prospective crossings which are introduced by identifying pairs of crossing possibilities.
\Cref{fig:2-cut} gives an illustration to refer to for what we aim to and not to distinguish, as well as for later definitions.

\begin{figure}
    \centering
    \includegraphics[page=2]{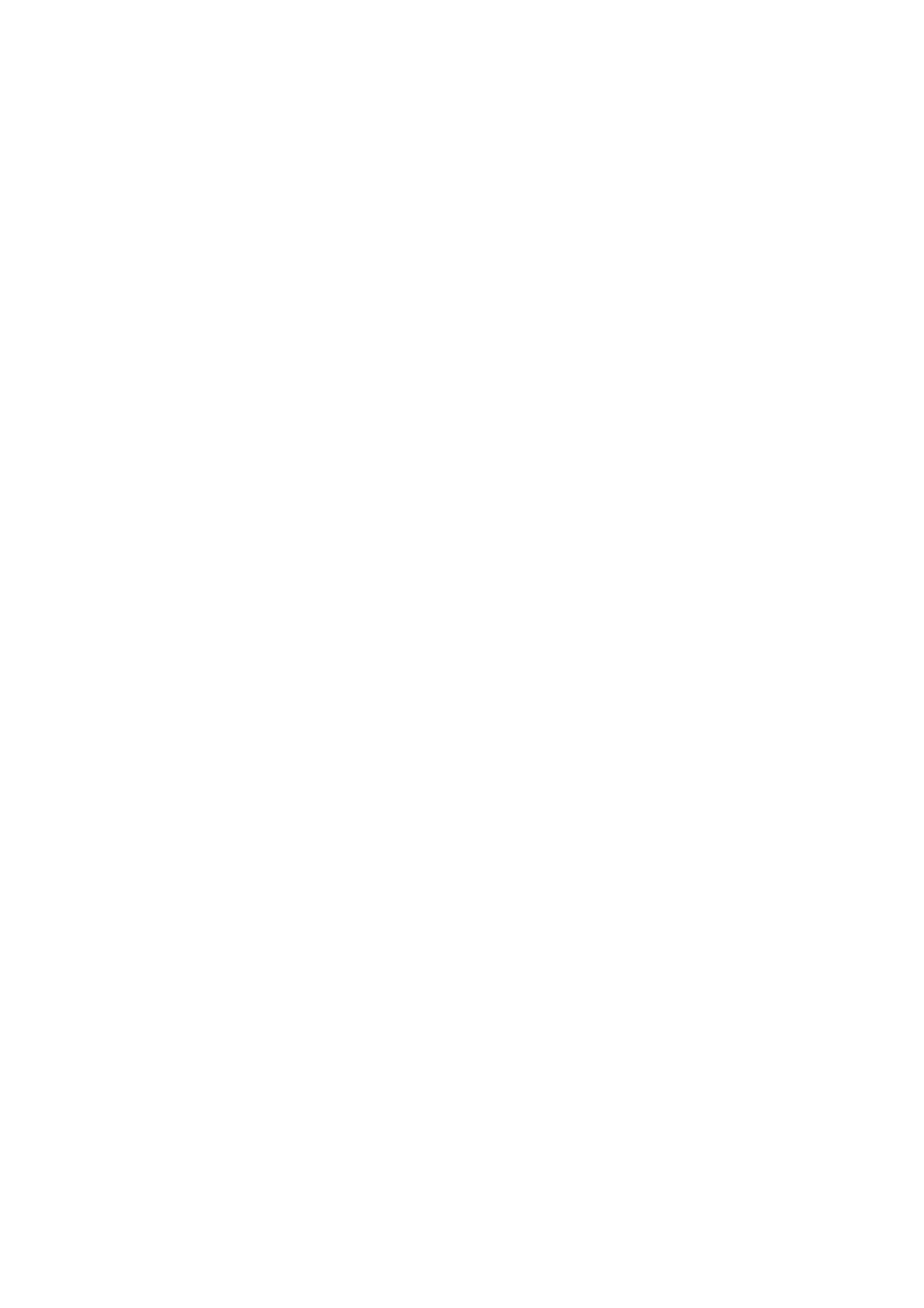}
    \hfil
    \includegraphics[page=3]{figs/connectedcomps.pdf}
    \includegraphics[page=4]{figs/connectedcomps.pdf}
    \hfil
    \includegraphics[page=5]{figs/connectedcomps.pdf}
    \caption{Illustration for embedding choices distinguished by the selection of rotation markers.
    Black disks are 1- and 2-vertex cuts which separate 3-connected components of a prospective planarization.
    Special vertices in 3-connected components are blue and indexed from \(1\) through \(9\).
    \(w\) is the only important vertex for \(v\).
    Components comprising important groups for \(v\) are highlighted.
    To encode (a), all selected rotation markers for edges incident to \(v\) have \(\{\{1\}, \{2,3,4\}, \{5,6\},\{7\}\}\) as first label entry and second label entry would specify the permutation in the written order.
    The exemplary colored edges incident to \(v\) would have the following third label entries in (a): green \(\rightarrow\) 1, red \(\rightarrow\) 1, yellow \(\rightarrow\) 0, brown \(\rightarrow\) 3.
    Here, the third entries of the labels of the selected rotation markers cannot allow us to distinguish the order of the three components between \(v\) and \(w\); they will be 1 for all edges leading into these components.
    Similarly, we cannot distinguish (a) and (b).
    However, our encoding distinguishes (a) and (b) from (c) by selecting a rotation marker with third label entry 4 instead of 3 for the brown edge.
    Encoding (d) corresponds to selecting rotation markers with the same first label entries as for (a), (b) and (c) but different second label entries.
    }
    \label{fig:2-cut}
\end{figure}

We can preprocess the original graph in polynomial time to find vertex cuts of size at most two.
For each edge of \(G\) that is incident to a vertex in such a cut, we attach the following vertices to its endpoints in the framing of \(G\).
We attach one distinct new vertex for 
and with each label of the form \((R, \pi, i)\) where \(R\) is a set of at most \(18k\) subsets of \([6k]\), \(\pi\) is a cyclic permutation on these sets, and \(i\) is the index of a position between at most two consecutive elements of \(\pi\).
We also attach two additional new vertices with labels \textit{first} and \textit{last} respectively to each edge incident to an at-most-two-sized-cut-set vertex.
For semantic reasons that will become clear later, we refer to the new vertices with labels of the form \((R,\pi,i)\) as \emph{rotation markers}, whereas we call the ones with labels \textit{first} or \textit{last} \emph{delimiter flags}.
We denote the resulting \emph{crossing-flip-aware framing} of \((G,\Gamma)\) by \(\overleftrightarrow{F}\).

\begin{restatable}{proposition}{flipframe}
\label{prop:flipframe}
    Let \(\overleftrightarrow{F}\) be a crossing-flip-aware framing of a partially predrawn graph \((G, \Gamma)\).
    The number of labels in \(\overleftrightarrow{F}\) is bounded in \(k\) and \(\operatorname{tw}(\overleftrightarrow{F}) \leq f(\operatorname{tw}(G), k)\), where \(k = \ftpcrn{\mathbb{P}}(G,\Gamma)\) and \(f\) is a computable function and \(\mathbb{P}\) is an arbitrary set of forbidden topological crossing patterns.
\end{restatable}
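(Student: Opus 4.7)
The plan is to address the two claims of \Cref{prop:flipframe} separately, starting with the label count.

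For the number of distinct labels, the only non-trivial ones are those of the form $(R,\pi,i)$, plus the two flag labels \emph{first} and \emph{last}. Here $R$ is a set of at most $18k$ subsets of $[6k]$, so there are at most $\binom{2^{6k}}{18k}$ choices for $R$. Once $R$ is fixed, there are at most $(18k-1)!$ cyclic permutations $\pi$ on the elements of $R$ and at most $18k$ index positions $i$ between consecutive elements of $\pi$. Multiplying these bounds and adding $2$ for the flag labels gives a total depending only on $k$.

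For the treewidth bound, the plan is to start from a tree decomposition $\mathcal T$ of the orientation agnostic framing $F$ of $(G,\Gamma)$ of width $w := f(\tw(G),k)$, as guaranteed by \Cref{lem:framing-bd-tw} (using that $\ppdcrn(G,\Gamma)\leq \ftpcrn{\mathbb P}(G,\Gamma)\leq k$, and that $F$ is a minor of the standard framing so the same bound applies by minor-monotonicity of treewidth as already remarked before the lemma). Now $\overleftrightarrow F$ differs from $F$ only by attaching pendant vertices (rotation markers and delimiter flags) to certain edges $e=uw$ of $F$; each such new vertex has all its neighbors in $\overleftrightarrow F$ contained in $\{u,w\}$. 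I then extend $\mathcal T$ as follows: for every newly attached vertex $v$ attached along an edge $uw$, pick an existing bag $B\in\mathcal T$ with $\{u,w\}\subseteq B$ — which exists because $uw\in E(F)$ — and hang off of it a fresh leaf bag $\{u,w,v\}$. Edge-coverage is preserved since the edges incident to $v$ lie inside the new leaf bag, and the subtree associated with any old vertex simply grows by a single leaf, so the connectivity axiom is preserved as well. All new bags have size at most $3$, so $\tw(\overleftrightarrow F)\leq \max(w,2)$, which is bounded by a function of $\tw(G)$ and $k$.

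The main subtle point is that many pendants may be attached to the same edge $e$; were they all inserted into a single enlarged existing bag, the width would grow with the (label-dependent) number of pendants and thereby also with $k$ in an uncontrolled fashion. Giving each new vertex its own private leaf bag of constant size sidesteps this entirely, as these new leaves are independent of each other in the decomposition tree. Combining the two parts then yields the proposition.
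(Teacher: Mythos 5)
Your proposal is correct and follows essentially the same route as the paper: bound the label count by a function of $k$ alone, then extend a tree decomposition of the framing (via \Cref{lem:framing-bd-tw} and $\ppdcrn(G,\Gamma)\leq\ftpcrn{\mathbb{P}}(G,\Gamma)$) by hanging a fresh small bag off an existing bag containing both endpoints of the relevant edge for each attached marker or flag. Your refinement of using size-$3$ leaf bags rather than enlarged copies of existing bags is a cosmetic improvement over the paper's "+1 to the width" bookkeeping, but the argument is the same.
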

\begin{proof}
Note that the number of labels used for newly attached vertices is \((6k)^{18k} \cdot (18k)! \cdot (18k+1) \in k^{\mathcal{O}(k)}\) and the number of labels in the framing was constant.

Starting from a minimum-width tree decomposition of the framing, add for each new vertex attached to an edge a bag below an arbitrary bag containing both endpoints of that edge and add each new vertex into each one of these bags.
This is a tree decomposition for \(\overleftrightarrow{F}\).
Hence, the treewidth of \(\overleftrightarrow{F}\) is at most one larger than that of \(F\).
Now, the desired bound is a direct consequence of \Cref{lem:framing-bd-tw}, and that by definition \(\ppdcrn(G,\Gamma) \leq \ftpcrn{\mathbb{P}}(G,\Gamma)\).
\end{proof}

In an MSO-formula we can use an existentially quantified vertex set variable \(M\) to select precisely one rotation marker for each of the edges for which they were added, and precisely one marker with label \textit{first} and one with label \textit{last} for the set of edges leading into each 3-connected component.
We will describe the meaning of such a selection for the choice of embedding for a planarization and how we encode it in our MSO-formula \(\varphi\) after indicating the structure of the MSO-formula which we aim for.

The structure of the MSO-formula which we target is as follows where the outermost (i.e.\ the ones that are explicit in the below structure) existentially quantified variables \(x_1, \dotsc, x_k, y_1, \dotsc, y_k\) fix the choices of identified crossing possibilities and \(M\) fixes the choices of markers and delimiter flags as described above.
\begin{align*}
    \varphi := & \exists x_1 \exists y_1 \dotsc \exists x_k \exists y_k \exists M \ \begin{aligned}[t] & \textsf{PartiallyPredrawnPlanarAfterIdentifying}(x_1,y_1, \dotsc, x_k,y_k)\\
    \land \ & \textsf{ValidMarkerFlagChoice}(M,x_1,y_1, \dotsc, x_k,y_k)
    \end{aligned}\\
    & \quad \bigwedge_{(P,\Phi) \in \mathbb{P}} \neg \big(
    \begin{aligned}[t] & \exists z_1 \dots \exists z_{|P|} \textsf{FormAbstractPInPlanarization}(z_1,\dotsc,z_{|P|},x_1,y_1, \dotsc, x_k,y_k)\\
    & \quad \land \textsf{Like\(\Pi\)InEmbeddedPlanarization}(z_1, \dotsc, z_{|P|},M,x_1,y_1, \dotsc, x_k,y_k)\big),\end{aligned}
\end{align*}
\begin{description}
\item[\textsf{PartiallyPredrawnPlanarAfterIdentifying}] denotes the aforementioned formula from the proof of \cite[Lemma~3.7]{DBLP:conf/compgeom/HammH22} that expresses that identifying the crossing possibilities given by the existentially quantified \(x_i,y_i\) indeed results in a drawing extension in which no further crossings are necessary, and 
\item[\textsf{FormAbstractPInPlanarization}] denotes the aforementioned formula to find an occurrence of the abstract graph of a pattern where \(z_1, \dotsc, z_{|P|}\) take the roles of the vertices of this abstract pattern graph.

\item[\textsf{ValidMarkerFlagChoice}] should express a valid selection of rotation markers which allows us to assume specific embedding behavior relative to important vertices in the planarization, specifically planarized crossings and endpoints of crossing edges:
Fix an indexing of all identified crossing possibilities---\(x_1, y_1, \dotsc, x_k, y_k\)---and endpoints in \(V(G)\) of edges in \(E(G)\) turned paths in \(\overleftrightarrow{F}\) on which identified crossing possibilities lie. 
We can access these endpoints using existentially quantified vertex variables and assert in MSO that they are indeed endpoints of a path that corresponds to an edge in \(E(G)\) on which a crossing possibility from the first existentially quantified vertex pair lies.
We call these indexed vertices \emph{special}, and note that there are at most \(6k\) of them (at most \(2k\) identified crossing possibilities and at most \(4k\) endpoints of prospective crossing edges).

The set of all 1- and 2-vertex cuts in the prospective planarization\footnote{In the planarization resulting from the identification of some existentially quantified vertices as prospective crossing it may be that a 1- or 2-vertex cut in the original graph is no longer one in the planarization, so we cannot simply use the set of vertices from the preprocessing step in which we introduced triangles.} can be described straightforwardly via a vertex set variable in MSO which also allows us to speak of the connected components that remain after the removal of all vertices in such cuts.
We call a 3-connected component \emph{important} if it contains a special vertex.
Let \(v \in V(G)\) be in a 1 or 2-vertex cut in the prospective planarization.
We say a vertex \(u\) of \(G\) other than \(v\) in a 1 or 2-vertex cut in the prospective planarization is \emph{important for \(v\)} if there is no path from \(v\) to \(u\)
using a minimum number of 3-connected components and at least one special vertex
from a 3-connected component in the prospective planarization.\footnote{From now on we will usually talk about connectivitiy relationships in the prospective planarization, without explicitly mentioning this every time.}
We call a set of special vertices an \emph{important group for \(v\)} if it is the set of special vertices in a 3-connected component incident to \(v\) or the union of all important vertices in 3-connected components incident to a vertex that is important for \(v\).

By definition, there are at most \(18k\) important groups for \(v\): the important components incident to \(v\) can at most partition \(6k\), and each special vertex can be in 3-connected components incident to at most two important vertices for \(v\), and hence in at most two groups which yields an upper bound of \(12k\) on groups incident to important vertices.
The label \((R,\pi,i)\) of the unique rotation marker with such a label selected for an edge incident to one vertex \(v\) should encode:
(i) in its first entry, the important groups for \(v\); 
(ii) in its second entry, in which rotation order the important groups from the first entry are arranged around \(v\);
and  (iii) in its third entry, an index for which position between these groups the edge lies in the rotation order around \(v\);
    dummy value 0 is used when the edge is from \(v\) to an important component incident to \(v\).

Further, the selection of delimiter flags with label \textit{first} (respectively \textit{last}) for an edge from \(v\) into an important component incident to \(v\) should encode that the edge of which the \textit{first}-labeled (respectively \textit{last}-labeled) vertex is selected is the first (respectively last) to its non-\(v\) endpoint's 3-connected component in the rotation around \(v\).

Describing these desired semantics is possible in MSO as follows.
We encode in MSO that the first entry of the labels of the rotation markers selected for edges incident to one vertex \(v\) must be equal to the important groups for \(v\), which is possible since the definition of groups speaks about the existence of certain paths.
In this way, the first entries of the labels of all selected rotation markers are already uniquely determined after identifying prospective crossings and we merely include them in the labels for convenience to ensure that the available second and third entries are compatible with the correct partition.

The second and third entries of the labels of selected rotation markers encode actual embedding choices and we need to make sure that these embedding choices are made consistently.
This is straightforward for edges incident to a common vertex \(v\):
Since the first and second entries are constrained to be equal for labels of rotation markers selected for such edges, we merely need to disallow the existence of paths from \(v\) to two special vertices in important groups for \(v\) such that the first edge on the path to the special vertex which is first in the cyclic order according to the second label entry is at a strictly larger position according to its selected rotation marker's third label entry than the first edge of the path to the special vertex which is later in the second label entry.
We do not express the comparisons of positions in MSO (which is not part of its expressive power) but instead hard-code this by a disjunction on the conjunction of pairs one of \(f(k)\) labels which results in a formula \textsf{LocallyValidMarkerFlagChoice} whose length is bounded in \(k\).

The only other way in which inconsistencies can occur in the selection of rotation markers is when rotation markers at edges incident to different vertices encode partial rotations that contradict each other.
By construction, this means that there is a cut consisting of two vertices in the prospective planarization and two disjoint paths between these vertices, such that the labels of the rotation markers selected for the first and last edges of these paths describe contradictory positions in the rotations around the two considered cut vertices.
Again, the existence of such paths can be expressed (and hence also forbidden) in MSO with the contradictions between labels encoded via explicit \(f(k)\)-length formulas.
See \Cref{fig:Tconsistent}.

\begin{figure}
    \centering
    \includegraphics{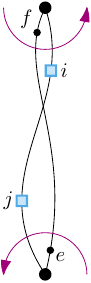}
    \hfil
    \includegraphics[page=2]{figs/incompatible-T-choice.pdf}
    \hfil
    \includegraphics[page=3]{figs/incompatible-T-choice.pdf}
    \caption{Depiction of how incompatible \(M\) choices can look like. In each case the incompatibility must be detectable around the vertices which are the cut vertices around which incompatible vertices where selected in \(M\).
    We can check for the existence of identified crossing possibility or endpoint of crossing edges indices \(i\) and \(j\) which -- by the labels of the inconsistently chosen \(M\) vertices -- force inconsistent embeddings of the paths starting in edge \(e\) and \(f\) (and potentially ending in \(f'\) and \(e'\) respectively).
    Such incompatibilities can be explicitly listed as combinations not allowed to occur among the labels of consistently selected \(M\) vertices.
    This is somewhat unwieldy but possible via \(f(k)\)-length MSO-formulas.
    As an example, situations like the left one can be precluded by forbidding the existence of paths between two vertices of a 2-vertex cut such that one path contains special vertices indexed \(i\) and \(j\) (possibly \(i=j\) where \(i\) is in an important group for one considered cut vertex and \(j\) is for the other and the choice of rotation markers encodes that the first and last edge on the other paths are both before (or both after) the important groups containing \(i\) and \(j\) in the rotations around the considered cut vertices.}
    \label{fig:Tconsistent}
\end{figure}
In the selection of delimiter flags, we can similarly explicitly prohibit contradictory choices of delimiter flags for the same component around different 1- or 2-vertex cut vertices.
Beyond this we need to ensure that the combination of \textit{fist} and \textit{last} delimiter marker are compatible for each component incident to an arbitrary 1- or 2-vertex cut vertex \(v\).
For this, we express in MSO that there must be a path in a 3-connected component under consideration that visits all neighbors of \(v\) in that component and forms a cycle together with the edges whose delimiter flags were selected or that there is only one edge from \(v\) to the component in question.
This MSO-expression can then be conjoined with \textsf{LocallyValidMarkerFlagChoice} to form \textsf{ValidMarkerFlagChoice}.

\item[\textsf{Like\(\Pi\)InEmbeddedPlanarization}] should express that \(z_1, \dotsc, z_{|P|}\), which we know occur in a way that realizes the abstract structure of \(P\) if \(\textsf{FormAbstractPInPlanarization} (\)\(z_1, \dotsc, z_{|P|}\), 
\( x_1, y_1, \dotsc, x_k, y_k)\), also occur in a way that realizes the predrawing \(\Pi\) of \(P\).
We will use the predicate described by the statement of the following lemma.%
\begin{lemma}
\label{lem:3consepmso}
    There is a constant-length MSO-formula \(\textsf{3ConSep}(S,a,b)\) with a free edge set variable \(S\) and two free vertex variables \(a\) and \(b\) that evaluates to true on an arbitrary 3-connected graph \(G\) if and only if the drawings of \(a\) and \(b\) are separated from each other by the drawing of \(S\) in the unique embedding of \(G\).
\end{lemma}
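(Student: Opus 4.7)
The plan is to reduce the topological separation condition to a combinatorial condition about the face cycles of the unique planar embedding of the 3-connected graph $G$ (unique up to reflection by Whitney's theorem), exploiting Tutte's characterization of face cycles to obtain an MSO2 formula of constant length.

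Topologically, the drawing of $S$ separates $a$ from $b$ in the plane if and only if there is no continuous curve from $a$ to $b$ avoiding the drawing of $S$. Combinatorially, such a curve can always be homotoped to track the faces of the embedding, yielding the equivalent condition: there is no sequence of face boundary cycles $F_0, F_1, \ldots, F_k$ of the unique embedding such that $a \in V(F_0)$, $b \in V(F_k)$, and consecutive $F_i, F_{i+1}$ share some edge outside $S$. The key enabler for MSO-expressibility is Tutte's theorem that in a 3-connected planar graph, face cycles are precisely the induced non-separating cycles, so the predicate $\textsf{FaceCycle}(F) := \textsf{Cycle}(F) \land \textsf{Induced}(F) \land \textsf{Conn}(G - V(F))$ is a constant-length MSO2 formula.

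Using $\textsf{FaceCycle}$, I would define $\textsf{3ConSep}(S, a, b)$ by existentially quantifying an edge set $\mathcal{R}$ -- intuitively the union of boundary edges of all faces reachable from $a$-incident faces via non-$S$ face-adjacencies -- and asserting three constant-length clauses:
\begin{enumerate}
    \item[(a)] every facial cycle $F$ with $a \in V(F)$ satisfies $E(F) \subseteq \mathcal{R}$;
    \item[(b)] for any two distinct facial cycles $F, F'$ sharing some edge outside $S$, if $E(F) \subseteq \mathcal{R}$ then $E(F') \subseteq \mathcal{R}$;
    \item[(c)] no facial cycle $F$ with $b \in V(F)$ satisfies $E(F) \subseteq \mathcal{R}$.
\end{enumerate}
Such an $\mathcal{R}$ exists precisely when $b$ is unreachable from $a$ in the face-adjacency structure restricted to non-$S$ edges, which by the topological observation above is equivalent to $S$ separating $a$ from $b$ in the embedding. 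Since all subpredicates are constant-length MSO2, so is the whole formula.

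The main subtlety to navigate is that a seductively simpler condition -- ``$\exists C \subseteq S$ a cycle with $a, b$ in different components of $G - V(C)$'' -- is insufficient even in 3-connected planar graphs. For instance, take a hexagon $C = c_1 c_2 c_3 c_4 c_5 c_6$ with inside vertices $u, v$ where $u$ is adjacent to $c_1, c_2, c_3$ and $v$ is adjacent to $c_4, c_5, c_6$, together with an outside apex $w$ adjacent to all of $V(C)$. This graph is 3-connected and planar; in $G - V(C)$ the vertices $u$ and $v$ form isolated singleton components, yet they lie on the same side of $C$ in the embedding, sharing the boundary of a large inner face bounded by $u c_1 c_6 v c_4 c_3 u$. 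The face-based encoding above avoids this overapproximation and correctly captures the topological regions created by the drawing of $S$.
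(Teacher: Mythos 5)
Your proposal is correct and follows essentially the same route as the paper: both reduce topological separation to reachability in the face-adjacency structure across non-$S$ edges, both make faces MSO-definable via the characterization of face cycles of $3$-connected planar graphs as induced non-separating cycles, and both encode the reachability with the standard set-based MSO connectivity trick. The only (immaterial) differences are that the paper phrases reachability as a sequence of edges of $E(G)\setminus S$ with consecutive ones sharing a face rather than your dual formulation via a closed set of face cycles, and that the paper explicitly conjoins the condition that $a$ and $b$ lie in different components of $G-V(S)$, which your formulation subsumes.
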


Before we argue the correctness of this lemma, let us formalize what we mean by two vertices being separated by an edge set \(S\) (which does not necessarily correspond to a closed simple curve).
In particular, for our purposes when applying \Cref{lem:3consepmso}, we will always be in the situation in which \(S\) is the intersection of a cycle in a supergraph of \(G\) in which \(G\) is a 3-connected component intersected with \(G\).
This means \(S\) will be the edge set of a collection of paths whose endpoints are on one face cycle of \(G\).
We can connect the endpoints of each of these paths in a topologically unique way that the resulting closed simple curves are pairwise non-nesting and the completing curves do not intersect the drawing of \(G\).
We speak about topological separation of \(a\) and \(b\) with respect to these closed simple curves which we denote by \(\mathcal{S}\).
The following lemma forms the basis of the MSO-encoding for \Cref{lem:3consepmso}.
\begin{lemma}
\label{lem:3consep}
Let \(G\) be a 3-connected graph which we consider embedded in the unique way on a sphere and \(a,b \in V(G)\).
Further, let \(\mathcal{S}\) be the drawing of a cycle in \(G\) or a set of pairwise non-nesting simple closed curves that each consist of the drawing of a path in \(G\) which is closed with a curve that does not intersect \(G\) such that all closing curves lie on the same face of \(G\).
    If \(a\) and \(b\) are in different connected components of \(G - V(S)\), they lie on the same side of \(\mathcal{S}\) if and only if there is a sequence of edges in \(E(G) \setminus S\) starting with one that shares a face with \(a\) and ending with one that shares a face with \(b\) such that precisely the pairwise consecutive elements in this sequence share a face.
\end{lemma}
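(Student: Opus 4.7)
The plan is a topological argument on the sphere, exploiting that \(\mathcal{S}\) is a finite union of Jordan curves whose complement in the sphere is a disjoint union of open sides; so \(a\) and \(b\) are on a common side exactly when they can be joined by a simple curve that avoids \(\mathcal{S}\).

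For the forward direction, suppose \(a\) and \(b\) lie in a common side \(R\) of \(\mathcal{S}\). I pick a simple curve \(\gamma \subseteq R\) from \(a\) to \(b\) and perturb it to be in general position with respect to the drawing of \(G\), so that it transversally crosses interiors of edges of \(G\) and avoids every vertex of \(G\) other than \(a\) and \(b\). Since \(\gamma\) avoids \(\mathcal{S}\), none of the edges crossed lies in \(S\); enumerating them in order yields a sequence \(e_1, \dotsc, e_n \in E(G) \setminus S\). Between consecutive crossings, \(\gamma\) is confined to a single face of \(G\), witnessing that \(e_i\) and \(e_{i+1}\) share a face; the initial and final portions of \(\gamma\) analogously exhibit a face shared by \(a\) and \(e_1\), and by \(e_n\) and \(b\). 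Pruning revisits if necessary, we can ensure that precisely the consecutive pairs in the sequence share a face.

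For the backward direction, I reverse this construction. Given the sequence together with the faces \(F^0, \dotsc, F^n\) of \(G\) witnessing the face-sharing conditions, I piece together a curve \(\gamma\) from \(a\) to \(b\) by concatenating short segments through each \(F^i\): from \(a\) through \(F^0\) to a point just inside \(F^0\) near \(e_1\); from near \(e_i\) across \(e_i\) and through \(F^i\) to near \(e_{i+1}\) for \(1 \leq i \leq n-1\); and finally from near \(e_n\) through \(F^n\) to \(b\). If \(F^i\) is any face other than the unique face \(F_0\) of \(G\) containing all closing arcs, then \(F^i\) is not intersected by \(\mathcal{S}\) and lies entirely in one side of \(\mathcal{S}\), so the corresponding segment is automatically confined to that side. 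When \(F^i = F_0\), the closing arcs subdivide \(F_0\) and the segment must be routed within a single subregion of \(F_0 \setminus \mathcal{S}\).

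The principal obstacle is the \(F^i = F_0\) case. Here I exploit that \(G\) is 3-connected, so that \(\partial F_0\) is a simple cycle, together with the non-nesting of the closing arcs. The endpoints of the closing arcs partition the vertices of \(\partial F_0\) lying in \(V(S)\), and together with the non-nesting condition they partition \(\partial F_0\) into maximal stretches whose non-\(S\) edges border a common subregion of \(F_0 \setminus \mathcal{S}\); the face-sharing hypothesis between two non-\(S\) edges of \(\partial F_0\) then forces them to lie on the same stretch and hence border a common subregion, so the segment of \(\gamma\) through \(F_0\) can be chosen to avoid every closing arc. The curve \(\gamma\) so constructed lies in a single region of the sphere minus \(\mathcal{S}\), witnessing that \(a\) and \(b\) are on the same side.
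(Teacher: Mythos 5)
Your overall strategy coincides with the paper's: both directions rest on the observation that \(a\) and \(b\) lie on a common side of \(\mathcal{S}\) exactly when they can be joined by a curve avoiding \(\mathcal{S}\). Your forward direction (perturb a connecting curve into general position and read off the crossed edges) is the geometric counterpart of the paper's dual-path argument and is fine, including the pruning step for the ``precisely'' condition; your backward direction makes explicit the curve construction that the paper only asserts in one sentence.

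The gap is in the backward direction, at exactly the point you single out as the principal obstacle, namely when a witnessing face \(F^i\) equals the face \(F_0\) containing the closing arcs. You claim that ``the face-sharing hypothesis between two non-\(S\) edges of \(\partial F_0\) forces them to lie on the same stretch.'' This does not follow: \emph{any} two edges on the boundary cycle of \(F_0\) share the face \(F_0\) of \(G\), whether or not they are separated by a closing arc inside \(F_0\), because face-sharing is a property of the embedding of \(G\) alone and is blind to the subdivision of \(F_0\) by \(\mathcal{S}\). Concretely, in the wheel with hub \(h\), rim \(v_1,\dots,v_8\) and outer face \(F_0\), let \(\mathcal{S}\) be the single closed curve formed by the path \(v_1 h v_5\) together with a closing arc in \(F_0\): the edges \(v_2v_3\) and \(v_6v_7\) lie in different components of \(F_0\setminus\mathcal{S}\) yet share the face \(F_0\), so the two-element sequence \((v_2v_3, v_6v_7)\) meets all your hypotheses while \(v_3\) and \(v_7\) lie on opposite sides of \(\mathcal{S}\); the segment of \(\gamma\) through \(F_0\) cannot then be routed within one subregion. (The paper's own proof of this direction silently elides the same case, so you have located the genuinely delicate step; but the resolution you offer is not correct as written, and repairing it would require either interpreting face-sharing in the planarization that includes the closing arcs or an additional hypothesis forcing consecutive sequence edges bordering \(F_0\) onto a common stretch.)
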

\begin{proof}
    For the forward direction of the proof, consider \(a\) and \(b\) as being on the same side of \(\mathcal{S}\).
    Note that since the embedding of \(G\) is unique, so is the dual.
    Further, because they are on the same side of \(\mathcal{S}\) there must be a path between a face containing \(a\) and a face containing \(b\) in the dual of the embedding of \(G\) that does not use any edges corresponding to edges in \(S\).
    This path can be translated into a sequence of edges of \(E(G) \setminus S\) starting with one that shares a face with \(a\) and ending with one that shares a face with \(b\) where pairwise consecutive subdivision vertices share a face.
    This concludes the proof of the forward direction.

    For the backward direction, we proceed by contraposition, i.e.\ consider \(a\) and \(b\) being on different sides of \(S\).
    This means that any simple curve avoiding all drawings of vertices and connecting \(a\) and \(b\) must transversely intersect \(\mathcal{S}\).
    At the same time the existence of a sequence of edges as described in the lemma statement would imply the existence of precisely such a curve which does not intersect \(\mathcal{S}\), so it is precluded. 
\end{proof}

With the above lemma in hand, we can turn to the proof of \Cref{lem:3consepmso}.
\begin{proof}[Proof of \Cref{lem:3consepmso}]
The fact that \(a\) and \(b\) are in different connected components of \(G - V(S)\) can easily be encoded in MSO.
According to \Cref{lem:3consep}, we are left with encoding the existence of a sequence of edges in \(E(G) \setminus S\) starting with one that shares a face with \(a\) and ending with one that shares a face with \(b\) such that precisely the pairwise consecutive elements in this sequence share a face.
There are two points we need to address for this.

Firstly, MSO is not aware of faces or what it means to share a face.
This difficulty can be overcome by the well-known fact that faces of embeddings of 3-connected graphs are uniquely determined as being induced cycles whose removal does not disconnect the graph~\cite{DBLP:books/daglib/0030488}.
Two objects being together on some such cycle, i.e.\ sharing a face, can be easily expressed in MSO.

Secondly, MSO cannot directly speak of sequences of arbitrary length in a constant length formula.
However, we can use a standard trick (which is also used to encode connectivity without being able to speak of paths of arbitrary length explicitly).
We express the existence of an unordered set of edges in which for every non-trivial proper subset of that set an edge shares a face with some edge in the complement of that subset, all but two edges in the set share an edge with precisely two other edges in the set, and the two edges that do not share a face with exactly one other edge in the set each and share a face with \(a\) and \(b\) respectively.

Overall this shows how to formulate the \textsf{3ConSep} predicate described in \Cref{lem:3consepmso}.
\end{proof}

\begin{figure}
    \centering
    \includegraphics[page=2]{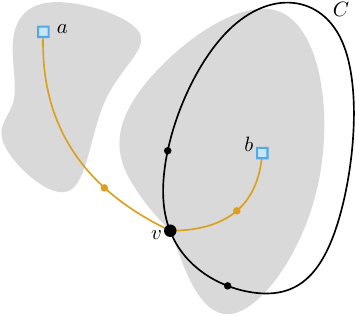}
    \hfil
    \includegraphics{figs/markedpaths.pdf}
    \caption{Illustrations of cases in which there is a 1- or 2-vertex cut including \(v \in V(C)\).
		If \(a\) and \(b\) lie in different 3-connected components than the neighbors of \(v\) (left) then the labels of rotation markers allow us to check whether our selection of rotation markers encodes that the edges from \(v\) to its neighbors on \(C\) alternate with the ordering of the important groups for \(v\) that are between \(v\) and \(a\) and \(v\) and \(b\) respectively.
        In cases where \(a\) or \(b\) (or both) lie in the same 3-connected component as one or both of the neighbors of \(v\) on \(C\) (e.g.\ right), the order of edges into the 3-connected components containing \(a\) or \(b\) is fixed by their selected \textit{first} and \textit{last} delimiter flags in a way that is encodable in MSO: specifically, two edges into one of these components from \(v\) are in the same order as they appear on a path from the non-\(v\) endpoint of the edge of which the \textit{first}-labeled marked was selected to the non-\(v\) endpoint of the edge of which the \textit{last}-labeled marked was selected that contains all other non-\(v\) endpoints of edges from \(v\) into that component.
		In combination with the selected rotation markers this allows us to express that our selection of \(M\) results in the edges from \(v\) to its neighbors on \(C\) alternating with the edges from \(v\) to important groups for \(a\) between \(v\) and \(a\) and \(v\) and \(b\) respectively.}
    \label{fig:separating}
\end{figure}
\begin{figure}
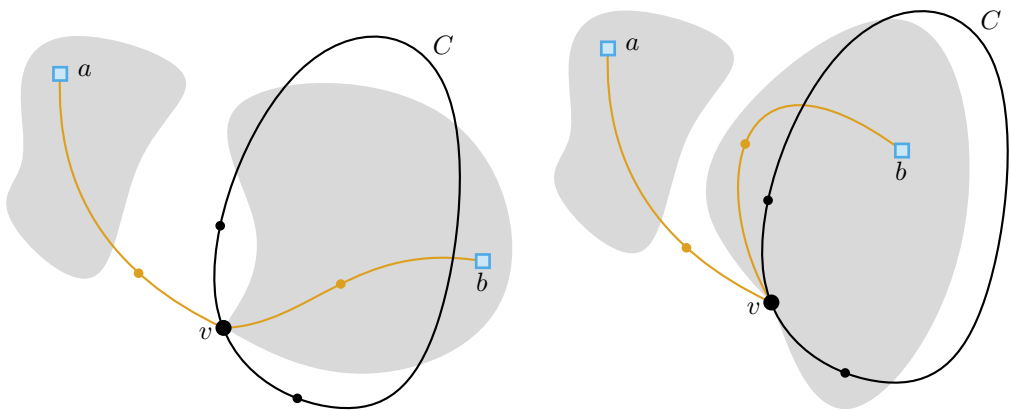

    \centering
    \includegraphics[page=4]{figs/markedpaths.pdf}
    \hfil
    \includegraphics[page=3]{figs/markedpaths.pdf}
    \caption{Left: Case in which an alternating rotation between connections from \(v\) to \(a\) and \(b\) with neighbors on \(C\) does not imply that \(C\) topologically separates \(a\) and \(b\).
    In this case we additionally need to check that the first vertex on the path to \(b\) which is between the edges of \(C\) incident to \(v\) is not topologially separated from \(b\) by edges in \(C\) in the 3-connected component containing \(b\) using \textsf{3ConSep}.
    Right: Case in which \(C\) topologically separates \(a\) and \(b\) but the first edge of a \(v\)-\(b\)-path is not between the edges of \(C\) incident to \(v\) in the rotation around \(v\).
    In this case we need to check whether the non-\(v\) endpoint of the first edge of such a path is topologically separated from \(b\) by edges in \(C\) in the 3-connected component containing \(b\) using \textsf{3ConSep}.}
    \label{fig:combiningbetweensep}
\end{figure}

\textsf{Like\(\Pi\)InEmbeddedPlanarization} can be composed of (negations of) building blocks that a cycle \(C\) in the prospective planarization given by the cyclic sequence of its vertices (which we have access to as vertex variables) topologically separates two vertices \(a\) and \(b\) (also given as vertex variables) in any drawing conforming to the existentially quantified choices of identified crossing possibilities and rotation markers.
These blocks and their negations can be composed in a way that is specific to \((P,\Pi)\).
Note that, because we require each connected component of the predrawn subgraph of a forbidden pattern to contain a planarized crossing or the endpoint of a crossing edge, we can always choose \(a\) and \(b\) to be such vertices.

\begin{proposition}
\label{prop:3Csep}
    If all of \(C\) is in a 3-connected component, there is a constant length MSO-formula expressing that \(C\) topologically separates \(a\) and \(b\).
\end{proposition}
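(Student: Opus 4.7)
The plan is to reduce to the predicate \textsf{3ConSep} from \Cref{lem:3consepmso} applied inside the 3-connected component $K$ of the prospective planarization that contains $C$. First, I existentially quantify a vertex set $V_K$ and assert, via the already-discussed MSO encoding of 1- and 2-vertex cuts, that it is precisely the 3-connected component containing all (constantly many) vertex variables of $C$. In parallel, I existentially quantify an edge set $S \subseteq E(K)$ and assert via the bounded cyclic sequence of vertex variables of $C$ that it is the edge set of $C$; both of these assertions are of constant length.

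Next, I take \textsf{3ConSep}$(S, a, b)$ from \Cref{lem:3consepmso} and relativize each of its (set and individual) quantifiers to $V_K$ and $E(K)$. Because $K$ is 3-connected its embedding is unique, so by \Cref{lem:3consepmso} the relativized formula correctly characterizes topological separation of $a$ and $b$ by $S$ in that embedding whenever $a, b \in V_K$. Since relativization only adds constant overhead to a constant-length formula, the overall formula remains of constant length.

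The main technical obstacle is that $a$ or $b$ may lie outside $V_K$. In that case the side on which such a vertex sits relative to $C$ is determined by the 2-vertex cut attaching its branch to $K$, together with the rotation marker data selected earlier by $M$. I handle this via a bounded case-split in MSO: I existentially locate the 2-vertex cut $\{u, u'\} \subseteq V_K$ separating the branch containing $a$ (respectively $b$) from the rest of $K$, apply \textsf{3ConSep} to test whether $u$ and $u'$ lie on the same side of $S$ in the embedding of $K$, and -- if not -- consult the rotation markers at $u$ and $u'$ to decide which of the two sides of $C$ hosts the branch. Combined over $a$ and $b$ this still yields a constant-length formula, as desired.
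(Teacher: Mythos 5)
Your overall shape --- restrict attention to the 3-connected component $K$ containing $C$, invoke \textsf{3ConSep} there, and treat vertices outside $K$ through the 2-cut attaching their branch --- matches the intent of the paper's proof, but the decisive step is missing. The paper splits on where $a$ and $b$ live: if neither is in $K$ it declares $C$ non-separating; if both are in $K$ it calls $\textsf{3ConSep}(C,a,b)$ directly; and if exactly one (say $a$) is in $K$ it tests $\textsf{3ConSep}(C,a,x)$ against an \emph{exit vertex} $x$ of $K$ lying on an $a$--$b$-path, disjoined with a face-cycle condition ($K-C$ connected and $C$ vertex-separating $b$ from all of $K$) for the degenerate case in which every such exit lies on $C$. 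In other words, the external vertex is replaced by a representative inside $K$, which is then compared against $a$ with the same predicate.

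Your version never performs that comparison. Testing $\textsf{3ConSep}(S,u,u')$ on the two vertices of the cut $\{u,u'\}$ attaching $a$'s branch is uninformative: in any planar embedding the branch sits in a face of $K$ incident to both $u$ and $u'$, so they are cofacial and can never be strictly separated by $C$ --- the predicate is false whenever the question is well-posed, and the ``if not'' case you then hand to the rotation markers is really only the case where $u$ or $u'$ lies on $C$ itself. More importantly, even once you know which side of $C$ hosts $a$'s branch, you must compare it with the side hosting $b$ (or $b$'s branch), and your sketch specifies no formula doing so; ``combined over $a$ and $b$'' hides exactly the content of the proposition. Note also that the paper gets through this case without consulting $M$ at all --- rotation markers only enter later, when $C$ itself passes through a 1- or 2-vertex cut --- and that your proposal silently drops the case in which neither $a$ nor $b$ lies in $K$, which the paper resolves by asserting that $C$ cannot separate them there.
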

\begin{proof}
If all of \(C\) is in a 3-connected component in which neither of \(a\) and \(b\) is, then it is obvious that the drawing of \(C\) cannot separate the drawings of \(a\) and \(b\), so we can explicitly exclude this behavior via a negated MSO-formula when formulating our building blocks.
If all of \(C\) is in a 3-connected component in together both \(a\) and \(b\), we can trivially invoke \(\textsf{3ConSep}(C,a,b)\) 
 from \cref{lem:3consepmso}.
Lastly, if all of \(C\) is in a 3-connected component with exactly one of \(a\) and \(b\) (without loss of generality \(a\)) we can invoke a disjunction of \(\textsf{3ConSep}(C,a,x)\) over all \(x\) in the 3-connected component of \(C\) that lie on an \(a\)-\(b\)-path in the planarization and the following:
The 3-connected component without \(C\) is connected (in this case it is a face cycle in the unique embedding of that 3-connected component) and \(C\) vertex-separates \(b\) from each vertex in the 3-connected component (then it is a face cycle separating \(a\) and \(b\) in a topological sense.
\end{proof}

We are left with the case in which there is a 1- or 2-vertex cut vertex \(v\) on \(C\).
Our building block formulas can check conditions involving \(M\), that encode that connections from \(v\) to \(a\) and from \(v\) to \(b\) alternate with the edges of \(C\) in the rotation around \(v\).
See \Cref{fig:separating}.
These conditions alone do not yet imply that the drawing of \(C\) separates those of \(a\) and \(b\).
However, they can be combined with \textsf{3ConSep} predicates from \Cref{lem:3consepmso} to express exactly this.
See \Cref{fig:combiningbetweensep}.
\end{description}

Overall, this implies the encodability of the existence of a drawing extension of \((G,\Gamma)\) with at most \(k\) crossings without a realization of a forbidden pattern in \(\mathbb{P}\) via an MSO-formula on \(\overleftrightarrow{F}\) whose length is bounded in \(k\), \(|\mathbb{P}|\) and \(\max_{P \in \mathbb{P}} |P|\).
\begin{lemma}
\label{lem:totalmsoencoding}
    Given a set of topological crossing patterns \(\mathbb{P}\), there is an MSO-formula \(\varphi\) whose length is bounded in \(k\), \(|\mathbb{P}|\) and \(\max_{P \in \mathbb{P}} |P|\) such that \(\varphi\) can be satisfied on a crossing-flip-aware framing \(\overleftrightarrow{F}\) of any partially predrawn graph \((G,\Gamma)\) if and only if \(\ftpcrn{\mathbb{P}}(G,\Gamma) \leq k\).
\end{lemma}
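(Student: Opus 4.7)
The plan is to assemble the MSO-formula $\varphi$ exactly as sketched in the displayed structure preceding the lemma, verify that each conjunct has length bounded in the claimed parameters, and then argue both directions of the equivalence via the semantics attached to the existentially quantified variables. For the length bound, I would point to the already established facts: \textsf{PartiallyPredrawnPlanarAfterIdentifying} is the $k$-bounded-length formula imported verbatim from the proof of Lemma~3.7 of \cite{DBLP:conf/compgeom/HammH22}; \textsf{ValidMarkerFlagChoice} was constructed as a conjunction of the $f(k)$-length local consistency formula \textsf{LocallyValidMarkerFlagChoice} with the two-cut consistency expression and the hamiltonian-cycle-style compatibility of delimiter flags, each of which explicitly hard-codes comparisons over the $k^{\mathcal{O}(k)}$ label set; \textsf{FormAbstractPInPlanarization} is the standard MSO encoding of fixed-size (subdivided) subgraph containment and thus has length bounded in $|P|$; and \textsf{Like$\Pi$InEmbeddedPlanarization} is built from a number of instances, bounded by a function of $|P|$, of the building blocks of \Cref{prop:3Csep} and the 1-/2-cut extensions illustrated in \Cref{fig:separating,fig:combiningbetweensep}, each of which uses only constant-length \textsf{3ConSep} invocations together with $f(k)$-length label enumerations. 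Taking the outer conjunction over $\mathbb{P}$ introduces a multiplicative factor of $|\mathbb{P}|$, so the total length of $\varphi$ is bounded by a function of $k$, $|\mathbb{P}|$ and $\max_{P\in\mathbb{P}} |P|$.

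For the $(\Leftarrow)$ direction I would take a drawing extension $\mathcal{G}$ of $(G,\Gamma)$ witnessing $\ftpcrn{\mathbb{P}}(G,\Gamma)\leq k$ and read off the witnesses for the existential variables: the at most $k$ pairs $(x_i,y_i)$ are the pairs of subdivision vertices on the same original edge of $G$ whose images coincide with the actual crossings in $\mathcal{G}$, and $M$ is assembled by selecting, for each edge incident to a 1- or 2-cut vertex $v$, the unique rotation marker whose label $(R,\pi,i)$ records the important groups for $v$ in $\mathcal{G}^\times$, their cyclic order, and the position of this edge among them, and similarly for delimiter flags. By construction of these encodings, \textsf{PartiallyPredrawnPlanarAfterIdentifying} and \textsf{ValidMarkerFlagChoice} hold, and no forbidden pattern can be found because $\mathcal{G}$ is $\mathbb{P}$-free.

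For the $(\Rightarrow)$ direction, given satisfying witnesses I would argue that they reconstruct a $\mathbb{P}$-free drawing extension with $\leq k$ new crossings. The satisfaction of \textsf{PartiallyPredrawnPlanarAfterIdentifying} yields, via \cite[Lemma~3.7]{DBLP:conf/compgeom/HammH22}, a topological drawing extension of $(G,\Gamma)$ in which the identified crossing possibilities are the crossings, totalling at most $k$ beyond those in $\Gamma$. The local and global consistency conditions inside \textsf{ValidMarkerFlagChoice} guarantee that the rotation-marker and delimiter-flag labels collectively describe a coherent choice of partial rotations at every 1- and 2-cut vertex relative to the at most $6k$ special vertices, which in turn pins down the only embedding freedom (Whitney flips at 1- and 2-cuts) that affects the topological separation behaviour of the special vertices in the planarization. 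Since in every such drawing extension compatible with $M$ the third conjunct forbids any $(P,\Pi)\in\mathbb{P}$ from occurring, by \Cref{lem:3consepmso} and the separation characterization in \Cref{prop:3Csep} combined with the 1-/2-cut cases, $\mathbb{P}$-freeness of the resulting drawing holds, so $\ftpcrn{\mathbb{P}}(G,\Gamma)\leq k$.

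The main obstacle I anticipate is justifying that the conjunct \textsf{Like$\Pi$InEmbeddedPlanarization} really captures topological separation faithfully in all cases: the building blocks of \Cref{prop:3Csep} only handle cycles that live inside a single 3-connected component, so for a prospective pattern cycle $C$ passing through several 3-connected components one has to stitch together \textsf{3ConSep} calls across the 1- and 2-cut vertices on $C$ using exactly the rotation and delimiter information recorded by $M$. Verifying that the stitched formula is both sound (no spurious separations declared) and complete (every genuine topological separation is detected) requires a careful case analysis that mirrors \Cref{fig:separating,fig:combiningbetweensep}, and it is the only place where the semantic correctness of $M$ -- rather than just its syntactic consistency -- is actually used. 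Once this case analysis is carried out, the rest of the lemma follows by routine bookkeeping over the structural definition of $\varphi$.
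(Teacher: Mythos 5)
Your proposal matches the paper's proof in structure and substance: both bound the length of $\varphi$ by appealing to the previously established building blocks, read off witnesses for the existential variables from a witnessing drawing in one direction, and in the other direction combine the imported correctness of \textsf{PartiallyPredrawnPlanarAfterIdentifying} with the consistency conditions of \textsf{ValidMarkerFlagChoice} to argue that the marker choices are realizable by an actual embedding (the paper phrases this as a contradiction via ``unaccounted'' crossings, which is the same argument you state directly). The obstacle you flag about stitching \textsf{3ConSep} across 3-connected components is exactly what the paper also defers to \Cref{lem:3consepmso} and \Cref{prop:3Csep}, so no gap relative to the paper's own level of detail.
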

\begin{proof}
We construct \(\varphi\) as described in the course of this subsection.
The desired bound on its length in \(k\), \(|\mathbb{P}|\) and \(\max_{P \in \mathbb{P}} |P|\) holds by construction and the length bounds on the utilized building blocks described in \Cref{lem:3consepmso} and \Cref{prop:3Csep}.

Fix a crossing-flip-aware framing \(\overleftrightarrow{F}\) of an arbitrary partially predrawn graph \((G,\Gamma)\).

It remains to show that \(\varphi\) is satisfiable on \(\overleftrightarrow{F}\) if and only if there is a drawing of \(G\) witnessing \(\ftpcrn{\mathbb{P}}(G,\Gamma) \leq k\).

For the forward direction, assume there is a satisfying assignment of \(x_1, \dotsc x_k, y_1, \dotsc, y_k, M\) for \(\varphi\).
By definition of \(\varphi\) this means \(\textsf{PartiallyPredrawnPlanarAfterIdentifying}(x_1, y_1, \dotsc, x_k,y_k)\) holds.
\(\textsf{PartiallyPredrawnPlanarAfterIdentifying}(x_1, y_1, \dotsc, x_k,y_k)\), by the correctness the proof of \cite[Lemma~3.7]{DBLP:conf/compgeom/HammH22}, implies that there is a drawing of \(G\) witnessing that \(\ppdcrn(G,\Gamma) \leq k\).
We claim that there is also such a drawing which respects the semantics of the choices made by \(M\).

Assume for contradiction that this is not the case, i.e.\ conforming to the semantics of the choices made by \(M\) implies a crossing which is not planarized by the identification of the crossing possibility pairs \(x_1, y_1, \dotsc, x_k, y_k\).
We call such a crossing \emph{unaccounted}.
Because \(G\) is connected and an unaccounted crossing must be forced by \(M\), the edges involved in an unaccounted crossing must be connected by a path which has a vertex \(v\) in a 1- or 2-vertex cut of \(G\) in which the pairs of \(x\) and \(y\) vertices with the same index in \(x_1, y_1, \dotsc, x_k, y_k\) are identified.

By well-known characterizations of all possible embeddings of planar graphs, the only different
embedding choices of a planar graph come from consistent rotation permutations and mirroring of components at 1-vertex and 2-vertex cuts~\cite{MacLane,DBLP:reference/crc/Patrignani13}.
\(M\) partially specifies these permutations and mirrorings; the only way in which it can force unaccounted crossings is by inconsistent specifications.
The unaccounted crossing cannot be forced by \(M\)-choices for edges incident to a single such vertex \(v\), i.e. at a 1-vertex cut, as \(\varphi\) contains in a conjunction \textsf{ValidMarkerFlagChoice} which in turn contains in a conjunction \textsf{LocallyValidMarkerFlagChoice} which precludes precisely the combination of choices in \(M\) for edges incident to a single vertex resulting in unaccounted crossings.
At 2-vertex cuts, the edges involved in a hypothetical unaccounted crossing must be connected by paths which each have a vertex \(v\) in the 2-vertex cut of the prospective planarization in which \(x_1, y_1, \dotsc, x_k, y_k\) are identified.
However, this is precluded by the second part of \textsf{LocallyValidMarkerFlagChoice} which is contained in \(\varphi\) conjunctively.
This means we get a contradiction to an unaccounted crossing existing and there is a drawing of \(G\) that witnesses \(\ppdcrn(G,\Gamma) \leq k\) which respects the semantics of the choices made by \(M\).
In any\footnote{Notice that by construction, \(M\) fixes all embedding choices relative to planarized crossings and endpoints of crossing edges which are the only ones relevant for the existence of topological crossing patterns we target in our framework.} such a drawing the existence of a topological crossing pattern \(P \in \mathbb{P}\) is precluded by 
\begin{align*}& \neg \big(\exists z_1 \dots \exists z_{|P|} \textsf{FormAbstractPInPlanarization}(z_1,\dotsc,z_{|P|},x_1,y_1, \dotsc, x_k,y_k)\big)\\
    & \quad \land \textsf{Like\(\Pi\)InEmbeddedPlanarization}(z_1, \dotsc, z_{|P|},M,x_1,y_1, \dotsc, x_k,y_k)\big).\end{align*}
In particular, the intended semantics of \textsf{FormAbstractPInPlanarization} can be achieved by standard and folklore techniques and the intended semantics of \textsf{Like\(\Pi\)InEmbeddedPlanarization} holds by the specification of its building blocks from \Cref{lem:3consepmso} and \Cref{prop:3Csep}.
Thus such a drawing witnesses \(\ftpcrn{\mathbb{P}}(G,\Gamma) \leq k\) as desired.

For the backward direction assume a drawing witnessing \(\ftpcrn{\mathbb{P}}(G,\Gamma) \leq k\).
It is straightforward to verify that choosing \(x_1, y_1, \dotsc x_k, y_k\) as the subdivision vertices of edges that occur in a pair of the at most \(k\) crossings and \(M\) according to the rotation order around 1- and 2-vertex cut vertices in the planarization of the drawing and the orientation of its \(3\)-connected components results in an assignment of its existential variables which makes \(\varphi\) true in \(\overleftrightarrow{F}\).
\end{proof}

Together with the reduction to bounded treewidth from \Cref{sec:bd-tw} and Courcelle's theorem this establishes our main algorithmic result, \Cref{thm:ftpcr-fpt}.
\begin{proof}[Proof of \Cref{thm:ftpcr-fpt}]
Let \((G,\Gamma)\) be a partially predrawn graph and \(k \in \mathbb{N}\).
While \(\tw(G) > g(k,\max \{|P| \mid P \in \mathbb{P}\})\) where \(g\) is as in \Cref{lem:bd-tw}, we can apply \Cref{lem:bd-tw} to correctly determine that \(\ftpcrn{\mathbb{P}} > k\), or find an edge whose deletion from the partially predrawn graph does not change whether its \(\ftpcrn{\mathbb{P}}\) is at most \(k\).
This can be iterated at most \(|E(G)|\) times, and each iteration runs in \FPT-time with respect to \(k\) and \(\max \{|P| \mid P \in \mathbb{P}\}\) according to \Cref{lem:bd-tw}.
After having done this, we have correctly decided \(\ftpcrn{\mathbb{P}} > k\), or have reduced to an equivalent instance \((G,\Gamma)\) with \(\tw(G) \leq g(k,\max \{|P| \mid P \in \mathbb{P}\})\).
In the former case, there is nothing left to do.
In the latter, we can construct (by definition in polynomial time) the crossing-flip-aware framing \(\overleftrightarrow{F}\) of \((G,\Gamma)\).
By \Cref{prop:flipframe}, \(\tw(\overleftrightarrow{F}) \leq f(\tw(G),k)\) and by our obtained treewidth-bound for \(G\),  overall \(\tw(\overleftrightarrow{F}) \leq h(k, \max \{|P| \mid P \in \mathbb{P}\})\) for some computable function \(h\).
We can construct the MSO-formula \(\varphi\) from the statement of \Cref{lem:totalmsoencoding} as described in this subsection in \FPT-time parameterized by \(k\), \(|\mathbb{P}|\) and \(\max \{|P| \mid P \in \mathbb{P}\}\).
The length of \(\varphi\) is bounded in \(k\), \(|\mathbb{P}|\) and \(\max \{|P| \mid P \in \mathbb{P}\}\) by \Cref{lem:totalmsoencoding}.
Courcelle's theorem can be used to decide \(\varphi\) on \(\overleftrightarrow{F}\) in \FPT-time parameterized by \(|\varphi|\) and \(\tw(\overleftrightarrow{F})\), i.e.\ using our previously argued bounds \FPT-time parameterized by \(k\), \(|\mathbb{P}|\) and \(\max \{|P| \mid P \in \mathbb{P}\}\).
By \Cref{lem:totalmsoencoding}, this is equivalent to deciding \(\ftpcrn{\mathbb{P}}(G,\Gamma) \leq k\).
\end{proof}

\section{W[1]-Hardness of \textsc{Straight-Line Planarity Extension}}
\label{sec:hardness}

In this section, we show that the straight-line constraint makes the problem of just inserting a small number of edges much harder than for topological constraints, even without crossings. 

\begin{restatable}{theorem}{thmwonehard}
\label{thm:W1hard}
\textsc{Straight-Line Planarity Extension} is W[1]-hard. 
\end{restatable}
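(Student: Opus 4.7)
The plan is to prove \textsf{W[1]}-hardness via a parameterized reduction from \textsc{Multicolored Clique}, which is known to be W[1]-hard parameterized by the number $k$ of color classes. From an instance $(H, V_1 \sqcup \cdots \sqcup V_k)$, I construct in polynomial time a partially predrawn plane straight-line graph $(G,\Gamma)$ with $|E(G)\setminus E(\Gamma)| \in \mathcal{O}(k^2)$ such that $(G,\Gamma)$ is a yes-instance of \textsc{Straight-Line Planarity Extension} if and only if $H$ admits a $k$-clique containing exactly one vertex from each color class.

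The construction proceeds in three layers. First, I design $k$ \emph{vertex-selection gadgets}: for each color $i$, the predrawing contains an anchor point $p_i$ together with predrawn straight segments that leave only $|V_i|$ narrow angular ``slot'' directions around $p_i$ unobstructed, one for each vertex in $V_i$. A single new vertex $v_i$ connected to $p_i$ by one new edge is then forced to lie inside exactly one such slot, so its placement encodes a choice of a vertex in $V_i$. Second, for each pair $i<j$, I add the single new edge $v_iv_j$ and, for each non-edge $xy \notin E(H)$ with $x\in V_i$ and $y\in V_j$, I place a short \emph{blocker} segment into $\Gamma$ inside the unique narrow corridor in the plane along which a straight segment from the slot of $x$ to the slot of $y$ would have to pass. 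Consequently, $v_iv_j$ can be inserted as a plane straight-line segment if and only if the slots chosen for $v_i$ and $v_j$ correspond to an edge of $H$. Third, I arrange the anchor points in convex position at geometrically separated scales so that the corridor systems for the various pairs $\{i,j\}$ live in pairwise disjoint strips and do not interfere. The total number of new vertices and edges is $k + \binom{k}{2} \in \mathcal{O}(k^2)$, a function of the parameter, which yields a valid parameterized reduction. Correctness is then direct: given a multicolored clique $\{x_1,\dots,x_k\}$, placing each $v_i$ inside the slot for $x_i$ produces a valid plane straight-line extension; conversely, every valid extension forces each $v_i$ into some slot $x_i \in V_i$, and the fact that $v_iv_j$ crosses no blocker forces $x_ix_j \in E(H)$ for every pair.

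The main obstacle will be the geometric realizability of this layout: I must simultaneously arrange the $k$ slot fans so that (i) from each slot at $p_i$ the straight-line visibility toward each slot at $p_j$ is a single narrow corridor that can be individually blocked by a single predrawn segment, (ii) corridors for different pairs $\{i,j\}$ are mutually disjoint, and (iii) no blocker for one pair accidentally obstructs a corridor of another. I plan to handle this by placing the anchors $p_1,\dots,p_k$ in convex position with exponentially growing distances, and by subdividing the fan at each $p_i$ into $k-1$ angular sub-sectors, one per other anchor $p_j$, each of which is further subdivided into $|V_i|$ slot directions aimed approximately toward $p_j$. A separation-of-scales argument then guarantees that blockers for a pair $\{i,j\}$ can be made short enough to intersect only their intended corridors, while the convex-position choice prevents corridors of different pairs from overlapping. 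Carrying out this delicate geometric arrangement, together with a careful verification that no unintended straight-line extension exists, will be the technically hardest part of the proof.
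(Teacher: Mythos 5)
Your reduction has a fatal structural flaw that is independent of all the geometric details: the set of edges you insert contains the complete graph $K_k$ on the selection vertices $v_1,\dotsc,v_k$ (you add the edge $v_iv_j$ for every pair $i<j$). Since $K_5$ is not planar, for every $k\ge 5$ the graph $G$ you construct admits \emph{no} plane drawing whatsoever, straight-line or otherwise, so every instance your reduction produces is a no-instance regardless of whether the \textsc{Multicolored Clique} instance has a solution. No choice of anchor positions, scales, or corridor geometry can rescue this; in particular, with the anchors in convex position the segments $v_1v_3$ and $v_2v_4$ are forced to cross. This is precisely why the paper reduces from \textsc{Grid Tiling} instead: there the ``communication pattern'' among the $k^2$ selection vertices $\xi_{i,j}$ is a planar grid (each cell talks only to its four neighbours and four local filtering vertices), consistency of the selections is propagated along rows and columns via the VH gadgets, and validity of the selected tile is enforced locally by the filtering gadgets. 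If you want to keep \textsc{Multicolored Clique} as the source problem you would need a planar propagation mechanism replacing the direct edges $v_iv_j$, at which point you have essentially rebuilt \textsc{Grid Tiling}.

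A secondary problem, which would remain even if the non-planarity issue were fixed: your description of the selection gadget is internally inconsistent. In one place each anchor $p_i$ has $|V_i|$ slots (one per vertex of $V_i$); in another, the fan at $p_i$ is subdivided into $k-1$ sub-sectors, one per other anchor, each containing $|V_i|$ slots. The single point $v_i$ can lie in only one slot of one sub-sector, so in the latter version its position does not encode a choice that is visible to the other $k-2$ gadgets. In the former version you must argue that one tiny placement region simultaneously yields, toward each of the $k-1$ other gadgets, a narrow beam that is disjoint from the beams of the other $|V_i|-1$ placement regions, and that all $\binom{k}{2}$ resulting corridor grids together with their blockers are mutually non-interfering. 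The paper confronts the analogous (but easier, constant-degree) difficulty with explicit quantitative bounds (Lemmas on channel shadows, one-column/one-core intersection, and the VH gadget); your ``separation of scales'' claim would need comparable quantitative substantiation.
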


We present a parameterized reduction from the \W[1]-hard {\sc{Grid Tiling}} problem~\cite{DBLP:conf/focs/Marx07a}.

\textsc{Grid Tiling} takes as input a $k \times k$ grid in which each cell $(i,j)$ contains a $m\times m$ subgrid with some \emph{valid tiles} (given as subset $S_{i,j} \subseteq [m] \times [m]$), where \(m \in \mathbb{N}\) and \(k \in \mathbb{N}\) is the parameter.
The question is whether there is a way to pick one tile per cell so that in each row and column, the picked tiles have the same second and first coordinate, respectively.

\subsection{Reduction overview}
Given an instance of {\sc{Grid Tiling}} we construct a straight-line plane graph of linear size and polynomial vertex coordinates. 
This is the predrawn graph $\Gamma$ of the {\sc Straight-Line Planarity Extension}. 
We denote by $H$ the subgraph of the whole graph $G$ induced by the edges that are not part of $\Gamma$.
While in general, \(G = H + \Gamma\) is not true by definition in general, it will be the case in the instances produced by our reduction.
In our construction, $H$ is a graph of size $O(k^2)$. %

\begin{figure}
    \centering
    \includegraphics[page=2,width=\linewidth]{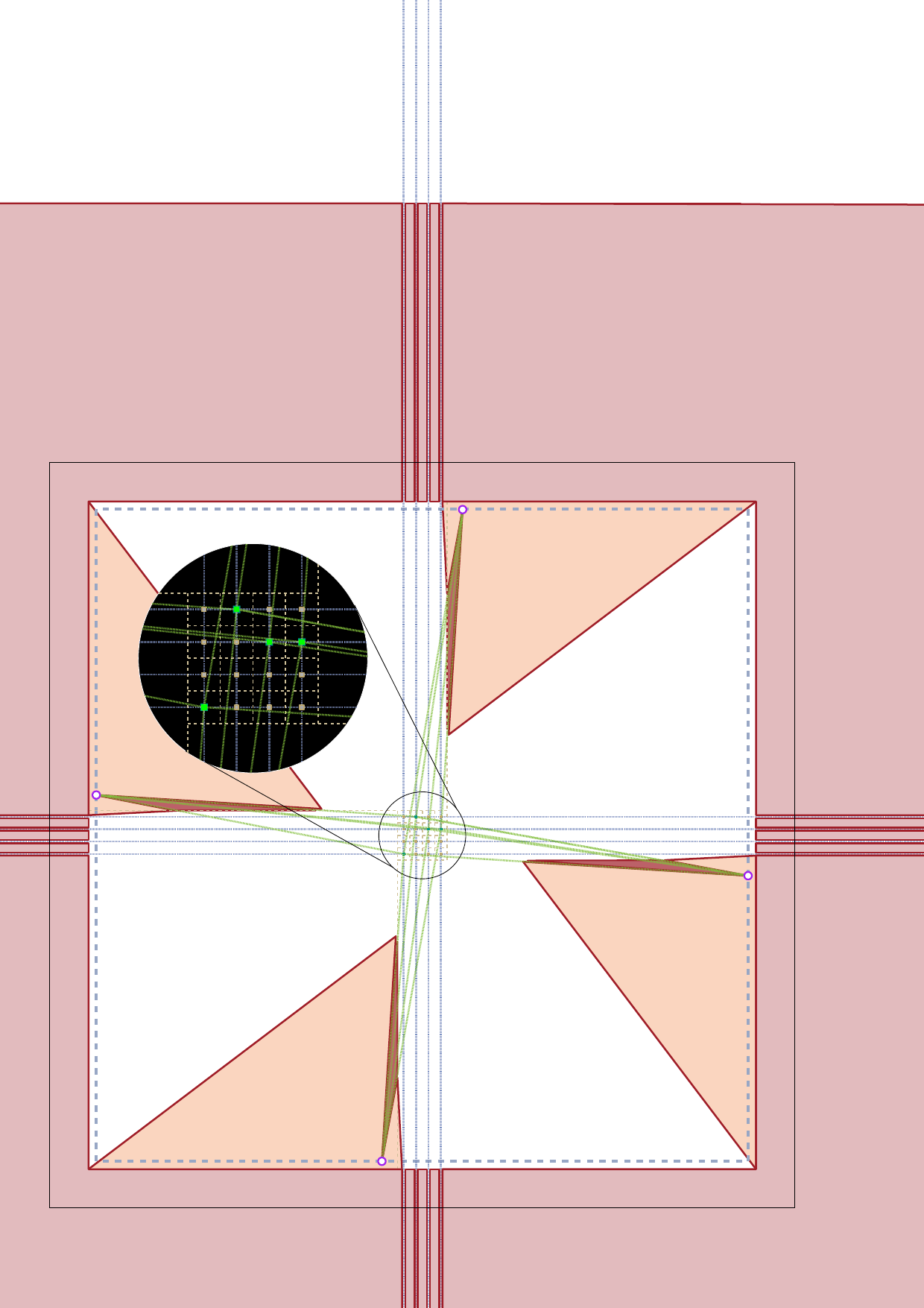}
    \caption{Illustration of the reduction (not up to scale). The shaded red and orange areas represent obstacles that are effectively blocked by predrawn edges. Dotted lines represent connections allowed by the obstacles, while dashed lines help gadget placement.
    The central dashed square separates the inner and middle layers of the cell. 
    The outer red obstacles lie in the outer layer.
    The only regions in a cell that admit non-blocked  connections to the four purple filtering vertices and straight-line connections to neighboring cells are in the green regions/dots in the core of the cell.}
    \label{fig:reduction-cell}
\end{figure}

The high-level idea is that
for each grid cell $(i,j) \in [k] \times [k]$
the predrawn graph $\Gamma$ defines certain \emph{core regions} that are in one-to-one correspondence with tiles; 
see the central small squares in \cref{fig:reduction-cell}. 
We show that in any solution to the partially predrawn rectilinear crossing number instance that we construct, certain vertices $\xi_{i,j}$ of $H$ must be drawn each in a region corresponding to a valid tile of cell $(i,j)$. 
This is done by including in $H$ four connections to vertices on the sides of the cell.
Building obstacles that constrain these connections to only allow to use valid core regions corresponding to valid tiles requires a very careful geometric construction and argumentation. 
An additional challenge is that we cannot block vertical and horizontal connections from $\xi_{i,j}$, as we need those to ensure that the $\xi_{i,j}$'s are drawn in core regions which are vertically and horizontally aligned across the grid.
With another set of predrawn \emph{obstacles} we can guarantee 
such alignment,
yielding a valid solution for the {\sc{Grid Tiling}} instance.
In the other direction, from a solution to the  {\sc{Grid Tiling}} instance we can easily construct a solution to the \textsc{Straight-Line Planarity Extension} instance by placing the vertices $\xi_{i,j}$ in the centers of the core regions that correspond to the \textsc{Grid Tiling} solution; see \cref{fig:reduction-solution}. 

Edges can be added to connect the obstacles, implying \W[1]-hardness of \textsc{Partially Predrawn Rectilinear Crossing Number}. %

\begin{figure}
    \centering
    \includegraphics[page=3,width=\linewidth]{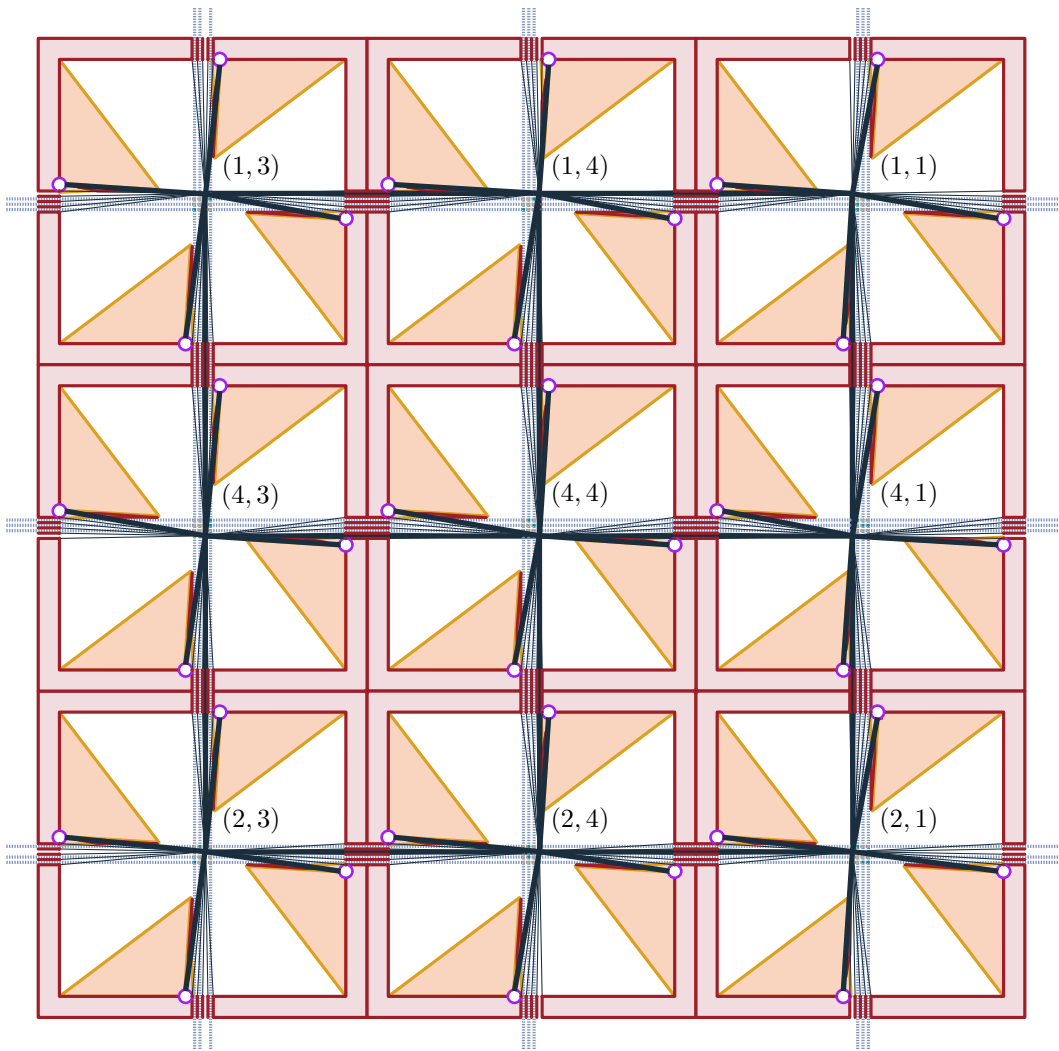}
    \caption{In black, a solution to the \textsc{Straight-Line Planarity Extension} instance produced by the reduction, corresponding to a {\sc{Grid Tiling}} solution.}
    \label{fig:reduction-solution}
\end{figure}

\subsection{Predrawn graph}
We first describe the predrawn graph $\Gamma$ constructed from a given instance of {\sc{Grid Tiling}}.

\paragraph*{Obstacles and channels}
We restrict certain areas of the plane using predrawn edges. 
In particular, any closed polygonal chain defines an \emph{obstacle}. 
In the following, we are going to use obstacles to form \emph{channels} through which edges can pass; see~\cref{fig:channel} (left). 
For two obstacles having two parallel boundary pieces that bound the long sides of an empty rectangle between them, we call an inclusion-maximal such rectangle a \emph{channel}. 
The \emph{length} and \emph{width} of the channel are the length of the longest and of the shortest side, respectively.
For two obstacles sharing a vertex and having two incident boundary pieces that bound the longest leg and the hypotenuse of an empty triangle between them, 
we also call an inclusion-maximal such a triangle a \emph{channel}. 
The \emph{length} and \emph{width} of the channel are the length of the longest and of the shortest leg, respectively.

Crucially, channels constrain where the edges passing through them can have their endpoints outside or on the vertices of the channel. 
Such edges must be contained in what we call the \emph{funnel of the channel}. 
Given a channel defined by a rectangle $R$ with distinct side lengths, 
consider the diagonals that pass through two opposite corners.
The two diagonals define two double wedges: one with larger angles and one with smaller ones. 
The {funnel of the channel} is the union of the double wedge with smaller angles and the rectangle $R$; see~\cref{fig:channel} (right). 
For a triangular channel the funnel of the channel can be obtained by prolonging the hypotenuse and the longest leg to rays emanating from the shared vertex. 
The funnel is the region bounded by the two rays that contains the channel.

Let $\tau$ be a line parallel to a shortest side of the channel, at distance $h$ from the channel 
and in particular, at distance $h$ from a shortest side of it. 
We denote as the \emph{shadow} at distance $h$ the length of the \emph{shadow segment}, which is the intersection of the funnel of the channel and $\tau$.
Figure~\ref{fig:channel} illustrates this description.

\begin{figure}
    \centering
    \includegraphics[page=4,width=\textwidth]{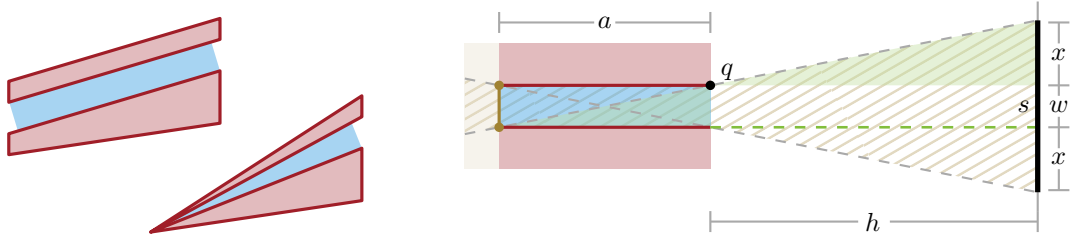}
    \caption{Left: channels (blue) of obstacles (red). Right: illustration of how a channel 
    restricts the visibility of any point placed on the brown line segment or left of it.
    The funnel of the channel is shown with stripes. Segment $s$ is the shadow segment at distance $h$.}
    \label{fig:channel}
\end{figure}

The following lemma will be useful to bound how narrow or long a channel needs to be for the construction.

\begin{lemma}
    \label{lem:channel}
    Given a channel $C$ with length $a$ and width $w$, with $a > w$,  
    the shadow at distance $h$ is at most $w\cdot\left(2\frac{h}{a} + 1\right)$.
\end{lemma}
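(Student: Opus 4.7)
\medskip

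The plan is to handle the two cases of channel shape (rectangular and triangular) separately using elementary similar-triangles arguments, and to set up coordinates so that the shadow segment coincides with the intersection of a vertical line with the funnel.

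For the \emph{rectangular} case, I would place the defining rectangle with corners at $(\pm a/2, \pm w/2)$, so that $\tau$ is the vertical line $x = a/2 + h$ (the situation is symmetric). The funnel, by definition, is the double wedge cut out by the two diagonals of the rectangle together with the rectangle itself. The diagonals pass through $(\pm a/2, \mp w/2)$ and $(\pm a/2, \pm w/2)$; the ones bounding the funnel on the right side of the rectangle go through the corners $(a/2, \pm w/2)$ from the opposite corners $(-a/2, \mp w/2)$. By similar triangles, their $y$-coordinates at $x = a/2 + h$ are $\pm \tfrac{w}{2} \cdot \tfrac{a/2 + h}{a/2} = \pm \tfrac{w}{2}(1 + 2h/a)$. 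The shadow segment is the vertical segment between these two points, of length exactly $w\bigl(1 + 2h/a\bigr)$, matching the claimed bound (with equality).

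For the \emph{triangular} case, I would place the shared vertex at the origin with the longest leg along the positive $x$-axis, so that the channel is the triangle with vertices $(0,0)$, $(a,0)$, $(a,w)$ (a right triangle by the conventions of "leg" and "hypotenuse"). The hypotenuse is then the segment from $(0,0)$ to $(a,w)$, and the shortest side is the vertical segment at $x=a$. The funnel, by definition, is the wedge bounded by the two rays from the origin along the longest leg and along the hypotenuse, namely $y = 0$ and $y = (w/a)x$. The shadow segment is the intersection of this wedge with the vertical line $\tau\colon x = a + h$, which is the segment from $(a+h, 0)$ to $(a+h, w(a+h)/a)$, of length $w(1 + h/a)$. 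Since $h \geq 0$, this is at most $w(1 + 2h/a)$, giving the desired bound (with slack).

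I do not expect a serious obstacle here; the only mild subtlety is making sure one picks the correct pair of diagonals in the rectangular case (namely the ones whose continuation past a short side of the rectangle bounds a wedge containing the rectangle, i.e., the smaller-angle wedge, which is precisely how the funnel was defined since $a > w$). The argument is essentially just two similar-triangles computations, with the rectangular case being the tight one.
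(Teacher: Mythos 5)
Your proposal is correct and follows essentially the same similar-triangles argument as the paper, just phrased in explicit coordinates: both derive the exact shadow lengths $w\left(1+2\frac{h}{a}\right)$ for rectangular channels and $w\left(1+\frac{h}{a}\right)$ for triangular ones, with the rectangular case tight. No meaningful difference in approach or content.
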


\begin{proof}
Let $q$ be a corner of $C$
such that the shadow segment $s$ is at distance $h$ from it, 
and in the case of a triangular channel, also such that $q$ is not incident to the right angle of the channel. 
Refer to Figure~\ref{fig:channel} (right) for an illustration.
Consider the right triangle $T$ in the funnel of the channel with a vertex on $q$ and bounded by $s$, by the funnel boundary, and by a line through $q$ perpendicular to $s$. 
Let $x$ be the length of the side of $T$ opposite from $q$.
Triangle $T$ is similar to the channel in case it is triangular, and it is similar to any triangle defined by three corners of the channel in case it is rectangular. 
Therefore, by triangle similarity, $\frac{w}{a} = \frac{x}{h}$ and $x = \frac{wh}{a}$.
By construction, the length of the shadow segment $s$ is $2x+w = \frac{2wh}{a} + w = w\cdot\left(2\frac{h}{a} + 1\right)$ if the channel is rectangular, and 
$x+w = \frac{wh}{a} + w = w\cdot\left(\frac{h}{a} + 1\right)$ if the channel is triangular.
\end{proof}

\paragraph*{Grid cells}
We consider a $k \times k$ grid. 
Each grid cell $[i,j]$ can be thought of as having three layers, defined by (virtual) concentric squares, which we assume to be axis-aligned; see Figure~\ref{fig:reduction-cell}.%

\subparagraph{Inner layer.}
This layer, which we will refer to as the \emph{core}, is bounded by the innermost (virtual) square, which has a side length of $m^3$.
This square is subdivided into a (virtual) regular $m\times m$ subgrid, which we denote as the \emph{core grid}.
The core is where the correspondence to the subgrid in the \textsc{Grid Tiling} instance occurs. 
More precisely, in the core of grid cell $[i,j]$ there are $m^2$ relevant \emph{core regions} which are squares of side length $1$ arranged in the $m\times m$ subgrid, one per cell and placed centrally. 
We denote by \emph{core region} $(i',j')$ (of grid cell $[i,j]$) the one in the $i'$-th row and the $j'$-th column of the subgrid.
The core regions are in one-to-one correspondence with tiles in the \textsc{Grid Tiling} instance. 
A \emph{valid core region} is one that corresponds to a valid tile in $S_{i,j}$.

Note that there are no predrawn parts in the inner layer or around these regions, they arise from the restrictions imposed by other gadgets in the grid cell; see~\cref{fig:reduction-cell}.
    
\subparagraph{Middle layer.}
The middle layer is constructed around another (virtual) concentric square with side length $2\ceil{m^{7.5}} + m^3$.
On the four sides of the square there are four \emph{filtering vertices} $v_{i,j}^l$ (left), $v_{i,j}^t$ (top), $v_{i,j}^r$ (right), and  $v_{i,j}^b$ (bottom), labelled in clockwise order starting from the left side of the square. 
The {filtering vertices} are
at distance $m^{7.5} - m^6$
to the next clockwise square corner.

Other obstacles ensure that for grid cell $[i,j]$ the connections from a {filtering vertex} to a vertex in a core region is only possible if the region corresponds to a valid tile in $S_{i,j}$. 
We refer to these regions as \emph{valid} core regions.
The idea is to use channels to control connections between the {filtering vertices} and the core. 
This requires a careful construction of obstacles, as we will need that the only 
regions reachable from the four {filtering vertices} lie in valid core regions.
Moreover, we need to place the obstacles so that they do not block vertical and horizontal connections from the core regions. 
We call such set of obstacles the \emph{filtering gadget}.

We start by describing the bottom filtering gadget for the bottom {filtering vertex} $v_{i,j}^b$. 
The rest are analogous.
To describe the filtering gadget, we consider a (virtual) vertical line $r$, which is the supporting line of the left side of the (virtual) square 
bounding the core, at horizontal distance $m^6$ from $v_{i,j}^b$.

We denote as \emph{core grid point} $(i',j')$ with $(i',j') \in [m] \times [m]$, the central point of the $(i',j')$ core region.
For a core grid point $(i',j')$, we denote with $\rho_{i',j'}$ the ray emanating from $v_{i,j}^b$ containing $(i',j')$.
We consider only the rays emanating from $v_{i,j}^b$ and containing the center point of valid core regions. 
Thus, there are at most $m^2$ such rays.
For each such ray, let %
$q_{i',j'}$ a vertex at the intersection point with $r$.

For each $q_{i',j'}$, we create two vertices on $r$: 
one $\frac{1}{2m^3}$ units above $q_{i',j'}$ and one $\frac{1}{2m^3}$ units below $q_{i',j'}$. 
If the distance $d$ between two intersections $q_{i',j'}$ (above) and $q_{i'',j''}$ (below) was smaller than $\frac{3}{2m^3}$, 
instead of the vertices being placed $\frac{1}{2m^3}$ units below $q_{i',j'}$ and $\frac{1}{2m^3}$ units above $q_{i'',j''}$, 
we place them $d/3$ units below $q_{i',j'}$ 
and $d/3$ units above $q_{i'',j''}$, respectively.\footnote{This is never the case: it is possible to show that the distances between ray intersection points on $r$ are always larger than $\frac{3}{2m^3}$ for $m$ large enough. 
It follows from the fact that for $m$ large enough, the smallest possible angle between rays any two rays in our construction is the one formed by $\rho_{1,1}$ and $\rho_{2,1}$. 
But since the proof of this fact is rather technical, to simplify the presentation we use a construction whose validity that does not rely on this fact.}

We construct obstacle boundary pieces connecting $v_{i,j}^b$ with each $q_{i',j'}$.
Tracing along $r$, we close the obstacles by connecting consecutive points along $r$ without crossing any of the rays containing the center point of valid core regions. 
To close the topmost boundary piece, 
we connect the topmost $q_{i',j'}$ with a vertex $w_1$ with the same $x$-coordinate as $v_{i,j}^b$ 
and at distance $\ceil{m^{7.5}} - m^6$ to the left.
Similarly, to close the bottommost boundary piece, 
we connect the bottommost $q_{i',j'}$ with a vertex $w_2$ with the same $x$-coordinate as $v_{i,j}^b$ 
and at distance $m^6 + \frac{m^2-1}{2}$ to the left.
The two vertices $w_1$ and $w_2$ are connected to $v_{i,j}^b$. 

The obstacles in a filtering gadget define at most $m^2$ triangular channels of 
width smaller than $\frac{1}{m^3}$. 
We can bound the width of any funnel of a channel of the filtering gadget in the core region. 

\begin{lemma}
\label{lem:connection-shadow}
The funnel of a channel of the filtering gadget inside the core is contained in a strip of width $\frac{3}{m}$.
\end{lemma}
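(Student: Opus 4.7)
The plan is to show that inside the core the width of the funnel of any filtering gadget channel is at most $1/m^3$, which is trivially at most $3/m$; this can be read off the wedge geometry of the channel without invoking the full shadow formula of \Cref{lem:channel}.

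By symmetry I focus on the bottom filtering gadget and fix a channel $C$ with apex $v_{i,j}^b$ and base on line $r$. By construction the base has length at most $1/m^3$, since it connects the two vertices placed at distance $1/(2m^3)$ above and below the corresponding intersection point $q_{i',j'}$. Hence the width of $C$ satisfies $w \leq 1/m^3$.

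Next I would verify that the entire core lies strictly between the apex and the base of $C$, that is, the maximum distance from $v_{i,j}^b$ to any point of the core is at most the length $a$ of $C$, where $a$ is the distance from $v_{i,j}^b$ to $q_{i',j'}$. Using the approximate coordinates $v_{i,j}^b \approx (m^6, -m^{7.5})$, the fact that the core is a square of side $m^3$ centered at the cell center, and the fact that $q_{i',j'}$ has $y$-coordinate of order $m^{4.5}$, a direct computation shows that both distances are of order $m^{7.5}$ and that the maximum core distance is strictly smaller than $a$. The funnel of $C$, which is the wedge emanating from $v_{i,j}^b$ bounded by the two rays extending the sides of $C$, coincides with $C$ on the apex side of the base, so at any point at distance $d$ from the apex with $d \leq a$ its width is $w \cdot d/a \leq w$. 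Thus the funnel inside the core has width at most $w \leq 1/m^3 \leq 3/m$, and is contained in a strip of width $3/m$ parallel to the ray $\rho_{i',j'}$.

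The main obstacle will be the purely geometric bookkeeping: rigorously verifying that every point of the core is closer to $v_{i,j}^b$ than $q_{i',j'}$ is, using the construction's exact placements rather than approximate coordinates. However, the slack between the target bound $3/m$ and the actual bound of order $1/m^3$ is so large that any order-of-magnitude estimate suffices; indeed, even a cruder argument capping the wedge's maximum width inside the core by the base width $w$ closes the gap.
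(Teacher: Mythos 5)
There is a genuine gap, and it is the central geometric point of the lemma. Your argument rests on the claim that the entire core lies \emph{between} the apex $v_{i,j}^b$ and the base of the channel on the line $r$, so that the funnel never gets wider than the channel's width $1/m^3$. But $r$ is by construction the supporting line of the \emph{left side} of the square bounding the core, and $v_{i,j}^b$ lies at horizontal distance $m^6$ on the other side of $r$. Hence the core lies entirely \emph{beyond} the base of the channel, in the part of the funnel wedge that diverges away from the apex. There the funnel's width is not capped by the channel width $w$; it grows linearly in the distance from the apex, which for points of the core is of order $m^{7.5}$ while the channel itself only has length of order $m^6$ (up to $m^{7.5}$ depending on how one measures, but the paper only certifies $\geq m^6$). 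Consequently the funnel's width inside the core is of order $w\cdot m^{8}/m^{6} = 1/m$, i.e.\ roughly a factor $m^2$ larger than your claimed bound of $1/m^3$. Your final inequality $1/m^3 \le 3/m$ is numerically harmless, but the intermediate claim it rests on is false, and the ``cruder argument capping the wedge's maximum width inside the core by the base width $w$'' that you offer as a fallback fails for the same reason.

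This is precisely why \Cref{lem:channel} exists and why the lemma's bound is $3/m$ rather than $1/m^3$: the intended proof lower-bounds the channel length by $m^6$ (using that the channel's long sides make an angle of more than $\pi/4$ with the horizontal while spanning horizontal distance $m^6$), upper-bounds the distance from the filtering vertex to any core point by $m^8$, and then applies the shadow bound to get $\frac{1}{m^3}\bigl(2\frac{m^8}{m^6}+1\bigr) = \frac{2}{m}+\frac{1}{m^3} < \frac{3}{m}$. A side remark: your assertion that $q_{i',j'}$ has $y$-coordinate of order $m^{4.5}$ is also off --- the intersection points on $r$ sit at height of order $m^{7.5}$ (only their deviation from the core grid point's height is of order $m^{4.5}$) --- but this does not affect the main issue.
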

\begin{proof} 
We start by showing that a channel of a filtering gadget has length at least $m^6$. 
Consider a triangular channel of the bottom filtering gadget. 
Observe that any of its two long sides are at an angle of strictly more than $\frac{\pi}{4}$ 
with respect to the horizontal direction. 
Thus, by projecting its hypotenuse onto a horizontal line,
we obtain a length of at least $\sqrt{2}m^6$. 
The length of the channel is in fact slightly smaller than the length of the hypotenuse, but the difference is of order $O(\frac{1}{m^3})$.
Thus, for large enough $m$, the length of the channels is at least $m^6$ and their width is by construction smaller than $\frac{1}{m^3}$.

The distance from 
a {filtering vertex} to any point in a core region is upper bounded by $\sqrt{(m^6 + m^3)^2 + (\ceil{m^{7.5}} + m^3)^2} < m^8$ for $m$ large enough. 
By Lemma~\ref{lem:channel}, the shadow of the channel at distance $m^8$ is at most $\frac{1}{m^3} \cdot (2 \frac{m^8}{m^6} + 1) = \frac{2}{m} + \frac{1}{m^3} < \frac{3}{m}$ for $m>1$.
\end{proof}

There are two crucial properties of the channels defined by filtering gadgets. 

\begin{figure}
    \centering
    \includegraphics[page=6]{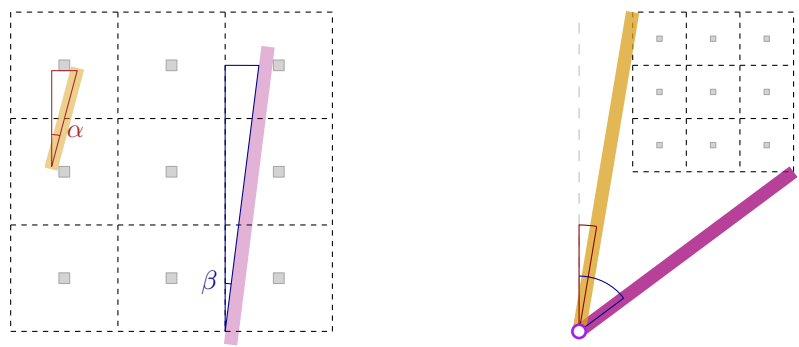}
    \caption{Illustration of the proofs of the properties of the filtering gadget.}
    \label{fig:connection-properties}
\end{figure}

\begin{lemma}
\label{lem:one-col}
  The funnel of a channel of a filtering gadget in a grid cell intersects exactly one row or one column of the core grid of the cell.
\end{lemma}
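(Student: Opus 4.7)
The plan is to combine Lemma~\ref{lem:connection-shadow} with the large spacing of the core grid. Fix a filtering gadget at, say, the bottom vertex $v_{i,j}^b$, and one of its channels $C$. By construction, $C$ is aimed along a ray $\rho_{i',j'}$ from $v_{i,j}^b$ through a core grid point $(i',j')$. I will argue that the funnel of $C$ meets the core inside a single column of the core grid; the symmetric argument for the top gadget also yields a column, while the left and right gadgets yield a row.

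First I would invoke Lemma~\ref{lem:connection-shadow} to conclude that, inside the core, the funnel of $C$ is contained in a strip of width at most $3/m$. Next I would record that the core is an $m^3\times m^3$ square subdivided into an $m\times m$ grid of cells of side $m^2$, so consecutive rows and consecutive columns of the core grid are $m^2$ apart. For $m\geq 2$ the strip width $3/m$ is already vastly smaller than $m^2$.

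The remaining step is to argue that the strip is \emph{nearly vertical}, so that its horizontal extent (and not its vertical extent) is the one bounded essentially by $3/m$. Since $v_{i,j}^b$ lies at vertical distance $\Theta(m^{7.5})$ below the core but at horizontal distance $O(m^6+m^3)$ from any point of the core, each ray from $v_{i,j}^b$ into the core makes angle at most $\arctan(1/m^{1.5})$ with the vertical. A short calculation then shows that over the vertical extent $m^3$ of the core this tilt contributes an additional horizontal span of at most $m^{3}\cdot m^{-1.5} = m^{1.5}$, giving total horizontal extent $m^{1.5}+3/m$ for the strip inside the core. Because $\rho_{i',j'}$ passes through the centre of its target core region, which coincides with the horizontal centre of its column, the strip is confined to horizontal distance at most $(m^{1.5}+3/m)/2$ from that column centre, which is strictly below the column half-width $m^2/2$ for $m$ large enough. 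Hence the funnel meets only that one column.

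The main obstacle, and the step I expect to warrant the most care, is the orientation argument: checking that the tilt of the strip together with its perpendicular width never pushes it across a column boundary. This is exactly what the careful choice of the middle-layer side length $2\lceil m^{7.5}\rceil + m^3$ and the filtering vertex offset $m^{7.5}-m^6$ is designed to achieve, as it forces the rays from bottom and top filtering vertices to be nearly vertical (and the rays from left and right filtering vertices to be nearly horizontal), keeping the strip's extent well within one column, respectively row, of the core grid.
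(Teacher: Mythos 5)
Your proof is correct and follows essentially the same route as the paper's: both invoke Lemma~\ref{lem:connection-shadow} for the width-$3/m$ strip and then bound the strip's tilt against the vertical via the ratio of the horizontal offset $O(m^6+m^3)$ to the vertical offset $\lceil m^{7.5}\rceil$, comparing the resulting horizontal excursion across the core to the column spacing $m^2$ (the paper phrases the same comparison as $\tan\beta^* < \tan\beta$ for an explicit threshold angle $\beta$). The only nitpick is the unjustified factor $\tfrac{1}{2}$ in ``$(m^{1.5}+3/m)/2$'' --- the drift is one-sided when the target core region lies near the top or bottom of the core --- but since $m^{1.5} < m^2/2$ for $m \geq 5$ the conclusion is unaffected.
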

 
\begin{proof}
 We consider without loss of generality a bottom filtering gadget. 
 What we will in fact show is that the funnel of the channel defined by a valid tile 
 enters the core grid intersecting its bottom (and top) boundary in the column that corresponds to the valid tile.
 Looking at the core, it suffices to ensure that the slopes of the channels are not smaller than a certain threshold. 
 Equivalently, we will show that the angles defined by channels and a vertical line are always small enough. 
 
 Recall that, by Lemma~\ref{lem:connection-shadow}, the funnel of a channel of the filtering gadget on the core is contained in a strip of width $\frac{3}{m}$.
 To determine the slope threshold, we consider a strip of width $\frac{3}{m}$ 
 with the right side passing through the topmost leftmost point of a core region in the top row of the core grid 
 and with the left side passing through the bottommost leftmost point $p$ of the same core grid column; see Figure~\ref{fig:connection-properties} (purple, left).
 
 We define a right triangle $T$ bounded by the strip, by the left boundary of the core grid column, and by the line passing though the centers of the core regions on the top row. 
 Let $\beta$ denote the angle at $p$.
 Note that making the strip pass by through $p$ and by any other point of a core region would only result in an angle larger than $\beta$. 
 
 We now prove that the strips that contain the funnels of the channels of the filtering gadget define acute angles smaller than $\beta$, which implies the lemma statement.
 The vertical side of $T$ has length $(m-1)m^2 + \frac{m^2}{2}$.
 The horizontal of $T$ has length lower bounded by
 $\frac{m^2}{2} - \frac{3}{m} - \frac{1}{2}$. 
 Thus, $ \frac{\frac{m^2}{2} - \frac{3}{m} - \frac{1}{2}}{(m-1)m^2 + \frac{m^2}{2}} \le \tan \beta $. 
 
 We now consider the maximum acute angle $\beta^*$ formed with a vertical line by a strip of width $\frac{3}{m}$ bounding a gadget channel funnel, as in Lemma~\ref{lem:connection-shadow}. 
 We can get an upper bound for $\tan \beta^*$ by considering a strip though the {filtering vertex} and the bottom right corner of the core grid; see Figure~\ref{fig:connection-properties} (purple, right). 
 Thus, 
 for $m$ large enough, $\tan \beta^* < \frac{m^6 + m^3}{\ceil{m^{7.5}}} < \frac{\frac{m^2}{2} - \frac{3}{m} - \frac{1}{2}}{(m-1)m^2 + \frac{m^2}{2}} \le \tan \beta $. Therefore, since $0 \le \beta, \beta^* \le \frac{\pi}{2}$, it follows that $\beta^* < \beta$ as desired.
\end{proof}

\begin{lemma}
\label{lem:one-core}
  A channel of a filtering gadget in a grid cell intersects exactly one core region of the cell.
\end{lemma}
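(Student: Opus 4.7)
The plan is to apply Lemma~\ref{lem:one-col} to restrict attention to a single column (or row) of the core grid, and then show via a geometric estimate that the channel, being very narrow, can only intersect one core region within that column. Without loss of generality, consider a channel of a bottom filtering gadget; the other three gadgets are symmetric. Lemma~\ref{lem:one-col} implies the funnel of the channel, hence the channel itself, lies inside a single column $j'$ of the core grid. By construction, this channel is the one surrounding the ray $\rho_{i',j'}$ from $v_{i,j}^b$ through the center of core region $(i',j')$, so core region $(i',j')$ is intersected. What remains is to exclude intersections with other core regions $(i'',j')$ of the same column $j'$ for $i'' \neq i'$.

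\textbf{Geometric core of the argument.} The key quantity is the angle $\theta$ that $\rho_{i',j'}$ makes with the vertical. Writing $\Delta x$ and $\Delta y$ for the horizontal and vertical distances from $v_{i,j}^b$ to $(i',j')$, the placement of the bottom filtering vertex gives $\Delta x \geq m^6$ while $\Delta y \leq \ceil{m^{7.5}} + m^3$, and therefore $\tan\theta \geq m^{-1.5}$ for sufficiently large $m$. At the height of $(i'',j')$, the ray $\rho_{i',j'}$ is horizontally offset from that core region's center (which has the same $x$-coordinate as $(i',j')$) by exactly $|i''-i'| \cdot m^2 \cdot \tan\theta \geq m^{0.5}$. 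Combining this with Lemma~\ref{lem:connection-shadow}, according to which the funnel occupies only a strip of width $3/m$ inside the core, every point of $(i'',j')$ lies at horizontal distance at least $m^{0.5} - 1/2 - 3/(2m)$ from any point of the funnel. For $m$ sufficiently large this expression is strictly positive, so no point of $(i'',j')$ lies in the funnel, and in particular none lies in the channel.

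\textbf{Expected main obstacle.} The bulk of the work is bookkeeping. One must verify the bound $\tan\theta \geq m^{-1.5}$ uniformly over all rays defining channels, which requires tracking how $v_{i,j}^b$ is offset from the core in combination with the placement of the core grid points, and one must check the horizontal offset bound not just at the center height $y_{i''}$ but throughout the vertical extent $[y_{i''}-1/2, y_{i''}+1/2]$ of the core region. Since the offset varies linearly with height and the dominant term $m^{0.5}$ comfortably exceeds $1/2 + 3/(2m)$ for $m$ beyond a constant threshold, extending the estimate to this small height range is routine and mirrors the implicit large-$m$ assumption already used in Lemmas~\ref{lem:connection-shadow} and~\ref{lem:one-col}.
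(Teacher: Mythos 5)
Your proof is correct and takes essentially the same route as the paper's: reduce to a single column via Lemma~\ref{lem:one-col}, then combine the width-$\frac{3}{m}$ strip from Lemma~\ref{lem:connection-shadow} with the lower bound $\tan\theta \ge m^6/(\ceil{m^{7.5}}+m^3)$ on the tilt from vertical to conclude that the funnel drifts horizontally by roughly $m^{0.5} \gg 1$ between the heights of consecutive core regions. The paper packages this as an angle-threshold comparison $\alpha < \alpha^*$ rather than your direct offset computation, and your stated bound $\tan\theta \ge m^{-1.5}$ is off by a $1+o(1)$ factor (strictly $\tan\theta < m^{-1.5}$), but this is harmless given the slack in $m^{0.5} - \tfrac{1}{2} - \tfrac{3}{2m} > 0$.
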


\begin{proof}
The proof follows, on a high level, similar lines to the proof of \cref{lem:one-col}.
We consider without loss of generality a bottom filtering gadget. 
Looking at the core, since \cref{lem:one-col} holds, it suffices to ensure that the slopes of the channels are not larger than a certain threshold. 
Equivalently, we will show that the angles defined by channels and a vertical line are always large enough. 

 Recall that, by Lemma~\ref{lem:connection-shadow}, the funnel of a channel of the filtering gadget on the core is contained in a strip of width $\frac{3}{m}$.
 To determine the slope threshold, we consider a strip of width $\frac{3}{m}$ 
 with the left side passing through the bottommost rightmost point $p$ of a core region that is not in the bottom row of the core grid 
 and with the right side passing through the topmost leftmost point $q$ of the core region in the core grid cell below; see Figure~\ref{fig:connection-properties} (yellow, left).

 We define a right triangle $T$ 
 with the top side aligned with the bottom of the core region containing $p$, 
 the bottom vertex at the same height as the top of the core region containing $q$,
 the right side through the middle of the strip, and the left side vertical.
 Let $\alpha$ denote the angle at the bottom vertex of $T$.
 Note that making the strip pass by any other points on two different core regions would only result in an angle smaller than $\alpha$. 
 
 We now prove that the strips that contain the funnels of the channels of the filtering gadget define acute angles larger than $\alpha$, which implies the lemma statement.
 The vertical side of $T$ has length $m^2 -1$.
 The horizontal side of $T$ has length upper bounded by $1+\frac{6}{m}$. 
 Thus, $\tan \alpha \le \frac{1+\frac{6}{m}}{m^2 -1} $. 
 
 We now consider the maximum acute angle $\alpha^*$ formed with a vertical line by a strip of width $\frac{3}{m}$ bounding a gadget channel funnel, as in Lemma~\ref{lem:connection-shadow}. 
 We can get a lower bound for $\tan \alpha^*$ by considering a strip though the {filtering vertex} and the top left corner of the core grid. 
 Thus, 
 for $m$ large enough, $\tan \alpha \le \frac{1 + \frac{6}{m}}{m^2 -1} \le \frac{m^6}{\ceil{m^{7.5}} + m^3} < \tan \alpha^*$. Therefore, since $0 \le \alpha, \alpha^* \le \frac{\pi}{2}$, it follows that $\alpha < \alpha^*$ as desired.
\end{proof}

The combination of the two previous lemmas makes the placement of the {filtering vertex} geometrically delicate. 
First, note that the core regions have to be sufficiently smaller than the core grid cells for the statements to hold.
Second, focusing without loss of generality on the bottom filtering vertex, $v_{i,j}^b$, 
the filtering vertex needs to be sufficiently non-centered to allow other vertical connections, but
the first lemma requires it to be sufficiently centered and far, 
and the second lemma adds conflicting constraints, preventing the funnels from being \emph{too} vertical.

Observe that by \cref{lem:one-col}, for any core grid point $(i,j)$, $\rho_{i,j}$ does not intersect any column of the core grid besides column $j$.
Thus, the rays $\rho_{i,j}$ appear around $v$ in lexicographic order,
first sorted by columns and then by rows.
Finally, note that the obstacles in the filtering gadget are well-defined and can be described with rational coordinates. 
To see that, note that the vertical distance of the intersection of ray $\rho_{i',j'}$ with $l$ and $r$ can be computed using right triangle similarity. %
The actual coordinates of the corners can be then computed by adding a rational number. 
We summarize it in the following observation: 

\begin{observation}
 \label{obs:filtering-gadget}
    The filtering gadget defines a set of non-overlapping obstacles describable with polynomial coordinates.   
\end{observation}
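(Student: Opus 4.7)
The plan is to verify the two claims of the observation separately. For non-overlap, I would trace through the construction: each obstacle is a polygonal region bounded by two segments from $v_{i,j}^b$ to a pair of points on the line $r$ located slightly above and slightly below one intersection point $q_{i',j'}$, closed by a short segment along $r$ between those two points. By the offset rule --- either $\pm \frac{1}{2m^3}$, or $\pm d/3$ when the distance $d$ between two consecutive intersections $q_{i',j'}$, $q_{i'',j''}$ on $r$ would otherwise be too small --- the vertex slightly above $q_{i'',j''}$ is always strictly below the vertex slightly below $q_{i',j'}$. Hence the obstacles occupy pairwise disjoint intervals on $r$, and two obstacles cannot share interior points because for any two rays $\rho_{i',j'}, \rho_{i'',j''}$ and the corresponding sub-obstacles, the two cones based at $v_{i,j}^b$ only meet in $v_{i,j}^b$ (and the closing boundaries on $r$ are disjoint by the offset argument). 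The two extreme obstacles closed via $w_1$ and $w_2$ are placed to the left of $v_{i,j}^b$, in regions not reached by any other ray from $v_{i,j}^b$, so they do not interfere with the others, and the complementary (triangular) channels between consecutive obstacles are well-defined.

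For the polynomial coordinate claim, I would argue that all base data have polynomial bit-size: the coordinates of $v_{i,j}^b$, the location of $r$, and the core grid points $(i',j')$ are all specified directly by rationals of bit-size polynomial in $m$ and $\log k$. The ray $\rho_{i',j'}$ then has a rational slope of polynomial bit-size, and its intersection with the vertical line $r$ gives $q_{i',j'}$ by a single linear computation (equivalently, by the right-triangle similarity pointed out in the paragraph preceding the observation), yielding rational coordinates of polynomial bit-size. Adding the offsets $\pm \frac{1}{2m^3}$, or $\pm d/3$ where $d$ is the difference of two such rationals, keeps bit-sizes polynomial. The auxiliary vertices $w_1$ and $w_2$ are defined by integer offsets from $v_{i,j}^b$, and the filtering vertices themselves have integer coordinates by the definition of the middle layer.

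The only subtle point will be handling the case distinction between the $\frac{1}{2m^3}$ rule and the $d/3$ fallback cleanly. The paper's footnote indicates that for $m$ large enough the fallback never fires, but crucially, proving the observation does not require establishing this: since the construction already case-distinguishes, non-overlap holds unconditionally, and the coordinates remain rational of polynomial bit-size in either branch. This is what makes the observation easy to settle without re-proving the technical claim from the footnote.
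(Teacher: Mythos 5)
Your argument matches the paper's (very brief) justification: the paper likewise obtains the points $q_{i',j'}$ via right-triangle similarity of the rays with the vertical line $r$ and then adds rational offsets to get the corner coordinates, leaving non-overlap implicit in the construction, so your proposal is essentially the same approach with the details (including the $d/3$ fallback and the extreme obstacles at $w_1, w_2$) spelled out more explicitly. One cosmetic slip: you describe the obstacles as the thin triangles around each ray $\rho_{i',j'}$, whereas in the construction those thin triangles are the \emph{channels} and the obstacles are the complementary regions between consecutive rays (closed along $r$ and via $w_1$, $w_2$) --- but since both families share the apex $v_{i,j}^b$ and are delimited by the same ordered points on $r$, your disjointness and polynomial-coordinate arguments apply verbatim.
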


\subparagraph{Outer layer.}
The outer layer is contained in the final (virtual) square, with side length $2\ceil{m^{7.5}} + 2m^6 +  m^3 $ (each side is at distance $m^6$ to the corresponding side of the middle layer (virtual) square. 
The \emph{VH gadget} consists of orthogonal obstacles that force vertical or horizontal connections through the cell to pass by a row or column of core regions, respectively. 
Moreover, they force these connections to be roughly vertical or horizontal.

The obstacles are in the region formed by the square containing the outer layer and the one containing the middle one. 
The {VH gadget} consists of four large $L$-shaped obstacles and $4(m-1)$ narrow rectangles.
They create $4m$ channels in groups of $m$ on the top, bottom, left and right, centrally aligned with the $m$ rows and $m$ columns of core regions; see Figure~\ref{fig:reduction-solution}.
The sizes of the obstacles are chosen so that the channels have length $m^6$ and width $1/m^3$. 
Thus, the $4(m-1)$ narrow rectangular obstacles have length $m^6$ and width $m^2-1/m^3$.
The large $L$-shaped obstacles have sides of lengths $\ceil{m^{7.5}} + \frac{m^2}{2} - \frac{1}{2m^3}$, $m^6$, and $\ceil{m^{7.5}} + m^6 + \frac{m^2}{2} - \frac{1}{2m^3}$.
The four filtering vertices lie on the boundaries of the four large $L$-shaped obstacles, subdividing them. 

\begin{observation}
 \label{obs:outer-layer}
    The outer layer defines a set of non-overlapping obstacles describable with polynomial coordinates.   
\end{observation}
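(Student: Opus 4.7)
The plan is to explicitly place each of the four L-shaped obstacles and each of the $4(m-1)$ narrow rectangles in a canonical coordinate system for the cell, then verify (a) that all coordinates appearing as vertex positions are polynomial in $m$ and (b) that the interiors of the resulting obstacles are pairwise disjoint. Both verifications reduce to routine arithmetic, so the proof is mostly bookkeeping.

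First, fix coordinates so that the (virtual) square bounding the outer layer has its lower-left corner at the origin, i.e.\ its sides lie at $x=0$, $x=L$, $y=0$, $y=L$, where $L = 2\ceil{m^{7.5}} + 2m^6 + m^3$. The middle-layer square is then the axis-aligned square centered in this one with side length $2\ceil{m^{7.5}} + m^3$, so its lower-left corner has coordinates $(m^6, m^6)$. I would place one L-shaped obstacle in each of the four corners of the annular region between the middle and outer squares. Using the prescribed side lengths $\ceil{m^{7.5}} + \tfrac{m^2}{2} - \tfrac{1}{2m^3}$, $m^6$, and $\ceil{m^{7.5}} + m^6 + \tfrac{m^2}{2} - \tfrac{1}{2m^3}$, each L-shape is an axis-aligned polygon whose six vertex coordinates can be written explicitly in terms of $L$, $m^6$, $\ceil{m^{7.5}}$ and $\tfrac{1}{2m^3}$; each is a rational with numerator and denominator polynomial in $m$, hence polynomial bit-length. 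The four L-shapes are obtained from one another by reflection across the vertical and horizontal midlines of the cell, so their coordinates inherit the same bound.

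Next, I would place the $4(m-1)$ narrow rectangles: $m-1$ along each of the top, bottom, left, and right, each of length $m^6$ and width $m^2 - \tfrac{1}{m^3}$, positioned so that the gaps between consecutive rectangles (and between the extreme rectangles and the L-shapes on their side) are exactly the $m$ channels of width $\tfrac{1}{m^3}$ aligned with the $m$ rows or columns of core regions. Concretely, on the bottom side, the rectangles span $y \in [0, m^6]$, with successive $x$-intervals separated by channels of width $\tfrac{1}{m^3}$ whose centers coincide with the $x$-coordinates of the centers of the columns of core regions; these $x$-coordinates are polynomial in $m$ by the specification of the core. The analogous choice works on the other three sides.

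Non-overlap then follows by inspection of these coordinates: the four L-shapes occupy pairwise disjoint corner regions of the outer annulus (their defining side lengths are chosen precisely so that their long arms stop short of the midpoints by $\tfrac{1}{2m^3}$, yielding the outermost channel of width $\tfrac{1}{m^3}$), while the narrow rectangles on any one side are separated from each other and from the flanking L-arms by the channels of width $\tfrac{1}{m^3}$, and rectangles on different sides lie in disjoint strips of the annulus. Polynomial boundedness of the coordinates is immediate since every quantity used is a fixed rational combination of $L$, $m^6$, $\ceil{m^{7.5}}$, $m^2$, and $\tfrac{1}{m^3}$. The main obstacle---really the only non-trivial point---is to verify that the channel alignment is consistent with the specification of the core grid and the middle layer, so that the $4m$ channels line up centrally with the $m$ rows and $m$ columns of core regions; this is a direct computation using the coordinates of the core regions already fixed in the description of the cell.
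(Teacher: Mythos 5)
Your proposal is correct and follows exactly the route the paper intends: the paper gives no separate proof of this observation, treating it as immediate from the explicit dimensions and placement of the four L-shaped obstacles and the $4(m-1)$ narrow rectangles given in the construction of the outer layer. Your explicit coordinate bookkeeping is a faithful (and more detailed) elaboration of that same justification.
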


When making the construction for a full instance, we in general have cells that share a side
and thus obstacles that have a shared boundary (which we could remove); see \cref{fig:reduction-solution}.

\begin{lemma}
\label{lem:VHgadget}
    The funnel of the $i$-th top/bottom (left/right) channel of a VH gadget in a cell
    is contained in a vertical (horizontal) strip of width $3/m$ centered in the channel.
    In particular the funnel only intersects, 
    among the core regions, those on the $i$-th column (resp. row) and, 
    among the VH gadget channels on the opposite side, 
    the $i$-th bottom/top (right/left) channel. 
    Moreover, the intersection of the funnels of the $i$-th top/bottom channel and of the the $j$-th left/right channel is contained in the  $(i,j)$ core region.
\end{lemma}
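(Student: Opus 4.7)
The plan is to apply Lemma~\ref{lem:channel} to the VH gadget channels, which by construction have length $a = m^6$ and width $w = 1/m^3$. This is entirely analogous to the strategy of Lemma~\ref{lem:connection-shadow}. First I would observe that any core region or opposite-side channel lies at distance at most $O(m^{7.5})$ from the channel (bounded by the cell diameter $2\ceil{m^{7.5}} + 2m^6 + m^3$). Plugging this into the bound $w(2h/a + 1)$ of Lemma~\ref{lem:channel} gives a shadow of order $O(m^{-1.5})$, which is below $3/m$ for $m$ sufficiently large. Since a channel's funnel is symmetric about its long axis, this establishes the strip-containment claim.

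Next I would use the fact that the VH channels are centrally aligned with the $m$ columns (resp.\ rows) of core regions, whose centers are spaced $m^2$ apart; adjacent columns of core regions are thus separated edge-to-edge by $m^2 - 1$, which vastly exceeds $3/m$ for any $m \geq 2$. Hence the $3/m$-wide strip around the $i$-th top/bottom channel only meets core regions in column $i$. The channels on the opposite side are themselves of width $1/m^3 < 3/m$ and aligned with the same column positions, yielding the opposite-channel part of the statement.

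For the final claim, I would note that the intersection of the vertical $3/m$-wide strip around the $i$-th top/bottom channel with the horizontal $3/m$-wide strip around the $j$-th left/right channel is a $3/m \times 3/m$ square centered at the center of core region $(i,j)$; since $3/m < 1$ and the core regions are unit squares centered at the core grid points, this square is contained in the $(i,j)$ core region. No serious obstacle is expected: all the delicate geometric reasoning was already carried out for the filtering gadgets in Lemmas~\ref{lem:connection-shadow}, \ref{lem:one-col}, and \ref{lem:one-core}. The main thing to keep track of is that the dimension parameters ($m^6$, $m^{7.5}$, $m^3$, $m^2$, and $1$) are chosen so that the simultaneous inequalities $3/m < m^2 - 1$ (separating adjacent columns/rows) and $3/m < 1$ (fitting inside a core region) hold for large enough $m$, which is by design.
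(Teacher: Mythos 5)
Your proposal is correct and follows essentially the same route as the paper's proof: apply Lemma~\ref{lem:channel} with the VH channel dimensions $a=m^6$, $w=1/m^3$ and a polynomial bound on the distance to obtain a shadow below $3/m$, then conclude via the $m^2$-spacing and central alignment of the channels with the core-region rows and columns. The only cosmetic difference is that you bound the distance by the cell diameter $O(m^{7.5})$ where the paper generously uses $m^8$; both yield the required $3/m$ strip.
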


\begin{proof}
    By construction, the channels in a VH gadget have length $m^6$ and width $\frac{1}{m^3}$. 
    Moreover, if we consider without loss of generality a horizontal left channel, the horizontal distance to a right channel is $2\ceil{m^{7.5}} + m^3$, which is smaller than $m^8$ 
    for $m$ large enough. 
    By Lemma~\ref{lem:channel}, the shadow of the channel at distance $m^8$ is, as in Lemma~\ref{lem:connection-shadow}, at most $\frac{1}{m^3} \cdot (2 \frac{m^8}{m^6} + 1) = \frac{2}{m} + \frac{1}{m^3} < \frac{3}{m}$ for $m>1$. 
    The core regions have side length $1$, which, if $m\ge 3$, is larger than the shadow width.
    Since the different channels on a side are separated at least a distance of $m^2 - \frac{1}{m^3}$, which is greater than $\frac{3}{m}$ for $m$ large enough, the first part of the statement follows.
    For the last part follows from the previous arguments and the alignment of the core regions.
\end{proof}

\subsection{\texorpdfstring{\boldmath Graph $H$}{Graph H}}
We now describe the graph $H$ whose edges we need to insert; see the black graph in~\cref{fig:reduction-solution}. 
For each $(i,j) \in [k] \times [k]$, $H$ contains:

\begin{itemize}
    \item A vertex $\xi_{i,j}$.
    \item Four edges connecting $\xi_{i,j}$ to the filtering vertices $v_{i,j}^l$, $v_{i,j}^t$, $v_{i,j}^r$, and  $v_{i,j}^b$.
    \item Edges connecting $\xi_{i,j}$ to $\xi_{i-1,j}$  and to $\xi_{i+1,j}$ (we refer to them as \emph{horizontal} edges), and edges connecting $\xi_{i,j}$ to $\xi_{i,j-1}$, and to $\xi_{i,j+1}$ (we refer to them as \emph{vertical} edges), 
    if such vertices are defined.
\end{itemize}

\subsection{Correctness}

\thmwonehard*
\begin{proof}[Proof of Theorem~\ref{thm:W1hard}]
We first show that given an instance of $k\times k$ Gird Tiling, 
for which we assume that the integer $m$ in it is large enough,
the constructed instance of \textsc{Straight-Line Planarity Extension} is valid and its size works for a parameterized reduction.
By construction and by \Cref{obs:filtering-gadget,obs:outer-layer}, for $m$ large enough the constructed predrawn graph defines a plane straight-line drawing of a graph of size $O(m^2 k^2)$. 
This bound follows from the fact that there are $k^2$ grid cells, each with $O(m^2)$ obstacles defined: $O(m^2)$ in the filtering gadgets and $O(m)$ in the outer layer. 

The predrawn vertices in the construction, by their definition and by Observations \ref{obs:filtering-gadget} and~\ref{obs:outer-layer}, have polynomially-bounded (in $m$ and $k$) rational coordinates, which can be easily transformed to polynomially-bounded integer coordinates. 
(We did not try to optimize the coordinate size, but instead focused on the clearness of the construction.)

The graph $H$ to be inserted is a connected graph of size $\Theta(k^2)$. 

We now show that the reduction is correct, that is, there is a solution to the \textsc{Grid Tiling} instance if and only if there is a solution to the \textsc{Straight-Line Planarity Extension}. 
See Figure~\ref{fig:reduction-solution} for an illustration.

Assume there is a solution to our instance of \textsc{Straight-Line Planarity Extension}.
Then, vertex $\xi_{i,j}$ must be connected using straight-line edges to the filtering vertices $v_{i,j}^l$, $v_{i,j}^t$, $v_{i,j}^r$, and $v_{i,j}^b$. 
Connections to a filtering vertex are only possible through channels of the filtering gadget or inside an obstacle. 
Since the four filtering vertices are not on the boundary of a common obstacle, 
$\xi_{i,j}$ must lie in cell $[i,j]$ and must be connected to the filtering vertices through channels of the respective filtering gadgets.
Thus, the horizontal and vertical edges of $H$ must pass through the channels of the VH gadget that connect adjacent cells. 
By \cref{lem:VHgadget}, $\xi_{i,j}$ must lie in a core region $(i',j')$. 
Moreover, the horizontal (vertical) connections must be through the $i'$-th right and left (bottom and top) channels. 
This implies that the vertices $\xi_{i,j}$ lie in core regions that  horizontally and vertically have the same index. 
More precisely, if we denote as $(x_{i,j},y_{i,j})$ the core region in which $\xi_{i,j}$ lies, then 
$x_{i,1} = x_{i,2} = \ldots = x_{i,k}$ for every $i \in [k]$ and $y_{1,j} = y_{2,j} = \ldots = y_{k,j}$ for every $j \in [k]$.

It remains to show that the core regions in which the vertices $\xi_{i,j}$ lie are valid. 
By \cref{lem:one-core}, a channel of a filtering gadget in a grid cell intersects exactly one core region of the cell, 
and, by contruction, it must be a valid one. 
Since $\xi_{i,j}$ must be connected to the filtering vertices through these channels, the core region in which it lies must be a valid one.
Thus, the core regions in which the vertices $\xi_{i,j}$ lie in the solution correspond to a valid solution to the \textsc{Grid Tiling} instance. 

For the other direction, assume that there is a solution to the \textsc{Grid Tiling} instance. 
We then can place the vertices $\xi_{i,j}$ in the core grid points according to the \textsc{Grid Tiling} solution. 
By our construction, the core regions in which these points lie are valid. 
Thus, there are channels that allow the connection between such a core grid point and the four filtering vertices in its cell. 
(Note that we cannot place the vertices $\xi_{i,j}$ arbitrarily in the core regions, as they might fall out of the funnels of gadgets that we need to use, but our construction guarantees that the core grid point is always in such funnels.)
Moreover, the horizontal and vertical alignment guaranteed by the validity of the solution allows us to draw all the horizontal and vertical edges of $H$ through channels of VH gadgets. 
We therefore can insert graph $H$ into our construction without intersections and using straight-line edges, and there is a solution to the \textsc{Straight-Line Planarity Extension} instance.
\end{proof}

\subsection{\texorpdfstring{\boldmath Partially predrawn (local) crossing number}{Partially predrawn (local) crossing number}}

Our reduction can be adapted to show W[1]-hardness of {\sc{Partially Predrawn Rectilinear Crossing Number}}, which we recall below.

\begin{tcolorbox}
{{\sc Partially Predrawn Rectilinear Crossing Number}\hfill \textbf{Parameter:} \(|E(G) - E(\Gamma)|\)}

\noindent\textbf{Input:} A straight-line partially predrawn graph \((G,\Gamma)\) where \(G\) is connected and an integer \(k\).\\
\noindent\textbf{Question:} Is there a geometric drawing extension of \(\Gamma\) to a straight-line drawing \(\mathcal{G}\) of \(G\) such that \(\crn(\mathcal{G}) - \crn(\Gamma) \leq k\)?
\end{tcolorbox}

The idea is to make the predrawn edges and obstacle boundaries \emph{thicker} by adding $k$ more parallel chains drawn closely together; see~\cref{fig:obstacles}. 
This effectively prevents obstacle boundaries from being crossed by edges of $G$ that need to be inserted. 
Moreover, we need to make $G$ connected, which we achieve by including in $\Gamma$ edges that connect the obstacles of the VH gadgets on the boundaries between cells. 
In this reduction, the crossing number would be $2k'^2-2k'$ where $k'$ is the grid size of the {\sc{Grid Tiling}} instance.
Observe that it is also easy to add predrawn edges to our construction (for example adding parallel subdivided paths to $\Gamma$ in obstacle channels) to force $k$ crossings, 
so in this case in the problem definition \(\crn(\mathcal{G}) - \crn(\Gamma) \leq k\) can also be replaced by \(\crn(\mathcal{G}) - \crn(\Gamma) = k\) and our hardness still would hold.
\begin{figure}[h]  
    \centering
    \includegraphics[page=5]{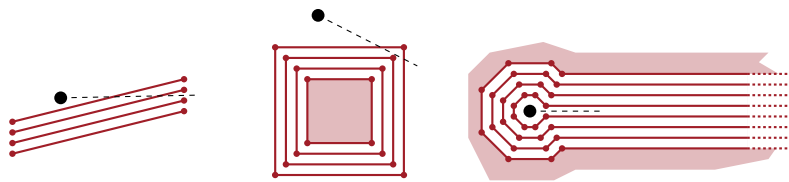}
    \caption{Illustration of predrawn edges and obstacle boundaries for $\ell = 3$.}
    \label{fig:obstacles}
\end{figure}

This same adaptation shows that {\sc{Partially Predrawn Straight-Line 1-Planarity}}, and in general {\sc{Partially Predrawn Straight-Line $k$-Planarity}} for any $k$, which we define below, is W[1]-hard. 
A graph is called \emph{$k$-planar} if it admits a \emph{$k$-planar drawing} which is one in which each edge is crossed at most $k$ times.
A closely related concept is the \emph{local crossing number} of a graph, which is the smallest nonnegative integer $k$ such that the graph is $k$-planar.
Using the adaptation above we can also show that {\sc{Partially Predrawn Rectilinear Local Crossing Number}} is W[1]-hard. 
For it we use that we can force some edges to have $k$ crossings as above.

\begin{tcolorbox}
{\sc{Partially Predrawn Straight-Line $k$-Planarity}\hfill \textbf{Parameter:} \(|E(G) - E(\Gamma)|\)}

\noindent\textbf{Input:} A straight-line partially predrawn graph \((G,\Gamma)\) where \(G\) is connected and an integer \(k\).\\
\noindent\textbf{Question:} Is there a geometric drawing extension of \(\Gamma\) to a straight-line drawing \(\mathcal{G}\) of \(G\) such that each edge inserted has at most $k$ crossings?
\end{tcolorbox}

\begin{tcolorbox}
{\sc{Partially Predrawn Rectilinear Local Crossing Number}\hfill \textbf{Parameter:} \(|E(G) - E(\Gamma)|\)}

\noindent\textbf{Input:} A straight-line partially predrawn graph \((G,\Gamma)\) where \(G\) is connected and an integer \(k\).\\
\noindent\textbf{Question:} Is there a geometric drawing extension of \(\Gamma\) to a straight-line drawing \(\mathcal{G}\) of \(G\) such that each edge inserted has at most $k$ crossings and such that there is no geometric drawing extension of \(\Gamma\) to a straight-line drawing \(\mathcal{G}\) of \(G\) such that each edge inserted has at most $k-1$ crossings?
\end{tcolorbox}

\section{Conclusion}
\label{sec:conclusion}
In this paper we presented a general framework and algorithm
to compute in \FPT-time the partially predrawn crossing number
under a wide set of drawing restrictions.
Crucially, our definition of topological crossing patterns allows us to
solve the crossing number problem for pseudolinear and fan-planar drawings,
resolving an open problem by M\"unch and Rutter~\cite{munch2024parameterized}.
We also showed that \textsc{Straight-Line Planarity Extension} is $W[1]$-hard when
parameterized by the number of edges that are not predrawn. 

A key problem we leave open is to resolve the parameterized complexity of deciding the rectilinear crossing number (without predrawings on which our and the previous hardness results strongly rely).
Beyond this, it would be interesting how much more general topological crossing patterns can be allowed to decide pattern-free crossing numbers in \FPT-time.
Notable directions that we do not yet know how to approach are weakening the side specification constraint, the crossing locality assumption, or the \(k\)-crossing contraction safeness we require of a set of forbidden topological crossing patterns.
Also, let us point out that the notion of drawing extension considered for our algorithmic result does not require orientations of disconnected components to carry over from the partial drawing.
This is in contrast to the setting in \cite{DBLP:conf/compgeom/HammH22}.
While it seems that there are no obvious obstacles in applying the flippability machinery from \cite{DBLP:conf/compgeom/HammH22}, this adds a layer of complexity that we decided not to include into the present work and carrying this out formally is left as future work.

\bibliography{cn-pseudolinear}

\end{document}